\theoremstyle{plain}
\newtheorem{proposition}[theorem]{Proposition}
\newtheorem{observation}[theorem]{Observation}
\newtheorem{commentthm}[theorem]{Comment}
\newenvironment{citedtheorem}[1]
{\begin{theorem}\hskip-0.2em\e{\cite{#1}}\,\,}
	{\end{theorem}}
\newenvironment{citedproposition}[1]
{\begin{proposition}\hskip-0.2em{\em\cite{#1}}\,\,}
	{\end{proposition}}
\newenvironment{repeatresult}[2]
{\vskip0.5em\par\noindent \textcolor{darkgray}{$\blacktriangleright$\,}\nobreakspace{\sffamily\bfseries #1 #2.}\em}
{\vskip1em}
\newenvironment{reptheorem}[1]{\begin{repeatresult}{Theorem}{#1}}{\end{repeatresult}}
\newenvironment{repcorollary}[1]{\begin{repeatresult}{Corollary}{#1}}{\end{repeatresult}}
\def\partitle#1{\vskip0.5em \par\noindent\textbf{#1.}\,}
\def\partitlelight#1{\vskip0.5em \par\noindent{#1}\,}
\newcommand{\restrict}{ \scalebox{1}[.85]{\raisebox{.9em}
		{\mbox{\rotatebox{270}{$\leftharpoonup$}}} }}
\newcommand{\true}{\mathbf{true}}
\newcommand{\false}{\mathbf{false}}
\newcommand{\e}[1]{\emph{#1}}
\newcommand{\eat}[1]{}
\newcommand{\str}[1]{\mathbf{#1}}
\newcommand{\strs}{\mathbf{s}}
\newcommand{\mspan}[2]{\ensuremath{[#1,#2\rangle}}
\newcommand{\allspans}{\mathsf{Spans}}
\newcommand{\dla}{\mathrel{\leftarrow}}
\newcommand{\angs}[1]{\mathord{\langle#1\rangle}}
\newcommand{\rc}[1]{\mathsf{#1}}
\newcommand{\R}{\mathcal{R}}
\newcommand{\I}{\mathcal{I}}
\newcommand{\E}{\mathcal{E}}
\newcommand{\arity}[1]{\textsf{arity}(#1)}
\newcommand{\tuple}{\textbf{t}}
\newcommand{\join}{\bowtie}
\newcommand{\adom}{\ensuremath{\mathit{adom}}}
\newcommand{\inst}{D}
\newcommand{\rgxc}{\rc{RGX}}
\newcommand{\repspnr}[1]{\llbracket{#1}\rrbracket} 
\newcommand{\rep}[1]{\llbracket{#1}\rrbracket}
\newcommand{\vars}{\mathsf{Vars}}
\newcommand{\varRGX}{x}
\newcommand{\Rout}{\rel{Out}}
\newcommand{\rgxlang}{\rc{RGXlog}}
\newcommand{\rl}{\rc{RGXlog}}
\newcommand{\semiposdatalog}{\text{Datalog}^{\perp}}
\newcommand{\first}{\rel{First}}
\newcommand{\last}{\rel{Last}}
\newcommand{\succrel}{\rel{Succ}}
\newcommand{\Str}{\rel{Str}}
\newcommand{\Spn}{\rel{Spn}}
\newcommand{\StrEq}{\rel{StrEq}}
\newcommand{\StrNotLast}{\rel{NotLast}}
\newcommand{\sel}{\zeta}
\newcommand{\df}{:=}
\def\ra{\rightarrow}
\def\la{\leftarrow}
\def\s#1{\texttt{#1}}
\def\ititle#1{\textbf{#1}\,\,}
\def\set#1{\mathord{\{#1\}}}
\def\alg#1{^{\set{#1}}} 
\def\tup#1{\mathbf{#1}}
\def\rel#1{\text{\sc{#1}}}
\newcommand{\query}{Q}
\newcommand{\sigord}{\mathcal{R}^\mathsf{ord}}
\def\succr{\mathord{\succ}}
\def\stri{_{\mathsf{str}}}
\def\spni{_{\mathsf{spn}}}
\def\mix{_{\mathsf{mix}}}
\def\sp{\mathsf{Spl\angs{RGX}}}
\def\ol#1{\overline{#1}}
\newcommand{\spencode}[1]{\mathsf{Enc}({#1})}
\newcommand{\untyped}[1]{\rc{untyped}(#1)}
\newcommand{\sigordplus}{\mathcal{E}^+}
\newcommand{\cursor}{\blacktriangleright}
\def\vs{\visiblespace}
\title{Recursive Programs for Document Spanners}
\titlerunning{Recursive Programs for Document Spanners} 
\author[1]{Liat Peterfreund}\affil[1]{Technion, Haifa 32000, Israel}
\author[2]{Balder ten Cate}\affil[2]{Google, Inc., Mountain View 94043, CA}
\author[3]{Ronald Fagin}\affil[3]{IBM Research -- Almaden, San Jose 95120, CA}
\author[1]{Benny Kimelfeld}
\authorrunning{L.~Peterfreund, B.~ten Cate, R.~Fagin, and B.~Kimelfeld} 
\subjclass{}
\keywords{}
\begin{document}
	
\maketitle
	
\begin{abstract}
A document spanner models a program for Information Extraction (IE) as
a function that takes as input a text document (string over a finite
alphabet) and produces a relation of spans (intervals in the document)
over a predefined schema. A well-studied language for expressing
spanners is that of the regular spanners: relational algebra over
regex formulas, which are regular expressions with capture
variables. Equivalently, the regular spanners are the ones expressible
in non-recursive Datalog over regex formulas (which extract relations
that constitute the extensional database).  This paper explores the
expressive power of recursive Datalog over regex formulas. We show
that such programs can express precisely the document spanners
computable in polynomial time. We compare this expressiveness to known
formalisms such as the closure of regex formulas under the relational
algebra and string equality. Finally, we extend our study to a
recently proposed framework that generalizes both the relational model
and the document spanners.
\end{abstract}

\section{Introduction}
The abundance and availability of valuable textual resources position
text analytics as a standard component in data-driven workflows. To
facilitate the incorporation of such resources, a core operation is
the extraction of structured data from text, a classic task known as
Information Extraction (IE). This task arises in a large variety of
domains, including healthcare
analysis~\cite{DBLP:journals/jamia/XuSDJWD10}, social media
analysis~\cite{DBLP:conf/acl/BensonHB11}, customer relationship
management~\cite{DBLP:conf/www/AjmeraANVCDD13}, and machine log
analysis~\cite{DBLP:conf/icdm/FuLWL09}. IE also plays a central role
in cross-domain computational challenges such as Information
Retrieval~\cite{DBLP:conf/www/ZhuRVL07} and knowledge-base
construction~\cite{DBLP:journals/pvldb/ShinWWSZR15,
	DBLP:conf/www/SuchanekKW07,
	DBLP:journals/ai/HoffartSBW13,DBLP:conf/naacl/YatesBBCES07}.

Rule-based IE is incorporated in commercial systems and academic
prototypes for text analytics, either as a standalone extraction
language or within machine-learning models.  IBM's
SystemT~\cite{DBLP:conf/acl/LiRC11} 
exposes 
an SQL-like declarative
language, \e{AQL} (Annotation Query Language), for programming
IE. Conceptually, AQL supports a collection of ``primitive''
extractors of relations from text (e.g., tokenizer, dictionary lookup,
part-of-speech tagger and regular-expression matcher), together with a
relational algebra for manipulating these relations. Similarly, in
Xlog~\cite{DBLP:conf/vldb/ShenDNR07}, user-defined functions are used
as primitive extractors, and non-recursive Datalog is, again,
allowed
for relation manipulation. In
DeepDive~\cite{DBLP:journals/pvldb/ShinWWSZR15,DBLP:journals/sigmod/SaRR0WWZ16},
rules are used to generating features that are translated into the
factors of a statistical model with machine-learned parameters.
Feature declaration combines, once again, primitive extractors of
relations alongside relational operators on these relations.

The framework of \e{document spanners} (or just \e{spanners} for
short)~\cite{DBLP:journals/jacm/FaginKRV15} captures the above IE
methodology: a spanner is a function that extracts from a document a
relation over text intervals, called \e{spans}, using either a
primitive extractor (e.g., a regular expression) or a relational query
on top of primitive extractors.  More formally, a \e{document} is a
string $\str s$ over a finite alphabet, and a \e{span} of $\str s$
represents a substring of $\str s$ by its start and end positions. A
spanner is a function $P$ that maps every string $\str s$ into a
relation $P(\str s)$, over a fixed schema $S_P$, over the spans of
$\str s$.  The most studied spanner language is that of the
\e{regular} spanners: primitive extraction is via
\e{regex formulas}, which are regular expressions with capture
variables, and relational manipulation is via positive relational
algebra: projection, natural join, and union (while difference is
expressible and not explicitly
needed)~\cite{DBLP:journals/jacm/FaginKRV15}. Equivalently, the
regular spanner are the ones expressible in non-recursive Datalog,
where regex formulas are playing the role of the
Extensional
Data Base (EDB),
that is, the input database~\cite{DBLP:journals/tods/FaginKRV16}.

By adding string-equality selection on span variables, Fagin et
al.~\cite{DBLP:journals/jacm/FaginKRV15} establish the extended class
of \e{core} spanners, viewed as the core language for AQL.  A
syntactically different language for spanners is SpLog, which is based
on the \e{existential theory of concatenation}, and was shown by
Freydenberger~\cite{DBLP:conf/icdt/Freydenberger17} to have precisely
the expressiveness of core spanners.  Such spanners can express more
than regular spanners. A simple example is the spanner that extracts
from the input $\str s$ all spans $x$ and $y$ such that the string
$\str s_x$ spanned by $x$ is equal to the string $\str s_y$ spanned by
$y$. The class of core spanners does not behave as well as that of the
regular spanners; for instance, core spanners are not closed under
difference, while regular spanners are.  Fagin et
al.~\cite{DBLP:journals/jacm/FaginKRV15} prove this by showing that no
core spanner extracts all spans $x$ and $y$ such that $\str s_x$ is
\e{not} a substring of $\str s_y$.  The proof is based on the \e{core
	simplification lemma}: every core spanner can be represented as a
regular spanner followed by a sequence of string equalities and
projections. The same technique has been used for showing that no core
spanner extracts all pairs $x$ and $y$ of spans having the same
\e{length}~\cite{DBLP:journals/jacm/FaginKRV15}.

In this paper we explore the power of \e{recursion} in expressing
spanners. The motivation came from the SystemT developers, who have
interest in recursion for various reasons, such as programming basic
natural-language parsers by means of context-free
grammars~\cite{DBLP:conf/acl/LevyM04}.  Specifically, we consider the
language $\rgxlang$ of spanners that are defined by means of Datalog
where, again, regex formulas play the role of EDB relations, but this
time recursion is allowed. More precisely, given a document $\str s$,
the regex formulas extract EDB relations from $\str s$, and a
designated relation $\Rout$ captures the output of the
program. Observe that such a program operates exclusively over the
domain of spans of the input string. In particular, the output is a
relation over spans of $\str s$, and hence, a $\rgxlang$ is yet
another representation language for spanners. As an example, the
following program emits all pairs $x$ and $y$ of spans of equal
lengths. (See Section~\ref{sec:rgxlog} for the formal definition of
the syntax and semantics.)
\begin{gather*}
\cursor	\rel{EqL}(x,y) \dla \angs{x\set{\epsilon}} , \angs{y\set{\epsilon}}  \quad\,\,\,
\cursor \rel{EqL}(x,y)  \dla \angs{x\set{x'\set{.^*}.}},   \angs{y\set{y'\set{.^*}.}}, \rel{EqL}(x',y')
\end{gather*}
The first rule states that two empty spans have same length. The
second rule states that two spans $x$ and $y$ have equal lengths if
that are obtained by adding a single symbol (represented by dot) to
spans $x'$ and $y'$, respectively, of equal lengths.

We explore the expressiveness of $\rgxlang$.  Without recursion,
$\rgxlang$ captures precisely the regular
spanners~\cite{DBLP:journals/tods/FaginKRV16}. With recursion, several
observations are quite straightforward. First, we can write a program
that determines whether $x$ and $y$ span the same string. Hence, we
have string equality without explicitly including the string-equality
predicate. It follows that every core spanner can be expressed in
$\rgxlang$. Moreover, $\rgxlang$ can express more than core spanners,
an example being 
expressing that two spans have the same length (which the above program 
shows can be expressed in  $\rgxlang$, but which, as said earlier, is not expressible by a 
core spanner~\cite{DBLP:journals/jacm/FaginKRV15}).
What about upper
bounds? A clear upper bound is \e{polynomial time}: every $\rgxlang$
program can be evaluated in polynomial time (in the length of the
input string), and hence, $\rgxlang$ can express only spanners
computable in polynomial time.

We begin our investigation by diving deeper into the relationship
between $\rgxlang$ and core spanners. The inexpressiveness results to
date are based on the aforementioned core simplification
lemma~\cite{DBLP:journals/jacm/FaginKRV15}. The proof of this lemma
heavily relies on the absence of the difference operator in the
algebra. In fact, Freydenberger and
Holldack~\cite{freydenberger_et_al:LIPIcs:2016:5786} showed that it is
unlikely that 
in the presence of difference, there is a result similar to the core simplification lemma. 
So, we extend the algebra of core spanners with the
difference operator, and call a spanner of this extended language a
\e{generalized core spanner}. We then asked whether 
\e{(a)}
every
generalized core spanner can be expressed in $\rgxlang$ 
(whose
syntax is positive and excludes difference/negation),
and \e{(b)}
$\rgxlang$
can express only generalized core spanners.

The answer to the first question is positive.
We establish a negative answer to the second question by deploying the
theory of \e{Presburger
	arithmetic}~\cite{Presburger29}.
Specifically, we consider
Boolean spanners on a unary alphabet.  Each such spanner can be viewed
as a predicate over natural numbers: the lengths of the strings that
are accepted (evaluated to 
$\true$)
by the spanner. We prove that every
predicate expressible by a Boolean generalized core spanner is also
expressible in Presburger arithmetic (first-order logic over the
natural numbers with addition). Yet, we show a very simple $\rgxlang$
program that expresses a predicate that is \e{not} expressible in
Presburger arithmetic---being a power of
two~\cite{Leary:1999:FIM:519254}. 

Quite remarkably, it turns out that $\rgxlang$ can express \e{every}
spanner computable in polynomial time. Formally, recall that a spanner
is a function $P$ that maps an input document $\strs$ into a relation
$P(\str s)$, over a fixed schema $S_P$, over the spans of $\strs$. We
prove that the following are equivalent for a spanner $P$:
\e{(a)} $P$ is expressible in $\rgxlang$,
\e{and (b)} $P$ is computable in polynomial time.
As a special case, Boolean $\rgxlang$ captures exactly the 
polynomial-time languages.  

Related formalisms that capture polynomial time include the \e{Range
	Concatenation Grammars}
(RCG)~\cite{boullier2004range}. In RCG, the grammar
defines derivation rules for reducing the input string into the empty
string; if reduction succeeds, then the string is accepted.  Unlike
context-free and context-sensitive grammars, RCGs have predicate names
in addition to variables and terminals, allowing to maintain
connections between different parts of the input string.  Another
formalism that captures polynomial time is the multi-head alternating
automata~\cite{DBLP:journals/tcs/King88}, which are finite state
machines with several cursors that can perform alternating
transitions. Though related, these results do not seem to imply our
results on document spanners.

We prove equivalence to polynomial time via a result by
Papadimitriou~\cite{DBLP:journals/eatcs/Papadimitriou85}, stating that
semipositive Datalog (i.e., Datalog where only EDB relations can be
negated) can express every database property computable in polynomial
time, under certain assumptions: \e{(a)} the property is invariant
under isomorphism, \e{(b)} a successor relation that defines a linear
order over the domain is accessible as an EDB, and \e{(c)} the first
and last elements in the database are accessible as constants (or
single-element EDBs).  We show that in the case of $\rgxlang$, we get
all of these for free, due to the fact that our EDBs are regex
formulas. Specifically, in string logic (over a finite alphabet)
isomorphism coincides with identity, negation of EDBs (regex formulas)
are expressible as EDBs (regex formulas), and we can express a linear
order by describing a successor relation along with its first and last
elements.

Interestingly, our construction shows that, to express polynomial
time, it suffices for to use regex formulas with only two variables.
In other words, binary regex formulas already capture the entire
expressive power. Can we get away with only unary regex formulas?
Using past results on monadic
Datalog~\cite{DBLP:journals/jacm/HalevyMSS01} and non-recursive
$\rgxlang$~\cite{DBLP:journals/jacm/FaginKRV15} we conclude a negative
answer---Boolean $\rgxlang$ with unary regex formulas can express
\e{precisely} the class of Boolean regular spanners. In fact, we can characterize explicitly the class of spanners expressible by $\rgxlang$ with unary regex formulas.

Lastly, we analyze recursive Datalog programs in a framework that
generalizes both the relational and the spanner model. The framework,
introduced by Nahshon et al.~\cite{DBLP:conf/webdb/NahshonPV16} and
referred to as Spannerlog$\angs{\rgxc}$, has a straightforward motivation---to
expose a unified query language for combining structured and textual
data.  In this framework, the input database consists of ordinary
relations wherein each cell (value) is a string (document) over a
fixed, finite alphabet. In the associated Datalog program, IDB
relations
(that is, \e{intensional}, or \e{inferred} database relations) have
two types of attributes: strings and spans. The body of a Datalog rule
may have three types of atoms: EDB, IDB, and regex formulas over 
string attributes
We prove that
Spannerlog$\angs{\rgxc}$ with stratified negation can express
\e{precisely} the queries that are computable in polynomial time.

The remainder of the paper is organized as follows. We provide basic
definitions and terminology in Section~\ref{sec:preliminaries}, and
introduce $\rgxlang$ in Section~\ref{sec:rgxlog}. In
Section~\ref{sec:core} we illustrate $\rgxlang$ in the context of a
comparison with (generalized) core spanners. Our main result
(equivalence to polynomial time) is proved in
Section~\ref{sec:PTIME}. We describe the generalization of our main
result to Spannerlog$\angs{\rgxc}$ in Section~\ref{sec:exten}, and
conclude in Section~\ref{sec:conclusions}.

\newcommand{\letters}{^\mathsf{letters}} 
\newcommand{\rgxname}[1]{\gamma_\mathsf{#1} } 
\newcommand{\smsp}{\hspace{3.5pt}}

\newcommand{\visiblespace}{\text{\textvisiblespace}}
\def\n#1{\textsf{\tiny{#1}}}
\begin{figure*}[t]
	\centering\small
	{\setlength{\tabcolsep}{0.35mm}
		\begin{tabular}{
				cccccccccc
				cccccccccc
				cccccccccc
				cccccccccc
				cccccccccc
				cccccccccc
				cccccccccc
				cccccccccc
				cccccccc
			}
			$\s{C}$ & $\s{a}$ & $\s{i}$  & $\s{n}$ & $\s{\visiblespace}$ & 
			$\s{s}$ & $\s{o}$ &
			$\s{n}$ & $\s{\visiblespace}$ & $\s{o}$ & $\s{f}$ &  $\s{\visiblespace}$ &
			$\s{A}$ & $\s{d}$ & $\s{a}$  & $\s{m}$ & $\s{,}$ &
			$\s{\visiblespace}$ & 
			$\s{A}$ 
			&  $\s{b}$ &
			$\s{e}$ & $\s{l}$ & $\s{\visiblespace}$  &
			$\s{s}$ & $\s{o}$ & $\s{n}$ & $\s{\visiblespace}$ &
			$\s{o}$ &  $\s{f}$ &
			$\s{\visiblespace}$ &
			$\s{A}$ & $\s{d}$  & $\s{a}$ & $\s{m}$ & $\s{,}$ & 
			$\s{\visiblespace}$ &
			$\s{E}$ & $\s{n}$ & $\s{o}$ & $\s{c}$ & $\s{h}$ &
			$\s{\visiblespace}$ & 
			$\s{s}$ & $\s{o}$ & $\s{n}$ &  $\s{\visiblespace}$ &
			$\s{o}$ & $\s{f}$ & $\s{\visiblespace}$  &
			$\s{C}$ & $\s{a}$ & $\s{i}$ & $\s{n}$ &
			$\s{,}$ & $\s{\visiblespace}$ &
			\\\hline
			\n{1} & \n{2} & \n{3} & \n{4} & 
			\n{5} & \n{6} & \n{7} & \n{8} & 
			\n{9} & \n{10}  & \n{11} & \n{12} & 
			\n{13} & \n{14}  & \n{15} & \n{16} & 
			\n{17} & \n{18}  & \n{19} & \n{20} & 
			\n{21} & \n{22}  & \n{23}  & \n{24} & 
			\n{25} & \n{26}  & \n{27} & \n{28} & 
			\n{29} & \n{30}  & \n{31} & \n{32} & 
			\n{33} & \n{34}  & \n{35} & \n{36} &
			\n{37} & \n{38} & 
			\n{39} & \n{40}  & \n{41} & \n{42} & 
			\n{43} & \n{44}  & \n{45} & \n{46} 
			& \n{47} & \n{48} & 
			\n{49} & \n{50}  & \n{51} & \n{52} & 
			\n{53} & \n{54}  & \n{55}
		\end{tabular}}
		\caption{\label{fig:RunningExm}The input string $\strs$ in our running example}
	\end{figure*}
	
	\section{Preliminaries}\label{sec:preliminaries}
	
	We first introduce the basic terminology and notation that we use
	throughout the paper.  
	\subsection{Document Spanners}
	We begin with the basic terminology from the
	framework of \e{document
		spanners}~\cite{DBLP:journals/jacm/FaginKRV15}.
	
	\partitle{Strings and spans} 
	We fix a finite alphabet $\Sigma$ of symbols.  A \e{string} $\str s$
	is a finite sequence $\sigma_1 \cdots \sigma_n$ over $\Sigma$ (i.e.,
	each
	$\sigma_i\in \Sigma$).  We denote by $\Sigma^*$ the set of all strings
	over $\Sigma$.  
	A \e{language} over $\Sigma$ is a subset of $\Sigma^*$.  A \e{span}
	identifies a substring of $\str s$ by specifying its bounding indices.
	Formally, a span of $\str s$ has the form $\mspan i j$ where $ 1 \le i
	\le j \le n+1$.  If $\mspan i j$ is a span of $\str s$, then $\str s_{
		\mspan i j}$ denotes the substring $\sigma_i \cdots \sigma_{j-1}$.
	Note that $\str s _{\mspan i i}$ is the empty string, and that $\str
	s_{ \mspan 1 {n+1}}$ is $\str s$.  Note also that the spans $\mspan i
	i $ and $\mspan j j$, where $i\ne j$, are different, even though $\str s
	_{ \mspan i i} = \str s _{\mspan j j} =\epsilon$ where $\epsilon$
	stands for the empty string.
	We denote by $\allspans$ the set of all spans of all
	strings, that is, all expressions $\mspan i j$ where $ 1 \le i \le j$.
	By $\allspans(\str s)$ we denote the set spans of 
	string $\str s$ (and in this case we have $j \le n+1$).
	\begin{example}
		In all of the examples throughout the paper, we consider the example
		alphabet $\Sigma$ that consists of the lowercase and capital letters
		from the English alphabet (i.e., $\s{a},\ldots ,\s{z}$ and $\s{A},
		\ldots,\s{Z}$), the comma symbol ``,'', and the symbol
		``$\visiblespace$'' that stands for whitespace.
		Figure~\ref{fig:RunningExm} depicts an example of a prefix of an
		input string $\strs$. (For convenience, it also depicts the position
		of each of the characters in $\strs$.)  Observe that the
		spans $\mspan{13}{17}$ and $\mspan{31}{35}$ are different, yet they 
		span the same substring, that is, $\strs_{\mspan{13}{17}} =
		\strs_{\mspan{31}{35}} = \s{Adam}$.\qed
	\end{example}

	\partitle{Document spanners}
	We assume an infinite collection $\vars$ of \e{variables} such that
	$\vars$ and $\Sigma$ are disjoint. Let $\str s$ be a string and
	$V\subset \vars$ a finite set of variables. A $(V,\str
	s)$-\e{record}\footnote{Fagin et
		al.~\cite{DBLP:journals/jacm/FaginKRV15} refer to $(V,\str
		s)$-records are $(V,\str s)$-\e{tuples}; we use ``record'' to avoid
		confusion with the concept of ``tuple'' that we later use in
		ordinary relations.}  is a function $r:V\ra\allspans(\str s)$ that
	maps the variables of $V$ to spans of $\str s$.  A $(V,\str
	s)$-\e{relation} is a set of $(V,\str s)$-records.  A \e{document
		spanner} (or just \e{spanner} for short) is a function $P$ that maps
	strings $\str s$ to $(V,\str s)$-relations $P(\str s)$, for a
	predefined finite set $V$ of variables that we denote by $\vars(P)$.
	As a special case, a \e{Boolean spanner} is a spanner $P$ such that
	$\vars(P)=\emptyset$; in this case, $P(\strs)$ can be either the
	singleton that consists of the empty function, denoted
	$P(\strs)=\true$, or the empty set, denoted $P(\strs)=\false$. 
	A Boolean spanner $P$ \e{recognizes} the language 
	$\{\strs\in\Sigma^*\mid
	P(\strs)=\true\}\,.$
	
	By a \e{spanner representation language}, or simply \e{spanner
		language} for short, we refer to
	a collection $L$ of finite expressions $p$ that represent a
	spanner. For instance, we next define the spanner language $\rgxc$ of
	regex formulas. For an expression $p$ in a spanner language, we denote
	by $\rep{p}$ the spanner that is defined by $p$, and by $\vars(p)$ the
	variable set $\vars(\rep{p})$. Hence, for a string $\strs$ we have
	that $\rep{p}(\strs)$ is a $(\vars(p),\strs)$-relation.  We denote by
	$\rep{L}$ the class of all spanners $\rep{p}$ definable by expressions
	$p$ in $L$.
	
	\partitle{Regex formulas}
	A \e{regex formula} is a representation of a spanner by means of a
	regular expression with \e{capture variables}.  It is defined by $ \gamma =
	\hspace{3.5pt} \emptyset\mid \epsilon\mid \sigma \mid \gamma \vee
	\gamma\mid \gamma \cdot \gamma \mid \gamma^* \mid \varRGX \{ \gamma \}
	$.  Here, $\epsilon$ stands for the empty string, $\sigma \in \Sigma$,
	and the 
	alternative beyond regular expressions 
	is $\varRGX\{ \gamma \}$ where $\varRGX$ is
	a variable in $\vars$.  We denote the set of variables that occur in
	$\gamma$ by $\vars(\gamma)$. Intuitively, every \e{match} of a regex
	formula in an input string $\str s$ yields an assignment of spans to
	the variables of $\gamma$.  A crucial assumption we make is that the
	regex formula is \e{functional}~\cite{DBLP:journals/jacm/FaginKRV15},
	which intuitively means that every match assigns precisely one span to
	each variable in $\vars(\gamma)$. For example, the regex formula
	$\s{a}^*\cdot x\{\s{a}\cdot \s{b}^*\}\cdot\s{a}$ is functional, but
	$\s{a}^*\cdot (x\{\s{a}\cdot \s{b}\})^*\cdot\s{a}$ is not; similarly,
	$(x\{\s{a}\})\lor (\s{b}\cdot x\{\s{a}\})$ is functional, but
	$(x\{\s{a}\})\lor (\s{b}\cdot \s{a})$ is not.  A regex formula
	$\gamma$ defines a spanner, where the matches produce the $(V,\str
	s)$-records for $V=\vars(\gamma)$. 
	We refer the reader to Fagin et
	al.~\cite{DBLP:journals/jacm/FaginKRV15} for the precise definition of
	functionality, including its polynomial-time verification, and for the
	precise definition of the spanner $\repspnr{\gamma}$ represented by
	$\gamma$.  As previously said, we denote by $\rgxc$ the spanner
	language of (i.e., the set of all) regex formulas.
	
	Throughout the paper, we use the following abbreviations when we
	define regex formulas. We use the ``$.$'' instead of
	``$\lor_{\sigma\in\Sigma}\sigma$'' (e.g., we use ``$.^*$'' instead of
	``$(\lor_{\sigma\in\Sigma}\sigma)^*$''). We write $\angs{\gamma}$
	(using angular instead of ordinary brackets) to denote that $\gamma$
	can occur anywhere in the document; that is, $\angs{\gamma}\df [.^*
	\,\gamma\, .^*]$.

	\begin{example}\label{ex:rgx}
		Following are examples of regex formulas that we use later on.
		\begin{itemize}
			\item
			$\rgxname {token}(x) \df \angs{ \smsp \visiblespace \smsp x
				\set{ (\s{a}-\s{z} \s{A}-\s{Z})^* } \smsp
				(\visiblespace\smsp \vee\smsp , )\smsp }$
			\item
			$\rgxname {cap}(x) \df \angs{ \smsp \visiblespace \smsp x
				\set{ (\s{A}-\s{Z})(\s{a}-\s{z} \s{A}-\s{Z})^* } \smsp
				(\visiblespace\smsp \vee\smsp , )\smsp }$
			\item
			$\rgxname {prnt}(x,y) \df
			\angs { y \set { .^*} \visiblespace \s{son} \visiblespace \s{of} \visiblespace x \set{.^*} }$
		\end{itemize}
		The regex formula $\rgxname{token}(x)$ extracts the
		spans of tokens (defined simplistically for
		presentation sake), $\rgxname{cap}(x)$ extracts
		capitalized tokens, and $\rgxname {prnt}(x,y)$
		extracts spans separated by $\vs\s{son}\vs \s{of}\vs$.
		Applying $\repspnr{\rgxname{cap}}$ to $\strs$ of
		Figure~\ref{fig:RunningExm} results in a set of
		$(\{x\},\strs)$-records that includes the record $r$
		that maps $x$ to $\mspan{19}{23}$.\qed
	\end{example}
	
	\subsection{Spanner Algebra}
	
	The algebraic operators \emph{union}, \emph{projection}, \emph{natural
		join}, and \e{difference} are defined in the usual way, for all
	spanners $P_1$ and $P_2$ and strings $\str s$, as follows.  For a
	$(V,\str s)$-record $r$ and $Y\subseteq V$, we denote by 
	$r \restrict Y$
	the
	$(Y,\str s)$-record obtained by restricting $r$ to the variables in
	$Y$. We say that $P_1$ and $P_2$ are \e{union compatible} if
	$\vars(P_1)=\vars(P_2)$.
	
	\begin{itemize} 
		\item \ititle{Union:} Assuming $P_1$ and $P_2$ are union compatible,
		the union $P=P_1 \cup P_2$ is defined by $\vars(P) \df \vars(P_1)$
		and $P(\strs) \df P_1(\strs) \cup P_2(\strs)$.
		\item \ititle{Projection:} For $Y \subseteq \vars(P_1)$, the
		projection $P=\pi_Y P_1$ is defined by $\vars(P) \df Y$ and
		$P(\strs)=\set{r \restrict Y \mid r\in P(\strs)}$.
		\item \ititle{Natural join:} Let $V_i \df \vars(P_i)$ for $i \in
		\{1,2\}$. The \emph{(natural) join} $P=(P_1 \join P_2)$ is defined
		by $\vars(P) \df \vars(P_1) \cup \vars(P_2)$ and $P(\strs)$ consists
		of all $(V_1 \cup V_2, \strs)$-records $r$ such that there exist
		$r_1\in P_1(\strs)$ and $r_2\in P_2(\strs)$ with
		$r \restrict V_1 =r_1$ and $r \restrict V_2=r_2$.
		\item \ititle{Difference:} Assuming $P_1$ and $P_2$ are union
		compatible, the difference $P=P_1 \setminus P_2$ is defined by
		$\vars(P_1 \setminus P_2) \df \vars(P_1)$ and $P(\strs) \df
		P_1(\strs) \setminus P_2(\strs)$.
		\item \ititle{String-equality selection:} For variables $x$ and $y$ in
		$\vars(P)$, the string-equality selection $P \df \sel^=_{x,y} P_1$
		is defined by $\vars(P)\df\vars(P_1)$, and $P(\strs)$ consists of
		all records $r\in P_1(\strs)$ such that $\strs_{r(x)} =
		\strs_{r(y)}$.
	\end{itemize}
	
	If $L$ is a spanner language and $O$ is a set of operators in a spanner
	algebra, then $L^O$ denotes the spanner language obtained by closing
	$L$ 
	under
	the operations 
	of~$O$.
	
	\subsection{Regular and (Generalized) Core Spanners}
	Following Fagin et al.~\cite{DBLP:journals/jacm/FaginKRV15}, we define
	a \e{regular} spanner to be one definable in
	$\rgxc\alg{\cup,\pi,\join}$, that is, a spanner $P$ such that
	$P=\rep{p}$ for some 
	$p$ in
	$\rgxc\alg{\cup,\pi,\join}$. Similarly, we define a \e{core} spanner
	to be a spanner definable in $\rgxc\alg{\cup,\pi,\join,\sel^=}$. 
	\begin{example}\label{ex:reg-core-spanners}
		Consider the regex formulas of Example~\ref{ex:rgx}.  We can take
		their join and obtain a regular spanner: $ \rgxname{prnt}(x,y) \join
		\rgxname{cap}(x) \join \rgxname{cap}(y) $.  This spanner extracts a set
		of $(\{x,y\},\strs)$-records $r$ such that $r$ maps $x$ and $y$ to
		strings that begin with a capital letter and are separated by
		$\vs\s{son}\vs \s{of}\vs$.  Assume we wish to extract a binary
		relation that holds the tuples $(x,y)$ such that the span $x$ spans
		the name of the grandparent of $y$. (For simplicity we assume that
		name is a unique identifier of a person.)  For that, we can define
		the following core spanner on top of the regex formulas from
		Example~\ref{ex:rgx}: $ \pi_{x,w}\sel^=_{y,z}
		\big(\rgxname{prnt}(x,y) \join \rgxname{prnt}(z,w)\big)$. We denote
		this spanner by $\rgxname{grpr}(x,w)$. 
		\qed
	\end{example}
	
	Note
	that we did not include \e{difference} in the definition of regular
	and core spanners; this does not matter for the class of \e{regular}
	spanners, since it is closed to difference (i.e., a spanner is
	definable in $\rgxc\alg{\cup,\pi,\join,\setminus}$ if and only if it
	is definable by $\rgxc\alg{\cup,\pi,\join}$), but it matters for the
	class of \e{core} spanners, which is \e{not} closed under
	difference~\cite{DBLP:journals/jacm/FaginKRV15}. 
	We
	define a \e{generalized core spanner} to be a spanner definable in
	$\rgxc\alg{\cup,\pi,\join,\sel^=,\setminus}$.
	We study its expressive power in Section~\ref{sec:core}.
	\begin{example}
		Recall the definition of $\rgxname{grpr}(x,w)$ from
		Example~\ref{ex:reg-core-spanners}.  The following generalized core
		spanner finds all spans of capitalized words $w$ such that the text
		has no mentioning of any grandparent of $w$.
		$
		\rgxname{cap}(w)\setminus(\pi_w\rgxname{grpr}(x,w))\qed
		$
	\end{example}
	\subsection{Span Databases}
	We also use the terminology and notation of ordinary relational
	databases, with the exception that database values are all spans.
	(In Section~\ref{sec:exten} we allow more general values in the database.)
	More formally, a \e{relation symbol} $R$ has an associated
	arity that we denote by $\arity{R}$, and a \e{span relation} over $R$
	is a finite set of \e{tuples} $\tuple \in \allspans^{\arity{R}}$ over
	$R$.  We denote the $i$th element of a tuple $\tuple$ by $\tuple_i$.
	A \e{(relational) signature} $\mathcal{R}$ is a finite set
	$\set{R_1,\ldots, R_n}$ of relation symbols.  A \e{span database}
	$\inst$ over a signature $\mathcal{R}:=\{R_1,\ldots,R_n\}$ consists of
	span relations $R_i^{\inst}$ over $R_i$. We call $R_i^\inst$ the
	\e{instantiation} of $R_i$ by $\inst$.

\section{RGXlog: Datalog over Regex Formulas}
\label{sec:rgxlog}
In this section, we define the spanner language $\rgxlang$, pronounced
``regex-log,'' that generalizes regex formulas to (possibly recursive)
Datalog programs.

Let $\R$ be a signature. By an \e{atom} over $\R$ we refer to an
expression of the form $R(x_1,\dots,x_k)$ where $R\in\R$ is a $k$-ary
relation symbol and each $x_i$ is a variable in $\vars$. Note that a
variable can occur more than once in an atom (i.e.,
we may have $x_i = x_j$ for some $i,j$ with $i \neq j$).  Moreover, we
do not allow constants in atoms.
A \e{$\rgxlang$ program} is a triple $\angs{\I,\Phi,\Rout(\tup x)}$
where:
\begin{itemize}
	\item $\I$ is a signature referred to as the \e{IDB}
	signature;
	\item $\Phi$ is a finite set of \e{rules} of the form $\varphi \la
	\psi_1,\ldots , \psi_m$, where $\varphi$ is an atom over $\I$, and
	each $\psi_i$ is either an atom over $\I$ or a regex formula;
	\item $\Rout\in\I$ is a designated \e{output} relation symbol;
	\item $\tup x$ is a sequence of $k$ distinct variables in $\vars$,
	where $k$ is the arity of $\Rout$.
\end{itemize}
If $\rho$ is the rule $\varphi \la \psi_1,\ldots , \psi_m$, then
we
call $\varphi$ the \e{head} of $\rho$ and $\psi_1,\ldots , \psi_m$ the
\e{body} of $\rho$. Each variable in $\varphi$ is called a \e{head
	variable} of $\rho$. We make the standard assumption that each head
variable of a rule occurs at least once in the body of the rule.

We now define the semantics of evaluating a $\rgxlang$ program over a
string.  Let $Q=\angs{\I,\Phi,\Rout(\tup x)}$ be a $\rgxlang$ program,
and let $\tup s$ be a string. We evaluate $Q$ on $\tup s$ using the
usual fixpoint semantics of Datalog, while viewing the regex formulas
as extensional-database (EDB) relations. More formally, we view a
regex formula $\gamma$ as a logical assertion over assignments to
$\vars(\gamma)$, stating that the assignment forms a tuple in
$\rep{\gamma}(\tup s)$. 
The span database with signature $\I$ that results from applying $Q$
to $\tup s$ is denoted by $Q(\tup s)$, and it is the minimal span
database that satisfies all rules, when viewing each left arrow
($\la$) as a logical implication with all variables being universally
quantified.

Next, we define the semantics of $\rl$ as a spanner language.  Let
$Q=\angs{\I,\Phi,\Rout(\tup x)}$ be a $\rl$ program. As a spanner, the
program $Q$ constructs $D=Q(\tup s)$ and emits the relation $\Rout^D$
as assignments to $\tup x$. More precisely, suppose that $\tup
x=x_1,\dots,x_k$.  The spanner $P=\rep{Q}$ is defined as follows.
\begin{itemize}
	\item $\vars(P)\df\set{x_1,\dots,x_k}$.
	\item Given $\strs$ and $D=Q(\strs)$, the set $P(\tup s)$ consists of
	all records $r_{\tup a}$ obtained from tuples $\tup
	a=(a_1,\dots,a_k)\in\Rout^D$ by setting $r_{\tup a}(x_i)=a_i$.
\end{itemize}

Finally, \e{recursive} and \e{non-recursive} $\rl$ programs are
defined similarly to ordinary Datalog (e.g., using the
acyclicity of the dependency graph over the IDB predicates).

\begin{example} \label{ex:ancestor} In the following and later
	examples of programs, we use the cursor sign $\cursor$ to indicate
	where a rule begins. Importantly, for brevity we use the following
	convention: $\Rout(\tup x)$ is always the left hand side of the last
	rule.
	\begin{center}
		\begin{tabular}{ cc } 
			$\cursor \rel{Ancstr}(x,z) \dla  \rgxname{prnt}(x,z)$& $\cursor \rel{Ancstr}(x,y) \dla  \rel{Ancstr}(x,z), \rgxname{prnt}(z,y)$ \\ 
		\end{tabular}
	\end{center}
	By our convention, $\Rout(\tup x)$ is $\rel{Ancstr}(x,y)$.
	This program returns the transitive closure of the relation
	obtained by applying the regex formula $\rgxname{prnt}(x,z)$
	from Example~\ref{ex:rgx}.  \qed
\end{example}

\section{Comparison to Core Spanners}\label{sec:core}
We begin the exploration of the expressive power of $\rl$ by a
comparison to the class of core spanners and the class of generalized
core spanners. We first recall the following observation by Fagin et
al.~\cite{DBLP:journals/tods/FaginKRV16} for later reference.

\begin{citedproposition}{DBLP:journals/tods/FaginKRV16}\label{cprop:NREqReg}
	The class of spanners definable by non-recursive $\rl$ is precisely
	the class of regular spanners, namely $\rep{\rgxc\alg{\cup,\pi,\join}}$.
\end{citedproposition}

In addition to $\rl$ being able to express union, projection and
natural join, the following program shows that $\rl$ can express the
string-equality selection, namely $\sel^=$.
\begin{center}
	\begin{tabular}{ cc } 
		$\cursor \rel{StrEq}(x,y) \dla
		\angs{x\set{\epsilon}},\angs{y\set{\epsilon}}$& 
		$\cursor \rel{StrEq}(x,y) \dla \angs{x\set{ \sigma \tilde{x}\set{.^*}}},\angs{y \sigma \set{\tilde{y}\set{.^*}}},
		\rel{StrEq}(\tilde{x},\tilde{y})$ \\ 
	\end{tabular}
\end{center}
Here, the second rules is repeated for every alphabet letter
$\sigma$.  It thus follows that every core spanner is
definable in $\rl$. The other direction is false.
As an example, no core spanner extracts all spans $x$ and $y$
such that $\str s_x$ is \e{not} a substring of $\str
s_y$~\cite{DBLP:journals/jacm/FaginKRV15}, or all pairs $x$
and $y$ of spans having the same
\e{length}~\cite{DBLP:journals/tods/FaginKRV16}. In the
following example, we construct a $\rl$ program that extracts
both of these relationships.

\begin{example}\label{ex:more-than-core}
	In the following program, rules that involve $\sigma$ and $\tau$ are
	repeated for all letters $\sigma$ and $\tau$ such that
	$\sigma\neq\tau$, and the ones that involve only $\sigma$ are
	repeated for every $\sigma$.
	{\small
		\begin{center}
			\begin{tabular}{ ll } 
				$\cursor \rel{Len}_=(x,y)  \dla \angs{x\set{\epsilon}},\angs{y\set{\epsilon}}$&
				$\cursor \rel{Len}_>(x,y) \dla \angs{x\set{.^+{\tilde y}\set{.^*}}},  \rel{Len}_=({\tilde y},y)$
				\\
				$\cursor\rel{Len}_=(x,y) \dla \angs{x\set{.\tilde{x}\set{.^*}}},\angs{y\set{.\tilde{y}\set{.^*}}},
				\rel{Len}_=(\tilde{x},\tilde{y})$&\\
				\hline
				$ \cursor \rel{NoPrfx}_{(\sigma,\tau)}(x,y)  \dla \angs{x\set{\sigma .^*}},\angs{y\set{\epsilon} \lor y\set{\tau .^*}}$& $\cursor \rel{NoPrfx}(x,y)\dla \rel{NoPrfx}_{(\sigma,\tau)}(x,y)$\\
				\multicolumn{2}{l}{$\cursor \rel{NoPrfx}^\sigma(x,y)  \dla \angs{x\set{\sigma \tilde x\set{.^*}}},\angs{y\set{\sigma \tilde y\set{.^*}}},
					\rel{NoPrfx}(\tilde{x},\tilde{y})$}\\
				$\cursor \rel{NoPrfx}(x,y)\dla \rel{NoPrfx}^\sigma(x,y)$&\\
				\hline
				$\cursor \rel{NotCntd}(x,y)\dla \rel{Len}_>(x,y)$&\\
				\multicolumn{2}{l}{
					$\cursor \rel{NotCntd}(x,y)\dla \rel{NoPrfx}(x,y), \angs{y\set{.\tilde{y}\set{.^*}}}, 
					\rel{NotCntd}(x,\tilde{y})$}
			\end{tabular}
		\end{center}
	} 
	
	The program defines the following relations.
	\begin{itemize}
		\item $\rel{Len}_=(x,y)$ contains all spans $x$ and $y$ of the same
		length. 
		\item $\rel{Len}_>(x,y)$ contains all spans $x$ and $y$ such that $x$
		is longer than $y$.
		\item $\rel{NoPrfx}(x,y)$ contains all spans $x$ and $y$ such that
		$\strs_x$ is \e{not} a prefix of $\strs_y$. The rules state that
		$\strs_x$ is not a prefix of $\strs_y$ if $\strs_x$ is nonempty but
		$\strs_y$ is empty, or the two begin with different letters, or the
		two begin with the same letter but the rest of $\strs_x$ is not a
		prefix of the rest of $\strs_y$.
		\item $\rel{NotCntd}(x,y)$ contains all spans $x$ and $y$ such that
		$\strs_x$ is \e{not} contained in $\strs_y$. The rules state that
		this is the case if $x$ is longer than $y$, or both of the following
		hold: $\strs_x$ is not a prefix of $\strs_y$, and $\strs_x$ is not
		contained in the suffix of $\strs_y$ following the first symbol.
	\end{itemize}
	In particular, the program defines both equal-length 
	and non-containment relationships.\qed
\end{example}

The impossibility proofs of Fagin et
al.~\cite{DBLP:journals/jacm/FaginKRV15,DBLP:journals/tods/FaginKRV16}
are based on the \e{core simplification
	lemma}~\cite{DBLP:journals/jacm/FaginKRV15}, which states that every
core spanner can be represented as a regular spanner, followed by a
sequence of string-equality selections ($\sel^=$) and projections
($\pi$). In turn, the proof of this lemma relies on the absence of the
difference operator in the algebra. See Freydenberger and
Holldack~\cite{freydenberger_et_al:LIPIcs:2016:5786} for an indication 
of why a result similar to the core simplification lemma is not likely
to hold in the presence of difference.  Do things change when we
consider
\e{generalized core spanners},
where difference is allowed?
To be precise, we are interested in two questions:
\begin{enumerate}
	\item Can  $\rl$ express every generalized core spanner?
	\item Is every spanner definable in $\rl$ a generalized core spanner?
\end{enumerate}
We answer the first question in the next section. The second question
we answer in the remainder of this section.

We begin by constructing the following $\rl$ program, which defines a
Boolean is spanner that returns 
$\true$ if
and only if the length of the
input $\str s$ is a power of two.
{\small
	\begin{gather*}
	\cursor \rel{Pow2}(x) \dla  \angs{x\set{.}} \quad
	\cursor\rel{Pow2}(x) \dla  \angs{x\set{x_1\set{.^*}x_2\set{.^*}}}, \rel{Pow2}(x_1),\rel{Pow2}(x_2),\rel{Len}_=(x_1,x_2)
	\\
	\cursor \rel{Out}()  \dla  [x\set{.^*}], \rel{Pow2}(x)
	\end{gather*}}
\par\noindent We prove the following.
\begin{theorem}\label{thm:pow2}
	There is no Boolean generalized core spanner that determines whether
	the length of the input string is a power of two.
\end{theorem}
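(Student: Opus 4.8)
The plan is to reduce the question to a statement about definability in Presburger arithmetic. The key idea, hinted at in the introduction, is to restrict attention to a unary alphabet and view a Boolean spanner as a predicate on $\mathbb{N}$: given a string of length $n$, the spanner either accepts or rejects it, so it defines the set $S \subseteq \mathbb{N}$ of accepted lengths. The set of lengths that are powers of two is $\{2^k \mid k \ge 0\}$, which is a classical example of a set that is \emph{not} definable in Presburger arithmetic (first-order logic over $\langle \mathbb{N}, +\rangle$)~\cite{Leary:1999:FIM:519254}. So it suffices to prove: \emph{for every Boolean generalized core spanner $P$, the set $\{|\strs| \mid \strs \in \Sigma^*,\ P(\strs)=\true\}$, restricted to a unary alphabet, is definable in Presburger arithmetic.} Combined with the $\rl$ program $\rel{Pow2}$ displayed just above the theorem (which witnesses that this predicate \emph{is} expressible in $\rl$), this also yields the separation claimed in the surrounding text, namely that $\rl$ expresses spanners that are not generalized core spanners.

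To prove the Presburger-definability claim, I would proceed compositionally on the structure of the expression $p \in \rgxc\alg{\cup,\pi,\join,\sel^=,\setminus}$ defining $P$, over a fixed unary alphabet $\Sigma=\{\sigma\}$. Over a unary alphabet a span $\mspan i j$ of a string of length $n$ is completely determined by the pair $(i,j)$ with $1 \le i \le j \le n+1$, and string equality $\strs_{r(x)}=\strs_{r(y)}$ reduces to equality of lengths, i.e.\ $j_x - i_x = j_y - i_y$ — a Presburger condition. The base case is a regex formula $\gamma$: I claim that $\{(n,r) \mid r \in \repspnr{\gamma}(\sigma^n)\}$, encoding each record $r$ by the tuple of endpoints of the spans it assigns, is Presburger-definable. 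This follows because over a unary alphabet a regex formula behaves like a (variable-capturing) regular expression over a one-letter alphabet, so membership of a position-tuple in the match set is a finite union of "arithmetic progression" constraints on $n$ and the endpoints — such constraints (divisibility plus linear inequalities) are exactly what Presburger arithmetic can express. Then I push through the algebra: union, projection (existential quantification over the projected-out endpoint variables), and natural join (conjunction with equality of shared endpoint variables) all preserve Presburger-definability by closure of first-order formulas under $\vee$, $\exists$, $\wedge$; string-equality selection adds the length-equality conjunct described above; and — crucially — difference is handled because Presburger-definable relations are closed under \emph{complement} (negation), so $P_1 \setminus P_2$ corresponds to $\varphi_{P_1} \wedge \neg\varphi_{P_2}$ together with the side condition that all endpoints lie in the valid range $[1,n+1]$. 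For a Boolean spanner the output schema is empty, so $P(\sigma^n)=\true$ becomes an existential sentence with parameter $n$, giving a Presburger formula $\varphi(n)$ defining the accepted-length set.

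The main obstacle, and the step requiring the most care, is the base case: verifying precisely that the match relation of a \emph{functional} regex formula over a unary alphabet, encoded via span endpoints, is Presburger-definable uniformly in $n$. One has to be careful that the "$\angs{\cdot}$" sugar, Kleene stars, and variable bindings interact correctly — a star $\gamma^*$ matching a block of $\sigma$'s contributes a constraint like "the matched length is a multiple of the period of $\gamma$," and nested variable captures must be threaded through as additional endpoint variables with their own linear constraints. The clean way to organize this is to convert $\gamma$ into a variable-capturing NFA (as in Fagin et al.~\cite{DBLP:journals/jacm/FaginKRV15}) and observe that over a one-letter input the set of accepting runs, together with the positions at which capture transitions fire, is described by the reachability relation of the automaton, which for a unary alphabet is eventually periodic and hence Presburger-definable. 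Once the base case is nailed down, the inductive step is routine closure-of-first-order-logic bookkeeping, and the theorem follows by citing the non-Presburger-definability of the powers of two.
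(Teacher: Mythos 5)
Your proposal is correct and follows the paper's overall strategy: restrict to unary inputs, show that every Boolean generalized core spanner then defines a Presburger-definable set of lengths, and conclude from the known non-definability of the powers of two (this is exactly the ``only if'' direction of the paper's Theorem~\ref{thm:pa-gen-core}, which is all that Theorem~\ref{thm:pow2} needs). Where you genuinely diverge is in how the PA-definability lemma is proved. The paper first normalizes the expression: using the path-union form of regex formulas it replaces every capture by an empty (``positional'') span, trading $\sel^=$ for a four-variable selection $\sel^=_-$, so that every component is of one of the shapes $[x\set{}.^*]$, $\angs{x\set{}\alpha y\set{}}$, $[.^*x\set{}]$ (Lemma~\ref{lemma:gen-core-positional}); it then proves by induction that any such expression is a disjunction of (total order on the positions) $\land$ (PA formula over the lengths of the induced segments), which requires careful bookkeeping with interpolated orders for join and a compatibility case analysis for difference (Lemma~\ref{lemma:balder}). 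You skip the positional normal form entirely and instead encode each span by its two endpoints, doing a direct induction on the algebra in which union, projection, join, $\sel^=$ (length equality on unary input) and difference are just $\lor$, $\exists$, $\land$, a linear equation, and $\neg$; the cost is that your base case is heavier, since you must argue via a variable-capturing automaton that the endpoint relation of an arbitrary functional regex formula on $\sigma^n$ is semilinear (eventual periodicity of unary reachability between capture events, with finitely many event orderings), whereas the paper gets its base case almost for free — a single Parikh/Ginsburg-style citation for variable-free $\alpha$ — precisely because the normalization pushed all the difficulty into the algebraic induction. Both routes bottom out in the same two facts (semilinearity of unary regular languages and closure of PA under Boolean operations and quantification), and your version arguably yields a shorter proof of the direction needed here, while the paper's normal form is what makes its full two-way characterization in Theorem~\ref{thm:pa-gen-core} clean to state and prove. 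One small point to tighten: rather than speaking of ``the set of accepted lengths restricted to a unary alphabet'' for a spanner over general $\Sigma$, phrase the reduction as the paper does — if some generalized core spanner decided the power-of-two property over $\Sigma$, joining it with $[\a^*]$ yields one whose behavior on inputs $\a^n$ your argument applies to verbatim.
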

\def\a{\mathsf{a}}
\def\pwt{{L}_{\a}}
\def\nat{\mathbb{N}}

Hence, we get a negative answer to the second question.  In the
remainder of this section, we discuss the proof of
Theorem~\ref{thm:pow2}. We need to prove that no generalized core
spanner recognizes all strings whose length is a power of two. 

Let $\a$ be a letter, and $\pwt$ the language of all strings $\strs$
that consist of $2^n$ occurrences of $\a$ for $n\geq 0$, that is:
$\pwt\stackrel{\mathsf{def}}{=}\set{\strs\in\a^*\mid |\strs|\mbox{ is
		a power of $2$}}$.  We will restrict our discussion to generalized
core spanners that accept only strings in $\a^*$, and show that no
such spanner recognizes $\pwt$. This is enough, since every
generalized core spanner $S$ can be restricted into $\a^*$ by joining
$S$ with the regex formula $[\a^*]$.  For simplicity, we will further
assume that our alphabet consists of only the symbol $\a$. Then, a
language $L$ is identified by a set of natural numbers---the set of
all numbers $m$ such that $\a^m\in L$. We denote this set by
$\nat(L)$.

Presburger Arithmetic (PA) is the first-order theory of the natural
numbers with the addition ($+$) binary function and the constants $0$
and $1$~\cite{Presburger29}.
For example,
the
relationship $x>y$ is expressible by the PA
formula $\exists z[x=y+z+1]$ and by the PA formula $x\neq y\land
\exists z[x=y+z]$. As another example, the set of all even numbers $x$
is definable by the PA formula $\exists y[x=y+y]$. When we say that a
set $A$ of natural numbers is \e{definable in PA} we mean that there
is a unary PA formula $\varphi(x)$ such that
$A=\set{x\in\mathbb{N}\mid \varphi(x)}$.

It is known that being a power of two is \e{not} expressible in
PA~\cite{Leary:1999:FIM:519254}. 
Theorem~\ref{thm:pow2} then follows from the next theorem, which we
prove in the appendix.
\def\thmpagencore{
	A language $L\subseteq\set{\a}^*$ is recognizable by a Boolean
	generalized core spanner if and only if $\nat(L)$ is definable in
	PA.
}
\begin{theorem}\label{thm:pa-gen-core}
	\thmpagencore
\end{theorem}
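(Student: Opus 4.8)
The plan is to prove both directions of the equivalence, treating the harder direction—that every Boolean generalized core spanner over $\set{\a}^*$ defines a PA-definable set—via the core simplification lemma and an analysis of string-equality constraints on a unary alphabet. For the easy direction, I would show that if $\nat(L)$ is definable in PA, then $L$ is recognizable by a Boolean generalized core spanner. Here I would use the classical fact that PA-definable sets are exactly the \emph{semilinear} sets, i.e.\ finite unions of arithmetic progressions $\set{a + bk \mid k \in \mathbb{N}}$. Each progression is easily recognized: $\a^{a}(\a^{b})^{*}$ is a regular expression, hence (wrapping it in a Boolean regex formula) a regular spanner, hence a core spanner; finite unions of core spanners are core spanners via $\cup$. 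So PA-definability gives recognizability even by a core (not just generalized core) spanner.

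For the hard direction, I would start from the core simplification lemma: every core spanner, and with a bit more care every generalized core spanner, can be written as a regular spanner followed by a sequence of string-equality selections $\sel^=$, projections $\pi$, and—in the generalized case—differences $\setminus$. The key observation on a unary alphabet is that string equality degenerates into \emph{span-length} equality: $\strs_{r(x)} = \strs_{r(y)}$ over $\a^*$ holds iff the spans $r(x)$ and $r(y)$ have the same length. So over $\set{\a}^*$, a generalized core spanner is a Boolean combination (via $\cup$, $\setminus$, $\pi$) of regular spanners augmented with length-equality constraints among variables. I would then argue that each such building block, viewed as a set of tuples of span-lengths (equivalently, tuples of natural numbers bounded by $|\strs|$ together with $|\strs|$ itself), is definable in PA: a regular spanner's output relation, parametrized by the input length $n$, is a semilinear relation in the variables' endpoint positions and $n$ (this follows from the regular/rational structure of regex-formula matches—Parikh-style reasoning, or directly since on a unary alphabet a regex formula is essentially a regular expression over positions); length-equality selections add linear equalities; and PA-definable relations are closed under projection (existential quantification), union, and complement/difference. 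Taking the Boolean projection down to $\vars(P) = \emptyset$ leaves a PA formula $\varphi(n)$ in the single free variable $n = |\strs|$, so $\nat(L) = \set{n \mid \varphi(n)}$ is PA-definable.

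The main obstacle I expect is making precise and rigorous the claim that the output relation of a \emph{regular} spanner, as a relation on the span endpoints together with the input length $n$, is PA-definable (semilinear) uniformly in $n$. This requires a careful argument: one route is to go through the variable-stitching normal form / vset-automaton representation of regular spanners from Fagin et al., observe that on a unary alphabet an accepting run is determined by how the input positions are partitioned among the "between/inside variable" states, and show the set of such partitions is semilinear via Parikh's theorem applied to the (now essentially commutative) run structure. A second, cleaner route is to avoid regular spanners entirely on the unary alphabet: show directly that a regex formula over $\set{\a}$ with variables $x_1,\dots,x_k$ produces, on input $\a^n$, the set of endpoint tuples satisfying a fixed PA formula (by structural induction on the regex formula—concatenation becomes addition of lengths, Kleene star becomes an unbounded sum which is still semilinear, capture variables just record a position), and then lift through $\cup,\pi,\setminus,\sel^=$ using closure of PA. I would present the second route, since it keeps everything inside Presburger arithmetic from the start and makes the closure argument transparent; the star case is the one place needing genuine care, and there I would invoke that $\set{\sum_{i} \ell_i \mid m \in \mathbb{N},\ \ell_i \text{ each satisfying a fixed PA formula}}$ is semilinear, which is a standard consequence of Parikh's theorem / closure of semilinear sets under the "iterated sum" operation.
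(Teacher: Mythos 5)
Your proposal is correct in substance, but the two directions deserve different verdicts. For the ``only if'' direction, the route you say you would actually present (structural induction directly over the unary alphabet) is essentially the paper's own proof in different bookkeeping: the paper first normalizes a generalized core spanner into a \emph{positional} expression whose regex components have one of three simple forms (Lemma~\ref{lemma:gen-core-positional}, via the path-union representation of regex formulas), and then shows by induction on the algebraic structure (Lemma~\ref{lemma:balder}) that the spanner is a disjunction of pairs consisting of a total order on the position variables and a PA formula over the induced segment lengths; your version runs the same induction over endpoint tuples together with $n=|\strs|$, handles the regex atoms by Parikh-style semilinearity (concatenation is addition, star is iterated sum), treats $\sel^=$ as length equality, and lifts through $\cup,\pi,\join,\setminus$ by closure of PA under disjunction, existential quantification, conjunction and negation---the same key ideas, just without the positional normal form as an intermediate device. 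Two caveats there: your first route should be dropped rather than merely deprioritized, since the claim that ``with a bit more care'' the core simplification lemma extends to generalized core spanners is unsupported---the paper explicitly notes, citing Freydenberger and Holldack, that such a normal form is unlikely in the presence of difference; and in the star case of your induction you should note that functionality forbids capture variables under a Kleene star, which is precisely what reduces that case to semilinearity of a variable-free unary regular language. For the ``if'' direction your argument is genuinely different from, and simpler than, the paper's: the paper goes through Presburger quantifier elimination and constructs, by induction on quantifier-free formulas with bounded quantification, a $k$-variable expression in $\rgxc\alg{\cup,\pi,\join,\sel^=,\setminus}$ realizing any PA relation on span lengths (Theorem~\ref{thm:balder-reverse}), whereas you use that PA-definable subsets of $\mathbb{N}$ are exactly the semilinear (ultimately periodic) sets, so $L$ is a finite union of languages $\a^{a}(\a^{b})^*$ and is recognized by a variable-free regex formula with union. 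Your route buys a stronger conclusion for this direction (a regular, hence core, Boolean spanner suffices, with no use of $\setminus$ or $\sel^=$), while the paper's route buys the more general multi-variable translation of PA formulas into generalized core spanners, which is of independent interest beyond the Boolean statement.
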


\section{Equivalence to Polynomial Time}
\label{sec:PTIME}
An easy consequence of existing
literature~\cite{DBLP:journals/corr/FreydenbergerKP17,DBLP:books/aw/AbiteboulHV95}
is that every $\rgxlang$ program can be evaluated in polynomial time
(as usual, under data complexity). Indeed, the evaluation of a
$\rgxlang$ program $P$ can be done in two steps: \e{(1)} materialize
the regex atoms on the input string $\strs$ and get relations over
spans, \e{and (2)} evaluate $P$ as an ordinary Datalog program over an
ordinary relational database, treating the regex formulas as the names
of the corresponding materialized relations. The first step can be
completed in polynomial
time~\cite{DBLP:journals/corr/FreydenbergerKP17}, and so does the
second~\cite{DBLP:books/aw/AbiteboulHV95}. Quite remarkably,
$\rgxlang$ programs capture \e{precisely} the spanners computable in
polynomial time.

\begin{theorem}\label{thm:ptime-generating}
	A spanner is definable in $\rgxlang$ if and only if it is computable
	in polynomial time.
\end{theorem}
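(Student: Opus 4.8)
The plan is to prove the two directions separately. The easy direction—every $\rgxlang$-definable spanner is computable in polynomial time—is already sketched in the text: materialize the finitely many regex atoms on $\strs$ (polynomial time by \cite{DBLP:journals/corr/FreydenbergerKP17}), then run the resulting program as ordinary Datalog over a span database whose active domain is $\allspans(\strs)$, which has quadratically many elements, so bottom-up fixpoint evaluation terminates in polynomial time \cite{DBLP:books/aw/AbiteboulHV95}. The substance is the converse: given a polynomial-time spanner $P$ with $\vars(P)=\{x_1,\dots,x_k\}$, build a $\rgxlang$ program computing it.

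First I would reduce to a Datalog-expressibility statement over span databases. The key enabling tool is Papadimitriou's theorem \cite{DBLP:journals/eatcs/Papadimitriou85}: semipositive Datalog captures PTIME on ordered structures, i.e.\ any polynomial-time database property that is invariant under isomorphism can be expressed in semipositive Datalog provided the input comes equipped with (b) a successor relation $\succrel$ encoding a linear order on the domain and (c) constants (or singleton EDBs) $\first$ and $\last$ for the endpoints. The bulk of the proof is showing that in the span setting all three hypotheses are obtained ``for free'' from regex formulas. For (a): over a fixed finite alphabet, two strings that induce isomorphic span databases must be equal (the span $\mspan{i}{i+1}$ can be tagged with the letter $\sigma_i$ via the regex $[\,.^*\,x\set{\sigma}\,.^*\,]$), so every spanner is trivially isomorphism-invariant and no genericity is lost. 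For (b): the successor relation on spans is itself a (binary) regex formula—e.g.\ one can linearly order $\allspans(\strs)$ by $(\text{start},\text{end})$ lexicographically and write a regex formula $\succrel(x,y)$ that matches exactly the consecutive pairs; likewise $\first$ and $\last$ are definable by unary regex formulas (or, to stay inside the stated format with only relational heads, by singleton IDB relations populated from such regex formulas). For the semipositive feature—negated EDBs—the crucial observation is that the complement of a regex formula, as a relation over $\allspans(\strs)$, is again definable by a regex formula: since the alphabet is finite one can explicitly describe the span tuples \emph{not} produced by $\gamma$ using a finite union of regex formulas (this is essentially the remark, already in the excerpt, that difference is not needed for regular spanners). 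Hence every negated-EDB atom in the Papadimitriou program is replaced by a genuine regex atom, and the resulting program is an honest positive $\rgxlang$ program.

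Putting it together: encode the input string $\strs$ as the span database $\spencode{\strs}$ over the signature consisting of the letter-predicates, the order predicates $\succrel,\first,\last$, and whatever other regex-EDBs are convenient; observe that $P$, viewed as a map from $\spencode{\strs}$ to a $k$-ary relation over $\allspans(\strs)$, is a polynomial-time, isomorphism-invariant query on ordered structures; apply Papadimitriou to get a semipositive Datalog program $\Pi$ computing it; rewrite every EDB atom and every negated EDB atom of $\Pi$ as a regex formula (or a short IDB defined from regex formulas); take $\Rout$ to be the IDB of $\Pi$ computing $P$. Then $\rep{Q}=P$, giving the nontrivial inclusion. Finally note that all the regex EDBs used—letter tags, $\succrel$, $\first$, $\last$—have arity at most two, which yields the promised strengthening that binary regex formulas already suffice; I would state this as a remark after the proof.

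The main obstacle I anticipate is the careful bookkeeping for hypothesis (b): one must pin down a concrete linear order on $\allspans(\strs)$ and verify that both its successor relation \emph{and} the two endpoint elements are genuinely regex-definable (for the endpoints, matching ``the lexicographically first/last span'' with a single functional regex formula requires a little care, and one may instead prefer to derive them by a one-line recursion-free rule from $\succrel$). A secondary subtlety is checking that complements of regex formulas, over the domain $\allspans(\strs)$ of \emph{genuine} spans of $\strs$ (not arbitrary index pairs), are still regex-definable—here one intersects with $[\,.^*\,]$-style guards to stay within spans of $\strs$—so that the semipositive-to-positive translation goes through cleanly.
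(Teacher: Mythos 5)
Your proposal is correct and follows essentially the same route as the paper: the easy direction by materializing regex atoms and evaluating ordinary Datalog, and the hard direction via Papadimitriou's theorem, encoding the string as an ordered span database whose letter predicates, successor, first and last are given by (at most binary) regex formulas, and absorbing the negated EDB atoms using closure of regular spanners under complement. The only point the paper treats more formally than you do is the isomorphism-invariance hypothesis: it defines the query $Q_S$ on arbitrary databases over the ordered signature (empty on non-encodings) and proves, via a decoding lemma, that encodings can be recognized and the string and isomorphism recovered in polynomial time, which is exactly the bookkeeping your sketch leaves implicit.
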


In the remainder of this section, we discuss the proof of
Theorem~\ref{thm:ptime-generating}. The proof of the ``only if''
direction is described right before the theorem.  To prove the ``if''
direction, we need some definitions and notation.

\partitle{Definitions}
We apply ordinary Datalog programs to databases over arbitrary
domains, in contrast to $\rgxlang$ programs that we apply to
strings, and that involve
databases over the domain of spans.
Formally, we define a Datalog program as a quadruple
$(\E,\I,\Phi,\Rout)$ where $\E$ and $\I$ are disjoint signatures
referred to as the \e{EDB} 
(input) and \e{IDB} signatures, respectively, $\Rout$ is a designated
output relation symbol in $\I$, and $\Phi$ is a finite set of Datalog
rules.\footnote{Note that unlike $\rgxlang$, here there is no need to
	specify variables for $\Rout$.  This is because a spanner evaluates
	to assignments of spans to variables, which we need to relate to
	$\Rout$, whereas a Datalog program evaluates to an entire relation,
	which is $\Rout$ itself.} As usual, a \e{Datalog rule} has the form
$\varphi \la \psi_1,\ldots , \psi_m$, where $\varphi$ is an atomic
formula over $\I$ and $\psi_1,\dots , \psi_m$ are atomic formulas over
$\E$ and $\I$.  We again require each variable in the head of
$\varphi$ to occur in the body $\psi_1,\dots , \psi_m$.  In this paper
we restrict Datalog programs to ones \e{without constants}; that is,
an atomic formula $\psi_i$ is of the form $R(x_1,\dots,x_k)$ where $R$
is a $k$-ary relation symbol and the $x_i$ are (not necessarily
distinct) variables.  An input for a Datalog program $Q$ is an
instance $D$ over $\E$ that instantiates every relation symbol of $\E$
with values from an arbitrary domain. The \e{active domain} of an
instance $D$, denoted $\adom(D)$, is the set of constants that occur
in $D$.

An \e{ordered signature} $\E$ is a signature that includes three
distinguished relation symbols: a binary relation symbol $\succrel$,
and two unary relation symbols $\first$ and $\last$.  An \e{ordered
	instance} $D$ is an instance over an ordered signature $\E$ such
that $\succrel$ is interpreted as a successor relation of some linear
(total) order over $\adom(D)$, and $\first$ and $\last$ determines the
first and last elements in this linear order, respectively.

A \e{semipositive} Datalog program, or $\semiposdatalog$ program in
notation, is a Datalog program in which the EDB atoms (i.e., atoms
over EDB relation symbols) can be negated. We make the safety
assumption that in each rule $\rho$, every variable that appears in
the head of $\rho$ is either (1) a variable appearing in a positive
(i.e., non-negated) atom of the body of the rule, or (2) in
$\vars(\gamma)$ for a regex formula $\gamma$ that appears in the body
of the rule.  The database with signature $\I$ that results from
applying $P$ on an instance $D$ over $\E$, is denoted by $P(D)$.

A \e{query} $\query$ over a signature $\E$ is associated with a fixed
arity $\arity{Q}=k$, and it maps an input database $D$ over $\E$ into
a relation $\query(D) \subseteq (\adom(D))^k$. As usual, $\query$ is
\e{Boolean} if $k=0$.  We say that $\query$ is \e{respects
	isomorphism} if for all isomorphic databases $D_1$ and $D_2$ over
$\E$, and isomorphisms $\varphi:\adom(D_1)\rightarrow\adom(D_2)$
between $D_1$ and $D_2$, it is the case that $\varphi(\query(D_1)) =
\query(D_2)$.

\partitle{Proof idea}
We now discuss the proof of the ``if'' direction of
Theorem~\ref{thm:ptime-generating}. The proof is based on
Papadimitriou's theorem~\cite{DBLP:journals/eatcs/Papadimitriou85},
stating a close connection between semipositive Datalog and polynomial
time:
\begin{citedtheorem}{DBLP:journals/eatcs/Papadimitriou85,
		DBLP:journals/csur/DantsinEGV01} \label{thm:papagen} Let $\E$ be
	an ordered signature and let $\query$ be a query over $\E$ such that
	$Q$ respects isomorphism. Then
	$\query$ is computable in polynomial time
	if and only if
	$\query$ is computable by a $\semiposdatalog$ program. 
\end{citedtheorem}

The proof continues is as follows. Let $S$ be a spanner that is
computable in polynomial time. We translate $S$ into a $\rgxlang$
program $P$ in two main steps. In the first step, we translate $S$
into a $\semiposdatalog$ program $P_S$ by an application of
Theorem~\ref{thm:papagen}. In the second step, we translate $P_S$ into
$P$. To realize the first step of the construction, we need to encode
our input string by a database, since $P_S$ operates over databases
(and not over strings). To use Theorem~\ref{thm:papagen}, we need to
make sure that this encoding is computable in polynomial time, and
that it is invariant under isomorphism, that is, the encoding allows to
restore the string even if replaced by an isomorphic database. To
realize the second step of the construction, we need to bridge several
differences between $\rgxlang$ and $\semiposdatalog$.  First, the
former takes as input a string, and the latter a database. Second, the
latter assumes an ordered signature while the former does not involve
any order.  Third, the former does not allow negation while in the
latter EDB atoms can be negated.

For the first step of our translation, we use the standard
representation of a string as a logical structure and extend it with a
total order on its active domain. Note that we have to make sure that
the active domain contains the output domain (i.e., all spans of the
input string).  We define $\sigord$ to be an ordered signature with
the unary relation symbols $R_{\sigma}$ for each $\sigma \in \Sigma$,
in addition to the required $\succrel$, $\first$ and $\last$.  Let
$\strs = \sigma_1 \cdots \sigma_n$ be an input string.  We define an
instance $D_{\strs}$ over $\sigord$ by materializing the relations as
follows.
\begin{itemize}
	\item Each relation $R_{\sigma}$ consists of all tuples
	$(\mspan{i}{i+1})$ such that $\sigma_i = \sigma$.
	\item $\succrel$ consists of the pairs $(\mspan{i}{i'},
	\mspan{i}{i'+1})$ and all pairs $(\mspan{i}{n+1},\mspan{i+1}{i+1})$
	whenever the involved spans are legal spans of $\strs$.
	\item $\first$ and $\last$ consist of $\mspan{1}{1}$, and
	$\mspan{n+1}{n+1}$, respectively.
\end{itemize}

\begin{commentthm}~\label{com:ordsp}\em
	Observe that we view the linear order as the lexicographic order over
	the spans.  The only different from the usual lexicographic order on
	ordered pairs $(i,j)$ in that for spans, we must have $i \leq j$.  The
	successor relation $\succrel$ is inferred from this order.\qed
\end{commentthm}

An \e{encoding instance} (or just \e{encoding}) $D$ is an instance
over $\sigord$ that is isomorphic to $D_{\strs}$ for some string
$\strs$.  In this case, we say that $D$ \e{encodes} $\strs$.  Note
that 
the entries of an encoding are not necessarily spans. Nevertheless,
every encoding encodes a unique string.
The following lemma is straightforward.
\def\lemEncoding{ Let $D$ be an instance over $\sigord$.
	The following hold:
	\begin{enumerate}
		\item 
		Whether $D$ is an encoding can be determined in polynomial time.
		\item If $D$ is an encoding, then there are unique string
		$\strs$ and isomorphism $\iota$ such that $D$ encodes
		$\strs$ and $\iota(D_{\strs}) = D$; moreover, both $\strs$ and
		$\iota$ are computable in polynomial time.
	\end{enumerate}
}
\begin{lemma}\label{lem:DEncodesS}
	\lemEncoding
\end{lemma}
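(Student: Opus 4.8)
The plan is to verify the two claims by unwinding the concrete structure of $D_{\strs}$ and exploiting the rigidity of linearly ordered structures. For part~(1), I would give an explicit polynomial-time test. Given $D$ over $\sigord$, first check that $\succrel^D$ is the successor relation of a linear order on $\adom(D)$ and that $\first^D,\last^D$ are the singletons containing its minimum and maximum (standard: compute in/out-degrees, look for the unique source and sink, trace the chain, and confirm it visits every element exactly once without cycles). Let $N=|\adom(D)|$ be the length of this chain, so a candidate string has length $n$ where $N = \binom{n+1}{2} + (n+1) = \binom{n+2}{2}$; solve for $n$ and reject if no integer solution exists. Then check that the $R_\sigma^D$ form a partition of exactly the $n$ elements occupying positions $1,3,6,\dots$ in the chain (the positions that, under the lexicographic-span order described in Comment~\ref{com:ordsp}, correspond to the length-one spans $\mspan i{i+1}$), with each such element in exactly one $R_\sigma^D$; reject otherwise. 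All of these steps are clearly polynomial in $\|D\|$. The point to get right is that this sequence of checks is not merely necessary but sufficient: any $D$ passing them admits an order isomorphism to $D_{\strs}$ for the string $\strs$ read off from the $R_\sigma$ labels, because the positions of all spans in the lexicographic order, and hence the entire relational structure of $D_{\strs}$, are determined by $n$ alone.

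For part~(2), uniqueness of $\strs$ is immediate: if $\iota(D_{\strs}) = D = \iota'(D_{\strs'})$, then $D_{\strs}$ and $D_{\strs'}$ are isomorphic, so they have the same domain size, forcing $|\strs|=|\strs'|$, and then the isomorphism must map length-one spans to length-one spans preserving the order (as the order is definable and rigid), so it matches up $\sigma_i$ with $\sigma'_i$ for every $i$, giving $\strs=\strs'$. Uniqueness of $\iota$ follows because an automorphism of a finite linear order is the identity, so $D_{\strs}$ is rigid; hence the isomorphism onto $D$ is unique. For computability, $\strs$ is obtained by the procedure already described in part~(1) (read the $R_\sigma$-labels off the elements in the order given by $\succrel^D$, restricted to the length-one-span positions), and $\iota$ is computed by walking $\succrel^{D_{\strs}}$ and $\succrel^D$ in parallel from their respective $\first$ elements, matching the $i$-th element of one chain to the $i$-th element of the other; this is linear in the chain length.

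There is no real obstacle here --- the lemma is genuinely ``straightforward'' as stated --- but the one place deserving care is the bookkeeping in Comment~\ref{com:ordsp}: one must confirm that the $\succrel$ relation as materialized in the definition of $D_{\strs}$ (the pairs $(\mspan i{i'},\mspan i{i'+1})$ together with the ``wrap-around'' pairs $(\mspan i{n+1},\mspan{i+1}{i+1})$) is in fact the successor relation of the lexicographic order on $\allspans(\strs)$, so that the arithmetic relating $N$ to $n$ and the identification of the length-one-span positions are correct. Once that is pinned down, both parts follow by the argument above.
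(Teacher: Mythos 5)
Your overall strategy is sound and is essentially the intended (unwritten) argument --- the paper declares this lemma straightforward and gives no proof --- namely: verify that $\succrel^D$ is the successor relation of a linear order covering $\adom(D)$ with $\first^D$ and $\last^D$ its endpoints, recover $n$ from $N=|\adom(D)|=\binom{n+2}{2}$, check the $R_\sigma^D$ labels sit exactly where $D_{\strs}$ puts them, and get uniqueness of $\strs$ and $\iota$ from the rigidity of a finite successor chain. The uniqueness and polynomial-time parts of your argument are fine.

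The one concrete error is your identification of where the length-one spans sit in the chain. Under the lexicographic order of Comment~\ref{com:ordsp}, the chain of $D_{\strs}$ is grouped into blocks by start point, of decreasing sizes $n+1,n,\dots,1$: it begins $\mspan11,\mspan12,\dots,\mspan1{n+1},\mspan22,\mspan23,\dots$, and the span $\mspan i{i+1}$ (the one carrying the letter $\sigma_i$) is the \emph{second} element of the $i$-th block, i.e.\ it occupies position $2+\sum_{k=1}^{i-1}(n+2-k)$, which depends on $n$. These are not the positions $1,3,6,\dots$ you state; in particular position $1$ is the empty span $\mspan11$, which lies in $\first^{D_{\strs}}$ and in no $R_\sigma^{D_{\strs}}$. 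As written, the test in part~(1) would reject every genuine encoding with $n\geq 1$, and the readout of $\strs$ in part~(2) would label the wrong chain elements, so the step fails as stated. The repair is immediate --- replace your positions by the second element of each block (equivalently, the elements whose $\succrel$-predecessor is an empty span $\mspan ii$, which you can recognize as the chain elements immediately following $\first$ or following a block boundary) --- and with that correction the sufficiency claim and the rest of your proof go through.
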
 

Let $S$ be a spanner.  We define a query $Q_{S}$ over $\sigord$ as
follows. If the input database $D$ is en encoding and $\strs$ and
$\iota$ are as in Lemma~\ref{lem:DEncodesS}, then $Q_S(D) = \iota
(\repspnr{S}(\strs))$; otherwise, $Q_S(D)$ is empty. To apply
Theorem~\ref{thm:papagen}, we make an observation.
\begin{observation}\label{lem:QSisPoly}
	The query $Q_{S}$ respects isomorphism, and moreover, is computable
	in polynomial time whenever $S$ is computable in polynomial time.
\end{observation}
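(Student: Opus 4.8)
The plan is to prove the two assertions separately, in each case reducing everything to Lemma~\ref{lem:DEncodesS}; the crucial content of that lemma is that an encoding $D$ determines \emph{both} the string it encodes \emph{and} the witnessing isomorphism $\iota$ with $\iota(D_\strs)=D$ \emph{uniquely}. (This uniqueness holds because $D_\strs$ is rigid: the relations $\succrel$, $\first$, $\last$ force every automorphism of $D_\strs$ to be the identity, so any isomorphism into an isomorphic copy is unique.)

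For isomorphism-respecting, I would take isomorphic instances $D_1,D_2$ over $\sigord$ and an isomorphism $\varphi\colon\adom(D_1)\to\adom(D_2)$, and argue as follows. Being an encoding is isomorphism-invariant, since it is defined as being isomorphic to $D_\strs$ for some $\strs$; hence either neither of $D_1,D_2$ is an encoding --- in which case $Q_S(D_1)=Q_S(D_2)=\emptyset$ and $\varphi(\emptyset)=\emptyset$, so we are done --- or both are. In the latter case let $\strs_i$, $\iota_i$ be as supplied by Lemma~\ref{lem:DEncodesS} for $D_i$. Since $D_1\cong D_2$ we get $D_{\strs_1}\cong D_{\strs_2}$, and since $\strs$ can be read back off $D_\strs$ (from the $R_\sigma$'s along the order $\succrel$), this forces $\strs_1=\strs_2$; call it $\strs$. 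Now $\varphi\circ\iota_1$ is an isomorphism $D_\strs\to D_2$, so by the uniqueness clause of Lemma~\ref{lem:DEncodesS} it equals $\iota_2$, whence $Q_S(D_2)=\iota_2(\repspnr{S}(\strs))=\varphi(\iota_1(\repspnr{S}(\strs)))=\varphi(Q_S(D_1))$, as required. Throughout I identify a record $r$ over $\vars(S)$ with the tuple $(r(x_1),\dots,r(x_k))$ for a fixed enumeration of $\vars(S)$, and let $\iota$ and $\varphi$ act componentwise on tuples of spans.

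For polynomial-time computability, assuming $S$ is polynomial-time, the algorithm on input $D$ is: test whether $D$ is an encoding (polynomial time by Lemma~\ref{lem:DEncodesS}(1)); if not, output $\emptyset$; otherwise compute $\strs$ and $\iota$ (polynomial time by Lemma~\ref{lem:DEncodesS}(2)), compute $\repspnr{S}(\strs)$, and return the image of this relation under $\iota$. The one point to check is that this runs in time polynomial in $|D|$ rather than merely in $|\strs|$: since $\adom(D_\strs)$ consists of all spans $\mspan i j$ of $\strs$, we have $|\adom(D)|=|\adom(D_\strs)|=\Theta(|\strs|^2)$, so any polynomial in $|\strs|$ --- in particular the running time of $S$ on $\strs$ and the size of $\repspnr{S}(\strs)$ --- is polynomial in $|D|$.

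I do not expect a real obstacle here: the substantive work (decoding an encoding, and the uniqueness of that decoding) is already packaged in Lemma~\ref{lem:DEncodesS}, so the only points that require attention are invoking \emph{uniqueness} of $\iota$ to transfer isomorphisms correctly, and the quadratic-blowup bookkeeping relating $|\strs|$ to $|D|$.
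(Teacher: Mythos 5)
Your proof is correct and takes exactly the route the paper intends: the paper states this as an observation without an explicit proof, treating it as immediate from the definition of $Q_S$ together with the uniqueness and polynomial-time clauses of Lemma~\ref{lem:DEncodesS}, which is precisely what you spell out (including the rigidity of $D_{\strs}$ behind the uniqueness of $\iota$, and the fact that $|\strs|$ is bounded by the size of the encoding, so polynomial in $|\strs|$ is polynomial in the input).
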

We can now apply Theorem~\ref{thm:papagen} on $Q_S$:
\begin{lemma}\label{lem:PLfromQLext}
	If $S$ is computable in polynomial time, then there exists a
	$\semiposdatalog$ program $P'$ over $\sigord$ such that
	$P'(D)=Q_S(D)$ for every instance $D$ over $\sigord$.
\end{lemma}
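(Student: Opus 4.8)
The plan is to obtain $P'$ by a direct invocation of Papadimitriou's theorem (Theorem~\ref{thm:papagen}) applied to the query $Q_{S}$. First I would verify the hypotheses of that theorem: $\sigord$ is an ordered signature, since by construction it contains the distinguished symbols $\succrel$, $\first$, and $\last$; and by Observation~\ref{lem:QSisPoly} the query $Q_{S}$ respects isomorphism and, because $S$ is computable in polynomial time, $Q_{S}$ itself is computable in polynomial time. Theorem~\ref{thm:papagen} then yields a $\semiposdatalog$ program $P'$ over $\sigord$ that computes $Q_{S}$, which is precisely the assertion $P'(D)=Q_{S}(D)$ for every instance $D$ over $\sigord$.

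The one point I would spell out is the behaviour on instances $D$ over $\sigord$ that are \emph{not} encodings --- e.g.\ instances in which $\succrel$, $\first$, $\last$ do not describe the successor relation of a linear order, or in which the $R_{\sigma}$ relations are inconsistent with such an order. By definition $Q_{S}(D)=\emptyset$ on every such $D$, so no extra argument is needed; this is exactly why $Q_{S}$ was defined to vanish off the set of encodings, and it is why this step appeals to the form of Theorem~\ref{thm:papagen} that quantifies over \emph{all} instances over an ordered signature rather than only over genuinely ordered ones. (If one only had the ``ordered instances'' version, one could instead use Lemma~\ref{lem:DEncodesS}(1) --- the set of encodings is polynomial-time decidable and isomorphism-invariant --- to fold the encoding test into the program; either way there is no new mathematical content.)

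In short, I do not expect any real obstacle: all the work has already been done in setting up $\sigord$ and the query $Q_{S}$ (Lemma~\ref{lem:DEncodesS} and Observation~\ref{lem:QSisPoly}), and the lemma is a two-line corollary of Theorem~\ref{thm:papagen}. The only thing requiring a second glance is that $Q_{S}$ is a well-defined query in the technical sense used here, i.e.\ $Q_{S}(D)\subseteq(\adom(D))^{k}$ with $k=|\vars(S)|$; this holds because $\adom(D_{\strs})$ contains every span of $\strs$ and the isomorphism $\iota$ of Lemma~\ref{lem:DEncodesS} maps $\adom(D_{\strs})$ onto $\adom(D)$, so $\iota(\repspnr{S}(\strs))$ lands in $(\adom(D))^{k}$ --- but this is already implicit in the construction.
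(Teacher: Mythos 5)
Your proposal is correct and matches the paper's own argument: the paper likewise obtains $P'$ by directly applying Theorem~\ref{thm:papagen} to $Q_S$, with the hypotheses supplied by the construction of $\sigord$ and Observation~\ref{lem:QSisPoly}. Your additional remarks about non-encoding instances and about $Q_S$ being a well-defined query over $\adom(D)$ are sensible clarifications that the paper leaves implicit, but they introduce no new content.
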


The second step of the translation simulates the $\semiposdatalog$
program $P'$ using a $\rgxlang$ program.  With $\rgxlang$, we can
construct $D_{\strs}$ from $\strs$ with the following rules:
\begin{center}
	\begin{tabular}{ll}
		$\cursor R_{\sigma}(x) \leftarrow \langle x\{ \sigma \} \rangle $
		&$ \cursor \succrel(x_1,x_2) \dla 
		\langle  x_2\{ x_1\{.^* \} \hspace{2.5pt} . \} \rangle \vee 
		[ .^*  x_2\{\hspace{2.5pt} . \hspace{2.5pt} x_1 \set{ \epsilon} \hspace{2.5pt}.^*\} ]$
		\\
		$\cursor \first(x) \dla [x\{ \epsilon\} .^*]$&
		$\cursor \last(x) \dla [.^* x\{ \epsilon\}]$
	\end{tabular}
\end{center}
Indeed, if we evaluate the above $\rgxlang$ rules on a string $\strs$
we obtain exactly $D_{\strs}$.  Note that rules in $\semiposdatalog$
that do not involve negation can be viewed as $\rgxlang$
rules. However, since $\rgxlang$ do not allow negation, we need to
include the negated EDBs as additional EDBs.  Nevertheless, we can
negate these EDBs without explicit negation, because regular spanners
are closed under difference and
complement~\cite{DBLP:journals/jacm/FaginKRV15}.  We therefore
conclude the following lemma.
\begin{lemma}\label{lem:DatalogToRGXlogIStars}
	If $P^\prime$ is a $\semiposdatalog$ program over $\sigord$, then
	there exists a $\rgxlang$ program ${P}$ such that ${P}(\strs) = P^\prime(D_{\strs})$ for every string $\strs$.
\end{lemma}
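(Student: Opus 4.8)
The plan is to turn the $\semiposdatalog$ program $P'$ into a $\rgxlang$ program $P$ by three edits: materialize the encoding $D_{\strs}$ from $\strs$ with $\rgxlang$ rules, pre-compute the complements of the EDB relations as extra IDB predicates, and copy the remaining positive content of $P'$ almost verbatim. To begin, I would let the IDB signature of $P$ contain every relation symbol of $P'$ — so the former EDB symbols $R_{\sigma}$, $\succrel$, $\first$, $\last$ become IDB symbols of $P$ — and, in addition, one fresh symbol $\overline{E}$ of arity $\arity{E}$ for each EDB symbol $E$ that occurs negated in $P'$; the designated output symbol of $P$ is again $\Rout$.

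First I would add to $P$ the four groups of construction rules displayed just before the lemma, defining $R_{\sigma}$, $\succrel$, $\first$ and $\last$ from regex formulas. These rules are non-recursive, and (as already noted before the lemma) evaluating them on $\strs$ yields exactly the relations of $D_{\strs}$: $\angs{x\set{.^*}}$ ranges over precisely $\allspans(\strs)=\adom(D_{\strs})$, and the displayed regex formulas select exactly the successor pairs and the first and last spans in the span-lexicographic order of Comment~\ref{com:ordsp}.

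Second I would deal with negation. In $\semiposdatalog$ a body literal $\neg E(\bar x)$ is evaluated over the active domain, which for $D_{\strs}$ is $\allspans(\strs)$; hence it denotes the set-theoretic complement, inside $\allspans(\strs)^{\arity{E}}$, of the regular spanner that computes $E$. Now the ``all tuples of spans'' spanner $\angs{x_1\set{.^*}}\join\cdots\join\angs{x_k\set{.^*}}$ (with $k=\arity{E}$) is regular, and each of $R_{\sigma}$, $\succrel$, $\first$, $\last$ is computed by a single regex formula and is therefore regular as well; so by the closure of regular spanners under difference and complement~\cite{DBLP:journals/jacm/FaginKRV15}, each of the needed complements is again a regular spanner, and hence — by Proposition~\ref{cprop:NREqReg} — is definable by a \emph{non-recursive} $\rgxlang$ program. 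I would rename the IDB symbols of that program apart, add all of its rules to $P$, and identify its output symbol with $\overline{E}$. Third, I would take each rule of $P'$, replace every negated body atom $\neg E(\bar x)$ by the positive atom $\overline{E}(\bar x)$, leave the positive body atoms (now IDB atoms of $P$) and the head unchanged, and add the resulting rule to $P$; the safety condition of $\semiposdatalog$ guarantees that each head variable still occurs in some positive body atom, so every added rule is a legal $\rgxlang$ rule.

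Correctness would follow by comparing least fixpoints stagewise: the construction rules and the complement subprograms are non-recursive and mention no symbol of $P'$'s IDB signature, so on input $\strs$ they populate $R_{\sigma}$, $\succrel$, $\first$, $\last$ with exactly the relations of $D_{\strs}$ and each $\overline{E}$ with exactly the active-domain complement used by $P'$ on $D_{\strs}$; thereafter the copied rules of $P'$, read inside $P$, induce on $P'$'s IDB symbols exactly the immediate-consequence operator that $P'$ induces on $D_{\strs}$, so an easy induction on the fixpoint stages gives agreement on all relations over the shared IDB signature, in particular on $\Rout$, i.e.\ $P(\strs)=P'(D_{\strs})$. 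The step I expect to require the most care is the handling of negation: one must verify that the complement of each EDB is taken relative to the span domain $\allspans(\strs)$, not some larger universe, and that it is a \emph{regular} spanner rather than merely a polynomial-time one — which is exactly where the argument leans on the fact that ``all tuples of spans'' is itself regular; were the EDB relations not regular, this reduction to non-recursive $\rgxlang$ would not go through.
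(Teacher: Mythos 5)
Your proof is correct and follows essentially the same route as the paper's (sketched) argument: construct $D_{\strs}$ with the displayed regex-based rules, express the active-domain complements of the EDB relations as regular spanners (hence as non-recursive $\rgxlang$ via Proposition~\ref{cprop:NREqReg}) defining fresh predicates, and copy the rules of $P'$ with each negated EDB atom replaced by the corresponding complement predicate. The stagewise fixpoint verification and the explicit check that complementation is relative to $\allspans(\strs)$ are details the paper leaves implicit, but they do not change the approach.
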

To summarize the proof of the ``if'' direction of
Theorem~\ref{thm:ptime-generating}, let $S$ be a spanner computable in
polynomial time. We defined $Q_S$ to be such that
$Q_S(D_\strs)=\repspnr{S}(\strs)$ for all $\strs$.
Lemma~\ref{lem:PLfromQLext} implies that there exists a
$\semiposdatalog$ program $P'$ such that $P'(D_\strs)=Q_S(D_\strs)$
for all $\strs$.  By Lemma~\ref{lem:DatalogToRGXlogIStars}, there
exists a $\rgxlang$ program $P$ such that $P(\strs) = P'(D_\strs)$ for
all $\strs$.  Therefore, $P$ is the required $\rgxlang$ program
such that $P(\strs) = {S}(\strs)$ for all $\strs$.

\subsection{RGXlog over Monadic Regex Formulas} 
Our proof of Theorem~\ref{thm:ptime-generating} showed that $\rgxlang$
programs over \e{binary} regex formulas (i.e., regex formulas with two
variables) suffice to capture every spanner that is computable in
polynomial time.  Next, we show that if we allow only \e{monadic}
regex formulas (i.e., regex formulas with one variable), then we
strictly decrease the expressiveness. We call such programs
\e{regex-monadic} programs. We can characterize the class of spanners
expressible by regex-monadic programs, as follows.

\def\monadicRecRelation { Let $S$ be a
	spanner.  The following are equivalent:
	\begin{enumerate}
		\item $S$ is definable as a regex-monadic program.
		\item $S$ is definable as a $\rgxlang$ program where
		all the rules have the form
		$$\Rout(x_1,\ldots,x_k)
		\dla \gamma_1(x_1),\dots, \gamma_k(x_k),\gamma()$$
		where each $\gamma_i(x_i)$ is a unary regex formula and $\gamma$ is a Boolean regex formula.
	\end{enumerate}
}
\begin{theorem}\label{thm:monadic}
	\monadicRecRelation
\end{theorem}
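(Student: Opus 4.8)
The two directions of Theorem~\ref{thm:monadic} are of very different difficulty. The implication $(2)\Rightarrow(1)$ is immediate: a $\rgxlang$ program all of whose rules have the displayed shape uses only the unary regex formulas $\gamma_i(x_i)$ and the Boolean regex formula $\gamma()$, hence is already a regex-monadic program. All the work is in $(1)\Rightarrow(2)$, and the plan is to show that, for a regex-monadic program, recursion buys nothing and the join structure of the rules collapses to independent per-variable constraints plus one global Boolean test.

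The central claim I would isolate is a \emph{rectangle invariant}. Call a span relation over variables $x_1,\dots,x_r$ a \emph{rectangle union} if it is a finite union of relations of the form $\{(s_1,\dots,s_r)\mid s_i\in\rep{A_i}(\strs)\text{ for all }i,\ \rep{B}(\strs)=\true\}$, where each $A_i$ is a unary regular spanner, $B$ is a Boolean regular spanner, and---crucially---the number of terms and the tuples $(A_1,\dots,A_r,B)$ do not depend on $\strs$. I would prove: there are \emph{finite} sets $\mathcal{A}$ of unary regular spanners and $\mathcal{B}$ of Boolean regular spanners, computable from the program, such that for every IDB relation $R$ and every string $\strs$, $R^{Q(\strs)}$ is a rectangle union with all $A_i\in\mathcal{A}$ and all $B\in\mathcal{B}$. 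The argument: (a) a standard normalization of $Q$ so that every rule head is a tuple of distinct variables and every rule is ``join the body atoms, then project onto the head variables''; (b) take $\mathcal{A}$ to be the closure under $\cup$ and $\join$ of the unary regex formulas of $Q$ together with the empty and all-spans spanners, and $\mathcal{B}$ the closure under $\cup$ and $\join$ of the Boolean regex formulas of $Q$ together with all emptiness tests $\pi_{\emptyset}A$ for $A\in\mathcal{A}$---both finite, since they sit inside the finite Boolean algebra generated by finitely many regular languages; (c) check that one application of the immediate-consequence operator preserves the rectangle invariant, the point being that a join on a shared variable becomes an intersection of members of $\mathcal{A}$ (still in $\mathcal{A}$, as regular spanners are closed under join), a projected-away coordinate turns into an emptiness test on that coordinate's spanner (still in $\mathcal{B}$), and a union of rectangle unions is a rectangle union---all performed \emph{symbolically} on the ``decomposition data'', uniformly in $\strs$; and (d) observe that the decomposition data of each $R$ lives in the \emph{finite} lattice $2^{\mathcal{A}^{\arity{R}}\times\mathcal{B}}$, on which the symbolic immediate-consequence operator is monotone, so its least fixpoint is reached in boundedly many steps, and evaluating that fixpoint at any $\strs$ reproduces $R^{Q(\strs)}$ because the symbolic operations commute with evaluation at $\strs$.

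Given the rectangle invariant I would conclude as follows. Applied to $\Rout$---whose head variables are distinct, so no coordinate-equality constraints survive---it exhibits $\rep{Q}$ as a finite union of spanners $\rep{A_1}(\strs)\times\dots\times\rep{A_k}(\strs)$ restricted to strings passing a Boolean test $\rep{B}(\strs)=\true$. A small lemma finishes the job: every unary regular spanner is definable by a single unary regex formula (split a DFA for its marked-string language at the two marker positions, obtaining a finite disjunction $\bigvee[B_{p}\cdot x\{M_{p,q}\}\cdot A_{q}]$, which is functional), and every Boolean regular spanner is definable by a single Boolean regex formula (these are just regular expressions over $\Sigma$). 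Hence each term of the union becomes a single rule $\Rout(x_1,\dots,x_k)\dla\gamma_1(x_1),\dots,\gamma_k(x_k),\gamma()$, and the collection of these rules is the program required by $(2)$. One may alternatively package the same content by first collapsing each $r$-ary IDB into $r$ unary IDBs plus one Boolean IDB---obtaining a monadic Datalog program all of whose EDBs are unary or Boolean---then invoking boundedness of monadic Datalog~\cite{DBLP:journals/jacm/HalevyMSS01} (which genuinely applies here, since there is no binary EDB to recurse through), and finally Proposition~\ref{cprop:NREqReg} together with the known shape of regular spanners~\cite{DBLP:journals/jacm/FaginKRV15}.

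I expect the main obstacle to be step (d): making precise that passing to the symbolic decomposition data yields a well-defined monotone operator on a finite lattice whose least fixpoint, evaluated pointwise, equals the Datalog least fixpoint on every input string. This requires the symbolic operations of step (c) to be genuinely natural in $\strs$ (intersection, union, and emptiness testing of regular spanners all are) and the induction set up so that the bound on the number of union terms is truly uniform. A secondary delicate point is the normalization of step (a): repeated variables occurring inside atoms, and especially inside rule heads, must be handled so that they do not smuggle diagonal relations outside the rectangle form---this is exactly where we use that $\Rout$ is applied to a tuple of distinct variables.
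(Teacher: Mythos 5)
Your direction $(2)\Rightarrow(1)$ is fine, and your plan for $(1)\Rightarrow(2)$ -- proving a product ("rectangle") normal form directly by a symbolic bounded-fixpoint argument rather than citing the known collapse of Datalog over unary EDBs -- could in principle be made to work. But the endgame as you state it is wrong, and the point you dismiss as "secondary" is exactly where it fails. A regex-monadic program may consist of the single rule $\Rout(x,x)\dla\angs{x\set{.^*}}$ with designated output $\Rout(x_1,x_2)$; since the output spanner just reads off the relation $\Rout$, this program defines the diagonal $\set{(s,s)\mid s\in\allspans(\strs)}$. This spanner is \e{not} a finite, $\strs$-uniform union of products $\rep{A_1}(\strs)\times\rep{A_2}(\strs)$ restricted by Boolean tests: any product contained in the diagonal has at most one tuple, while the diagonal has $\Theta(n^2)$ tuples on strings of length $n$. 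So your rectangle invariant is simply false for the IDB $\Rout$ of this program, and your concluding claim that "$\Rout$'s head variables are distinct, so no coordinate-equality constraints survive" conflates the designated output atom $\Rout(\tup x)$ of the program (whose variables are indeed distinct) with the heads of the \e{rules} defining $\Rout$, which may repeat variables and thereby produce diagonals. Note that statement~(2) of the theorem does not require $x_1,\dots,x_k$ to be distinct; the paper's own proof keeps repeated head variables and merely duplicates body atoms, as in $\Rout(x,x)\dla\beta(x),\beta(x),\gamma()$, which is precisely how the diagonal is captured in form~(2).

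The repair is to carry equality information through your induction rather than discard it: either prove the invariant for equality-typed copies $R_\theta$ of each IDB (heads with distinct variables, $\theta$ recording which coordinates are forced equal), and at the end re-expand each $\Rout_\theta$ into form-(2) rules whose heads repeat variables according to $\theta$; or, equivalently, state the invariant as "finite union of products composed with an equality pattern on the coordinates" and check this shape is preserved by joins, repeated variables in body atoms, and projections. With that amendment your lattice argument becomes a self-contained proof of the fact the paper obtains by citation: after introducing $R_\gamma(x)\dla\gamma(x)$, the program is ordinary Datalog whose EDB relations are all unary, and by Levy et al.\ recursion adds nothing, so the program collapses to a union of conjunctive queries over unary relations -- which already comes with the correct (possibly repeated-variable) heads; the paper then only needs that a conjunction of unary regex formulas on the same variable is again a regex formula, plus the dummy-variable trick for $\gamma()$. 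Your "alternative packaging" (splitting each $r$-ary IDB into $r$ unary IDBs plus a Boolean IDB before invoking boundedness) cannot substitute for the main argument: that splitting presupposes exactly the product decomposition you are trying to prove, and it suffers from the same diagonal problem. The rest of your machinery (finiteness of the generated lattice, emptiness tests as Boolean constraints, single-regex-formula form of unary and Boolean regular spanners) is sound.
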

\vskip1em

Note that in the second part of Theorem~\ref{thm:monadic}, the
Boolean 
$\gamma()$ can be omitted whenever $k>0$, since
$\gamma()$ can be compiled into $\gamma_k(x_k)$.  To prove the theorem, we use
a result by Levy et al.~\cite{DBLP:conf/pods/LevyMSS93}, stating that
recursion does not add expressive power when 
every relation in the EDB is
unary.  This
theorem implies that every spanner definable as a regex-monadic
program is regular.  We then draw the following direct consequence on
Boolean programs.  \def\regexMonadicRegular {A language is accepted by
	a Boolean regex-monadic program if and only if it is regular.}
\begin{corollary}\label{cor:monadicreg}
	\regexMonadicRegular
\end{corollary}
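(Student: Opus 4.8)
The plan is to derive the corollary from Theorem~\ref{thm:monadic} together with Proposition~\ref{cprop:NREqReg}, so that no fresh automata-theoretic argument is needed. First I would treat the ``only if'' direction. Suppose $L$ is accepted by a Boolean regex-monadic program $P$. By Theorem~\ref{thm:monadic}, $P$ can be rewritten into an equivalent $\rgxlang$ program in which every rule has the normal form $\Rout(x_1,\dots,x_k)\dla\gamma_1(x_1),\dots,\gamma_k(x_k),\gamma()$; since $P$ is Boolean we have $k=0$, so every rule is of the shape $\Rout()\dla\gamma()$ for a Boolean regex formula $\gamma$. A program all of whose rules have this form is non-recursive (there are no IDB atoms in any body at all, so the dependency graph has no edges). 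Hence $\rep{P}$ is a non-recursive $\rgxlang$ program, and by Proposition~\ref{cprop:NREqReg} it defines a regular spanner; being Boolean, $\rep{P}$ recognizes a regular language. Concretely, $\strs\in L$ iff $\rep{\gamma}(\strs)=\true$ for some rule $\Rout()\dla\gamma()$, so $L=\bigcup_{\rho}\{\strs\mid\rep{\gamma_\rho}(\strs)=\true\}$, a finite union of the languages recognized by the Boolean regex formulas, each of which is regular.

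For the ``if'' direction, let $L$ be a regular language, say $L=L(r)$ for a classical regular expression $r$ over $\Sigma$. Then $[r]$ (the regex formula $r$ with no capture variables, read inside ``$\angs{\cdot}$'' anchoring via $[\,\cdot\,]$ as in the paper's convention) is a Boolean regex formula whose recognized language is exactly $L$. The single-rule program with $\cursor\Rout()\dla [r]$ is then a regex-monadic program (trivially, since it uses a regex formula with zero—hence at most one—variables) accepting $L$. This direction is essentially immediate.

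I expect the only real subtlety to be making sure the ``only if'' argument cleanly lands inside the hypothesis of Proposition~\ref{cprop:NREqReg}: one must observe that the normal-form program produced by Theorem~\ref{thm:monadic} is genuinely non-recursive, which in the Boolean case is forced because the bodies contain no IDB atoms whatsoever. Everything else is bookkeeping: closure of regular languages under finite union on one side, and the trivial embedding of a regular expression as a variable-free regex formula on the other. No step here should require more than a sentence, so the corollary follows directly from the two cited/previously established results.
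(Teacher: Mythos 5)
Your proof is correct and takes essentially the same route as the paper: both directions come down to Theorem~\ref{thm:monadic}, whose Boolean normal form is just a union of rules $\Rout()\dla\gamma()$, i.e.\ a non-recursive program equivalent to a finite union of Boolean regex formulas, hence regular by Proposition~\ref{cprop:NREqReg} (the paper cites the union-of-joins characterization of regular spanners instead, which amounts to the same thing). One small nit: the paper reads ``monadic'' as \emph{exactly} one variable (see the $2\ra 1$ step in the proof of Theorem~\ref{thm:monadic}), so in your ``if'' direction use a dummy capture variable, e.g.\ $\Rout()\dla[x\set{r}]$, rather than the variable-free $[r]$.
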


For non-Boolean spanners, we can use Theorem~\ref{thm:monadic} to show
that regex-monadic programs are \e{strictly less expressive} than
regular spanners. For instance, in the appendix we show that the
relation ``the span $x$ contains the span $y$'' is not expressible as
a regex-monadic program, although it is clearly regular.
Therefore, we conclude the following.
\def\regexMonadicLessReg
{The class of regex-monadic programs is strictly less expressive than
	the class of regular spanners.}
\begin{corollary}\label{cor:regexMonadicLessReg}
	\regexMonadicLessReg
\end{corollary}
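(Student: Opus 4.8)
The inclusion of regex-monadic programs in the class of regular spanners is already supplied by the discussion preceding the corollary: Theorem~\ref{thm:monadic}, together with the collapse of recursion over unary EDBs due to Levy et al.~\cite{DBLP:conf/pods/LevyMSS93}, shows that every regex-monadic program defines a regular spanner. So the plan is to exhibit one regular spanner that is \emph{not} definable by any regex-monadic program, which gives strictness.

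The witness I would use is the binary containment spanner $\rel{Cntd}(x,y) \df \angs{x\set{.^* \, y\set{.^*} \, .^*}}$, whose output on a string $\strs$ consists of all pairs $(x,y)$ of spans of $\strs$ such that, viewed as intervals, $x$ contains $y$. Being a single regex formula, $\rel{Cntd}$ is regular. Suppose, toward a contradiction, that $\rel{Cntd}$ is defined by a regex-monadic program. By Theorem~\ref{thm:monadic} --- and the remark immediately after it, which lets us drop the Boolean atom since the output arity is positive --- $\rel{Cntd}$ is then defined by a $\rgxlang$ program whose rules, enumerated as $\rho_1,\dots,\rho_r$, each have the form $\Rout(x,y) \dla \gamma_1^\ell(x), \gamma_2^\ell(y)$ with $\gamma_1^\ell,\gamma_2^\ell$ unary regex formulas. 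Such a program is non-recursive, so for every string $\strs$ we have $\rep{\rel{Cntd}}(\strs) = \bigcup_{\ell=1}^{r} \rep{\gamma_1^\ell}(\strs) \times \rep{\gamma_2^\ell}(\strs)$: a union of at most $r$ ``rectangles'', each of which --- being a subset of the left-hand side --- is contained in $\rep{\rel{Cntd}}(\strs)$. Crucially, $r$ does not depend on $\strs$.

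Now fix any string $\strs$ with $|\strs| = n > r$, and for $i = 1,\dots,n$ put $x_i \df \mspan{1}{i+1}$ and $y_i \df \mspan{i}{i+1}$; these are all legal spans of $\strs$, and $x_i$ contains $y_j$ if and only if $j \le i$. In particular each of the $n$ pairs $(x_1,y_1),\dots,(x_n,y_n)$ lies in $\rep{\rel{Cntd}}(\strs)$. If two of them, say $(x_i,y_i)$ and $(x_k,y_k)$ with $i < k$, lay in a common rectangle $A \times B \subseteq \rep{\rel{Cntd}}(\strs)$, then $x_i \in A$ and $y_k \in B$, so $(x_i,y_k) \in A \times B \subseteq \rep{\rel{Cntd}}(\strs)$ --- contradicting the fact that $x_i$ does not contain $y_k$, since $k > i$. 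Hence the $n$ pairs require pairwise distinct rectangles, so the union above needs at least $n > r$ rectangles, a contradiction. Therefore $\rel{Cntd}$ is not definable by a regex-monadic program, and the inclusion is strict.

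The part that needs care --- and the only step the appendix really has to pin down --- is the passage to the normal form of Theorem~\ref{thm:monadic}, namely that a (possibly recursive) regex-monadic program can be rewritten as the non-recursive union of products displayed above; this is precisely where the collapse of recursion over unary EDBs of Levy et al.~\cite{DBLP:conf/pods/LevyMSS93} is used. Once that normal form is in hand, the fooling-set counting argument is entirely elementary. (Equivalently, Theorem~\ref{thm:monadic} says that a regex-monadic program computes exactly a finite union of products of unary regular spanners, and $\rel{Cntd}$ is visibly not such a finite union.)
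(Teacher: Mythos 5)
Your proof is correct, and it uses the same witness as the paper---the containment spanner---but it packages the counting differently. The paper's appendix first proves Lemma~\ref{lemma:equiv-K}: from the normal form of Theorem~\ref{thm:monadic} one gets, for each input string, an equivalence relation on spans with a constant number $K$ of classes (two spans are equivalent when they are produced by exactly the same unary regex atoms of the program) such that output records are preserved when a span is replaced by an equivalent one; since a string of length $n$ has $\Theta(n^2)$ spans, some class contains two distinct spans $s,t$, and a span $u$ containing exactly one of them yields the contradiction. You instead read the normal form as saying that the output is a union of at most $r$ products $A\times B$ of unary spanners and exhibit a staircase of $n>r$ pairs $(\mspan{1}{i+1},\mspan{i}{i+1})$ no two of which can share a product---a fooling-set argument. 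The two arguments have the same structural source (boundedly many ``types'' of spans under the unary formulas), and yours is if anything more direct and quantitative. One small point to patch: Theorem~\ref{thm:monadic} does not guarantee that the head variables are distinct; the paper's proof of that theorem explicitly allows rules such as $\Rout(x,x)\dla\beta(x),\beta(x),\gamma()$, whose contribution is a diagonal set rather than a product, so your displayed identity expressing $\rep{\rel{Cntd}}(\strs)$ as a union of rectangles is not literally licensed by the theorem. The repair is immediate: a diagonal rule only produces pairs with equal components, and all of your fooling pairs except the case $i=1$ (where $\mspan{1}{2}$ equals $\mspan{1}{2}$) are off-diagonal, so at least $n-1$ of them must be covered by genuine rectangles, and the count still exceeds $r$ once $n>r+1$.
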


\section{Extension to a Combined Relational/Textual
	Model}\label{sec:exten}

In this section, we extend Theorem~\ref{thm:ptime-generating} to 
\e{Spannerlog}, a data and query model introduced by Nahshon et
al.~\cite{DBLP:conf/webdb/NahshonPV16} that unifies and generalizes
relational databases and spanners by considering relations over both
strings and spans.

\subsection{Spannerlog}
The fragment of Spannerlog that we consider is referred to by Nahshon
et al.~\cite{DBLP:conf/webdb/NahshonPV16} as Spannerlog$\angs{\rgxc}$,
and we abbreviate it as simply $\sp$.  A \e{mixed signature} is a
collection of \e{mixed relation symbols} $R$ that have two types of
attributes: \e{string} attributes and \e{span} attributes. We denote
by $[R]\stri$ and $[R]\spni$ the sets of string attributes and span
attributes of $R$, respectively, where an attribute is represented by
its corresponding index.  Hence, $[R]\stri$ and $[R]\spni$ are
disjoint and $[R]\stri\cup [R]\spni=\set{1,\dots,\arity R}$.  A
\e{mixed relation} over $R$ is a set of tuples $(a_1,\dots,a_m)$ where
$m$ is the arity of $R$ and each $a_\ell$ is a string in $\Sigma^*$ if
$\ell\in[R]\stri$ and a span $\mspan i j$ if $\ell\in[R]\spni$.  A
\e{mixed instance} $D$ over a mixed signature consists of a mixed
relation $R^D$ for each mixed relation symbol $R$.  A \e{query}
$\query$ over a mixed signature $\E$ is associated with a mixed
relation symbol $R_Q$, and it maps every mixed instance $D$ over $\E$
into a mixed relation $Q(D)$ over $R_Q$.

A mixed signature whose attributes are all string attributes (in all
of the mixed relation symbols) is called a \e{span-free signature}.  A
mixed relation over a relation symbol whose attributes are all string
(respectively, span) attributes is called a \e{string relation}
(respectively, \e{span relation}).  To emphasize the difference
between mixed signatures 
(respectively, mixed relation symbols,mixed  relations)
and the signatures that do not involve types (which we have dealt with
up to this section), we often relate to the latter as \e{standard}
signatures (respectively, 
standard relation symbols, standard relations).

We consider queries defined by $\sp$ programs, which are defined as
follows. We assume two infinite and disjoint sets $\vars\stri$ and
$\vars\spni$ of \e{string variables} and \e{span variables},
respectively. To distinguish between the two, we mark a string
variable with an overline (e.g., $\ol{x}$).  By a \e{string term} we
refer to an expression of the form $\ol{x}$ or $\ol{x}_{y}$, where
$\ol{x}$ is a string variable and $y$ is a span variable.  In $\sp$,
an \e{atom} over an $m$-ary relation symbol $R$ is an expression of
the form $R(\tau_1,\dots,\tau_m)$ where $\tau_\ell$ is a string term
if $\ell\in[R]\stri$ or a span variable if $\ell\in[R]\spni$.  A
\e{regex atom} is an expression of the form $\angs{\tau}[\gamma]$
where $\tau$ is a string term and $\gamma$ is a regex formula.  Unlike
$\rgxlang$, in which there is a single input string, in $\sp$ a regex
atom $\angs{\tau}[\gamma]$ indicates that the input for $\gamma$ is
$\tau$.  We allow regex formulas to use only span variables. An
\e{$\sp$ program} is a quadruple $\angs{\E,\I,\Phi,\Rout}$ where:
\begin{itemize}
	\item $\E$ is a mixed signature referred to as the \e{EDB} signature;
	\item $\I$ is a mixed signature referred to as the \e{IDB} signature;
	\item $\Phi$ is a finite set of \e{rules} of the form $\varphi \la
	\psi_1,\ldots , \psi_m$ where $\varphi$ is an atom over $\I$ and
	each $\psi_i$ is an atom over $\I$, an atom over $\E$, or a regex
	atom;
	\item $\Rout\in\I$ is a designated \e{output} relation symbol.
\end{itemize}
We require the rules to be \emph{safe} in the following sense: \e{(a)}
every head variable occurs at least once in the body of the rule,
\e{and (b)} every string variable $\ol{x}$ 
in the rule
occurs, \e{as a string
	term}, in at least one relational atom 
(over $\E$ or $\I$) in the rule. 

We extend $\sp$ with \e{stratified negation}
in the usual way: the set of relation symbols in
$\E\cup\I$ is partitioned into \e{strata} $\I_0,\I_1,\dots,\I_m$ such
that $\I_0=\E$, the body of each rule contains only relation symbols
from strata that precede or the same as that of the head, and negated
atoms in the body are from strata that strictly precede that of the
head.  In this case, \emph{safe} rules are those for which every head
variable occurs at least once in a \e{positive} atom in the body of
the rule and every string variable $\ol{x}$ 
in the rule
occurs, as a string term,
in at least one \e{positive} relational atom 
(over $\E$ or $\I$) in the rule. 

The semantics of an $\sp$ program (with stratified negation) is
similar to the semantics of $\rgxlang$ programs (with the standard
interpretation of stratified negation in Datalog).  Given a mixed
instance $D$ over $\E$, the $\sp$ program $P=\angs{\E,\I,\Phi,\Rout}$
computes the mixed instance $P(D)$ over $\I$ and emits the mixed
relation $\Rout$ of $P(D)$. A query $Q$ over $\E$ is \e{definable} in
$\sp$ if there exists an $\sp$ program $P= \angs{\E,\I,\Phi,\Rout}$
such that $\Rout^{P(D)} = Q(D)$ for all mixed instances $D$ over $\E$.

\begin{figure}[t]
	\centering
	\captionsetup[subfigure]{justification=centering}
	\hskip-0.6em
	\begin{subfigure}{4.3in}
		\small\centering
		$\rel{Geneo}$:\\
		\begin{tabular}{|l|}
			\hline
			\s{Cain}\visiblespace \s{son}\visiblespace \s{of} \visiblespace \s{Adam,}\visiblespace
			\s{Abel}\visiblespace\s{son}\visiblespace \s{of}\visiblespace \s{Adam,}\visiblespace \s{Enoch}\visiblespace \s{son}\visiblespace \s{of}\visiblespace \s{Cain,}\visiblespace \s{Irad}\visiblespace \s{...} \\
			\hline
			\s{Obed}\visiblespace \s{son}\visiblespace \s{of}\visiblespace \s{Ruth,}\visiblespace \s{Obed}\visiblespace \s{son}\visiblespace \s{of}\visiblespace \s{Boaz,}\visiblespace \s{Jesse}\visiblespace \s{son}\visiblespace \s{of}\visiblespace \s{Obed,}\visiblespace 
			\s{David}\visiblespace 
			\s{...}\\
			\hline
		\end{tabular}
	\end{subfigure}
	\caption{\label{fig:splog} The input for the program in Example~\ref{ex:splog}}
\end{figure}

\begin{example}
	\label{ex:splog}
	\def\cmsp{\,,\,}
	Following is an $\sp$ program 
	over the mixed signature of the instance of Figure~\ref{fig:splog}.
	As usual, $\Rout$ is the relation symbol in the head of the last rule, here
	$\rel{NotRltv}$.
	{\small
		\begin{align*}
		\rel{Ancstr}(\bar{x},y,\bar{x}, z) \dla&
		\rel{Geneo}(\bar{x}) \cmsp
		\angs{\bar{x}} \rgxname {prnt}(y,z)
		\\
		\rel{Ancstr}(\bar{w},y,\bar{x}, z) \dla & 
		\rel{Ancstr}(\bar{w},y,\bar{v}, y^\prime)\cmsp
		\rel{Geneo}(\bar{x})\cmsp
		\angs{\bar{x}} \rgxname {prnt}({z^\prime},z)\cmsp
		\rel{StrEq}(\bar{x}_{z^\prime}, \bar{v}_{y^\prime})
		\\
		\rel{Rltv}(\bar{w},y,\bar{x},z) \dla &
		\rel{Ancstr}(\bar{v},y^\prime,\bar{w},y) \cmsp
		\rel{Ancstr}(\bar{u},z^\prime,\bar{x},z,) \cmsp
		\rel{StrEq}(\bar{v}_{y^\prime}, \bar{u}_{z^\prime})
		\\
		\rel{NotRltv}(\bar{w},y,\bar{x},z) \dla &
		\rel{Geneo}(\bar{w})\cmsp
		\angs{\bar{w}} \rgxname {prsn}(y)\cmsp
		\rel{Geneo}(\bar{x})\cmsp
		\angs{\bar{x}} \rgxname {prsn}(z),\,
		\neg \rel{Rltv}(\bar{w},y,\bar{x},z)
		\end{align*}
	}
	The relation
	$\rel{Geneo}$ in Figure~\ref{fig:splog} contains strings that describe (partial) family trees. We assume
	for simplicity that every name that occur in such string is a unique identifier.
	The regex formulas $\rgxname{prsn}(x)$ and $\rgxname{prnt}(y,z)$ are the same as $\rgxname{cap}(x)$ and $\rgxname{prnt}(y,z)$ defined
	in  Example~\ref{ex:rgx}, respectively.              
	The first two rules of the program extract the relation
	$\rel{Ancstr}$ that has four attributes: the first and third are string attributes and the second and fourth are span attributes. 
	The first (respectively, third) attribute is the ``context'' string of the second (respectively, fourth) span attribute. 
	Observe the similarity to the corresponding
	definition in Example~\ref{ex:ancestor}. Here, unlike
	Example~\ref{ex:ancestor}, we need also to save the
	context string of each of the spans, and hence, we need two
	additional attributes.  
	The third rule uses the relation
	$\rel{StrEq}$ that holds tuples $(\bar{w},y,\bar{x},z)$
	such that the subtrings $\bar{w}_y$ and $\bar{x}_z$ are
	equal. This relation can be expressed in $\sp$ similarly to
	$\rgxlang$, as described in Section~\ref{sec:core}.  
	
	After evaluating the program, the relation $\rel{Ancstr}$
	holds tuples $(\bar{w}, y, \bar{x}, z)$ such that $\bar{w}_y$ is
	an ancestor of $\bar{x}_{z}$. 
	The relation
	$\rel{Rltv}$ holds tuples $(\bar{w}, y, \bar{x}, z)$ such that according to the information stored in $\rel{Geneo}$, 
	$\bar{w}_y$ is a relative of $\bar{x}_{z}$ (i.e., they
	share a common ancestor). 
	The relation $\rel{NonRltv}$ holds
	tuples $(\bar{w}, y, \bar{x}, z)$ such that $\bar{w}_y$ is \e{not} a
	relative of $\bar{x}_{z}$.\qed
\end{example}

\subsection{Equivalence to Polynomial Time}
\newcommand{\D}{\mathcal{D}}
\newcommand{\strfun}[1]{\textsf{str}^+(#1)}
\newcommand{\spnfun}[1]{\textsf{spn}^+(#1)}

Let $\E$ be a span-free signature, and $D$ an instance over
$\E$.  We define the \e{extended active domain} of $D$, in notation
$\adom^+(D)$, to be the union of the following two sets:
\e{(a)} the set of all strings that appear in $D$, as well as all of
their substrings;
\e{and (b)} the set of all spans of strings of
$D$.

Note that for every query $Q$ definable as an $\sp$ program $P=
\angs{\E,\I,\Phi,\Rout}$, and every input database $D$ over $\E$, we
have $\adom(Q(D)) \subseteq \adom^+(D)$, that is, every output string
is a substring of some string in $D$, and every output span is a span
of some string in $D$. Our result in this section states that, under
this condition, we can express in $\sp$ with stratified negation every
query $Q$, as long as $Q$ is computable in polynomial time.

\def\splogPtime { Let $Q$ be a query over a span-free signature $\E$,
	with the property that
	$\adom(Q(D)) \subseteq \adom^{+}(D)$ for all
	instances $D$ over $\E$. 
	The following are equivalent:
	\begin{enumerate}
		\item $Q$ is computable in polynomial time.
		\item $Q$ is computable in $\sp$ with stratified negation.                      
	\end{enumerate} 
}
\begin{theorem}\label{thm:ptime-sp-decision}
	\splogPtime
\end{theorem}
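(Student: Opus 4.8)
The plan is to mirror the two-step architecture of the proof of Theorem~\ref{thm:ptime-generating}, the genuinely new ingredient being that we must now encode an entire \emph{database of strings} by an ordered standard relational structure in a way that remains reconstructible inside $\sp$. The direction (2)$\Rightarrow$(1) is easy. Given an $\sp$ program with stratified negation and a mixed instance $D$ over $\E$, observe first that $\adom^+(D)$ has size polynomial in the size of $D$: a string of length $n$ has $O(n^2)$ distinct substrings and $O(n^2)$ spans, and $n$ is at most the size of $D$. Hence, exactly as in Section~\ref{sec:PTIME}, we evaluate the program in two steps: (i) materialize each regex atom $\angs{\tau}[\gamma]$ by running $\gamma$ on each of the polynomially many substrings that the string term $\tau$ may denote, which takes polynomial time by~\cite{DBLP:journals/corr/FreydenbergerKP17}; and (ii) evaluate the resulting program as an ordinary stratified Datalog program over the standard database thereby obtained, whose active domain lies inside $\adom^+(D)$, which takes polynomial time by~\cite{DBLP:books/aw/AbiteboulHV95}. (Incidentally, the safety conditions on $\sp$ rules force $\adom(\Rout^{P(D)})\subseteq\adom^+(D)$, so the containment hypothesis of the theorem is also \emph{necessary} for $\sp$-definability.)

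For the direction (1)$\Rightarrow$(2), let $Q$ be polynomial-time with $\adom(Q(D))\subseteq\adom^+(D)$ for all $D$. I would first fix a standard \emph{encoding} of mixed instances over $\E$: the structure $\mathsf{Enc}(D)$, over a fixed ordered signature $\sigordplus$, has active domain $\adom^+(D)$ and carries the unary predicates $R_\sigma$ together with a per-string successor relation describing the symbol sequence of each string in $\adom^+(D)$ (exactly as the single-string encoding $D_{\strs}$ of Section~\ref{sec:PTIME}), the information needed to recover the two coordinates of each span-element, a copy of the relations of $D$, and a global linear order $\succrel$ with endpoints $\first,\last$ --- obtained, say, by listing all substrings first in shortlex order and then all spans in the lexicographic span order of Comment~\ref{com:ordsp}. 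Call an instance over $\sigordplus$ an \emph{encoding} if it is isomorphic to $\mathsf{Enc}(D)$ for some $D$. As in Lemma~\ref{lem:DEncodesS}, one checks that whether an instance is an encoding is decidable in polynomial time, and that an encoding, being rigid because it carries a linear order, determines the mixed instance $D$ it encodes, together with a polynomial-time decoding map. Consequently the query $\widehat Q$ over $\sigordplus$ given by $\widehat Q(\D)=$ the decoded image of $Q(D)$ if $\D$ is an encoding of $D$ and $\widehat Q(\D)=\emptyset$ otherwise, is polynomial-time and respects isomorphism; so Theorem~\ref{thm:papagen} produces a $\semiposdatalog$ program $P'$ over $\sigordplus$ with $P'(\D)=\widehat Q(\D)$ for every instance $\D$.

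It then remains to simulate $P'$ by a stratified $\sp$ program. I would use three strata. Stratum $\I_0$ is $\E$. Stratum $\I_1$ builds $\mathsf{Enc}(D)$ from the input $D$ using regex atoms: the per-string predicates $R_\sigma$, $\succrel$, $\first$, $\last$ are produced just as by the rules displayed in Section~\ref{sec:PTIME}, the crucial point being that the string term $\ol{x}_y$ denotes the substring of $\ol{x}$ at the span $y$ and may be stored in a string attribute, so that \emph{all} substrings of $D$-strings become first-class values; the span-elements and the global order are then assembled by positive rules, the latter from a string-successor, a span-successor, and the single ``bridge'' pair (each of these relations having a fixed attribute typing, which is precisely why the split is forced). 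Stratum $\I_2$ runs the rules of $P'$, with every (possibly negated) EDB atom of $P'$ read as an atom over the corresponding stratum-$\I_1$ relation; stratification holds precisely because $P'$ is semipositive, so its negations concern only encoding relations. A final projection rule decodes $\Rout^{P'}$ back into strings (again via $\ol{x}_y$) and spans, using decoding relations produced in stratum $\I_1$. Safety of all the rules is routine.

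The main obstacle is the design of the encoding and its realizability inside $\sp$. One must simultaneously keep the encoded domain of polynomial size while including a \emph{decodable} copy of every substring and every span of $D$; make the encoded structure rigid, so that both Papadimitriou's theorem applies and the decoding is unambiguous; and verify that this heterogeneous domain --- strings, their substrings, and abstract span pairs --- together with a single linear order on it can be generated by \emph{positive}, \emph{safe} $\sp$ rules from regex atoms. The last requirement forces the linear order to be broken into its typed pieces (a string part, a span part, and the bridging tuple) and reassembled, and forces one to bridge the gap between the context-relative notion of a span in $\sp$ and the absolute representation demanded by a standard relational structure. Once the encoding is in place, everything else is a direct adaptation of the argument of Section~\ref{sec:PTIME}.
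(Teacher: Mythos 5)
Your overall architecture is the same as the paper's: encode $D$ as an ordered instance over an extended signature whose domain is $\adom^+(D)$, define the isomorphism-respecting query $\widehat Q$, apply Theorem~\ref{thm:papagen} to get a $\semiposdatalog$ program, and then simulate that program in $\sp$, building the encoding inside $\sp$ in an earlier stratum. There is, however, a genuine flaw in your stratum $\I_1$: you claim the encoding, including the global successor relation $\succrel$ with endpoints $\first,\last$, can be generated by \emph{positive} safe $\sp$ rules from regex atoms. It cannot. Positive $\sp$ rules define queries that are monotone in the input instance, whereas ``$v$ is the immediate successor of $u$ among all substrings of strings of $D$'' is non-monotone: adding a string to $D$ can insert a new substring between $u$ and $v$ and destroy the successor fact. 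The same problem affects the last string (needed for your ``bridge'' pair), the last span, and even the span-successor itself once $D$ contains strings of different lengths (the per-string rules of Section~\ref{sec:PTIME} would wrongly declare $(\mspan{i}{m+1},\mspan{i+1}{i+1})$ a successor pair for a non-maximal length $m$). The paper's Lemma~\ref{lem:constructwithSP} handles exactly this by first defining \emph{strict} total orders positively (using $\rel{StrEq}$ and regex atoms) and then extracting the successor relations and the relations $\first$, $\last$ via stratified negation (auxiliary $\rel{NotSucc}$/$\rel{NotLast}$ predicates). Your plan is repairable --- stratified negation is available, so you just need extra strata for the order construction, not only for the negated EDB atoms of the semipositive program --- but as written the key step fails, and the theorem's allowance of negation is needed in two places, not one.

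A secondary under-specified point: you say stratum $\I_2$ ``runs the rules of $P'$,'' but $P'$ is a program over an \emph{untyped} signature, while every $\sp$ atom must have a fixed string/span typing of its attributes and every variable must be consistently typed. The paper devotes Lemma~\ref{lem:Datalog_to_SP} to this: each IDB symbol is replaced by all of its typed versions, each rule by all of its typed instantiations, and type-inconsistent rules are discarded, with the original relation read as the union of its typed versions; one then checks this preserves the semantics on (mixed) encodings. You gesture at the typing obstacle in your closing paragraph but do not resolve it; some such construction is required before stratum $\I_2$ makes sense as an $\sp$ program. The (2)$\Rightarrow$(1) direction of your proposal is fine and matches the paper.
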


We remark that Theorem~\ref{thm:ptime-sp-decision} can be extended to
general mixed signatures $\E$ if we assume that every span mentioned
in the input database $D$ is within the boundary of some string in
$D$. Moreover, Theorem~\ref{thm:ptime-sp-decision} is not 
correct without stratified negation, and this can be shown using
standard arguments of monotonicity.

\partitle{Proof idea}
\label{sec:ProofExt}
We now discuss the proof idea of Theorem~\ref{thm:ptime-sp-decision}.
The full proof is in the appendix. 
The direction $2\ra 1$ is straightforward, so we
discuss only
the direction $1\ra
2$.  Let $Q$ be a query over a span-free signature $\E$, with the
property that $\adom(Q(D)) \subseteq \adom^{+}(D)$ for all instances
$D$ over $\E$. Assume that $Q$ is computable in polynomial time. 
We need to construct an $\sp$ program $P$ with stratified negation for
computing $Q$.  
We do so
in two steps.  In the first step, we
apply Theorem~\ref{thm:ptime-generating} to get a (standard)
$\semiposdatalog$ program $P'$ that simulates $Q$.  
Yet,
$P'$ does not necessarily respect the \e{typing}
conditions of $\sp$ with respect to the two types \e{string} and
\e{span}. So, in the second step, we transform $P'$ to an $\sp$
program $P$ as desired. Next, we discuss each step in more detail.

\partitle{First step}
In order to produce the $\semiposdatalog$ program $P'$, some
adaptation is required to apply
Theorem~\ref{thm:ptime-generating}. First, we need to deal with the
fact that the output of $Q$ may include values that are not in the
active domain of the input (namely, spans and substrings).  Second, we
need to establish a linear order over the active domain. Third, we
need to assure that the query that Theorem~\ref{thm:ptime-generating}
is applied to respects isomorphism. To solve the first problem, we
extend the input database $D$ with relations that contain every
substring and every span of every string in $D$. This can be done
using $\sp$ rules with regex atoms. For the second problem, we
construct a linear order over the domain of all substrings and spans
of strings of $D$, again using $\sp$ rules. For this part, stratified
negation is needed.  For the third problem, we show how our extended
input database allows
us
to restore $D$ even if all values (strings and
spans) are replaced with other values by applying an injective
mapping.

\partitle{Second step}
In order to transform $P'$ into a ``legal'' $\sp$ program $P$ that
obeys the typing of attributes and variables, we do the
following. First, we replace every IDB relation symbol $R$ with every
possible \e{typed} version of $R$ by assigning types to
attributes. Semantically, we view the original $R$ as the union of all
of its typed versions. Second, we replace every rule with every typed
version of the rule by replacing relation symbols with their typed
versions. Third, we eliminate rules that treat one or more variable
inconsistently, that is, the same variable is treated once as a string
variable and once as a span variable.  Fourth, we prove that this
replacement preserves the semantics of the program.

\section{Conclusions}\label{sec:conclusions}
We studied $\rgxlang$, namely, Datalog over regex formulas. We proved
that this language expresses precisely the spanners that are
computable in polynomial time. $\rgxlang$ is more expressive than the
previously studied language of core spanners and, as we showed here,
more expressive than even
the language of generalized core spanners. We also observed that it
takes very simple binary regex formulas to capture the entire
expressive power. Unary regex formulas, on the other hand, do not 
suffice:  in the Boolean case, they recognize precisely the regular languages, 
and in the non-Boolean case, they produce a strict subset of the regular spanners. 
Finally, we extended the equivalence result to
$\sp$ with stratified negation over mixed instances, a model that
generalizes both the relational model and the document spanners.

The remarkable expressive power of $\rgxlang$ is somewhat mysterious,
since we do not yet have a good understanding of how to phrase some
simple polynomial-time programs \e{naturally} in $\rgxlang$. The
constructive proof simulates the corresponding polynomial-time Turing
machine, and does not lend itself to program clarity.  For instance,
is there a natural program for computing the \e{complement} of the
transitive closure of a  binary relation encoded by the 
input?
In future work we plan to investigate this aspect by studying the
complexity of translating simple formalisms, such as generalized core
spanners, into $\rgxlang$.

{\bibliographystyle{abbrv}
\bibliography{main}}

\begin{thebibliography}{10}

\bibitem{DBLP:books/aw/AbiteboulHV95}
S.~Abiteboul, R.~Hull, and V.~Vianu.
\newblock {\em Foundations of Databases}.
\newblock Addison-Wesley, 1995.

\bibitem{DBLP:conf/www/AjmeraANVCDD13}
J.~Ajmera, H.-I. Ahn, M.~Nagarajan, A.~Verma, D.~Contractor, S.~Dill, and
  M.~Denesuk.
\newblock A {CRM} system for social media: challenges and experiences.
\newblock In {\em WWW}, pages 49--58. ACM, 2013.

\bibitem{DBLP:conf/acl/BensonHB11}
E.~Benson, A.~Haghighi, and R.~Barzilay.
\newblock Event discovery in social media feeds.
\newblock In {\em ACL}, pages 389--398. The Association for Computer
  Linguistics, 2011.

\bibitem{boullier2004range}
P.~Boullier.
\newblock Range concatenation grammars.
\newblock In {\em New developments in parsing technology}, pages 269--289.
  Springer, 2004.

\bibitem{DBLP:journals/csur/DantsinEGV01}
E.~Dantsin, T.~Eiter, G.~Gottlob, and A.~Voronkov.
\newblock Complexity and expressive power of logic programming.
\newblock {\em {ACM} Comput. Surv.}, 33(3):374--425, 2001.

\bibitem{Enderton:2001}
H.~B. Enderton.
\newblock {\em A mathematical introduction to logic}.
\newblock Academic Press, New York, 2001.

\bibitem{DBLP:journals/jacm/FaginKRV15}
R.~Fagin, B.~Kimelfeld, F.~Reiss, and S.~Vansummeren.
\newblock Document spanners: {A} formal approach to information extraction.
\newblock {\em J. {ACM}}, 62(2):12, 2015.

\bibitem{DBLP:journals/tods/FaginKRV16}
R.~Fagin, B.~Kimelfeld, F.~Reiss, and S.~Vansummeren.
\newblock Declarative cleaning of inconsistencies in information extraction.
\newblock {\em {ACM} Trans. Database Syst.}, 41(1):6:1--6:44, 2016.

\bibitem{DBLP:conf/icdt/Freydenberger17}
D.~D. Freydenberger.
\newblock A logic for document spanners.
\newblock In {\em 20th International Conference on Database Theory, {ICDT}
  2017, March 21-24, 2017, Venice, Italy}, volume~68 of {\em LIPIcs}, pages
  13:1--13:18. Schloss Dagstuhl - Leibniz-Zentrum fuer Informatik, 2017.

\bibitem{freydenberger_et_al:LIPIcs:2016:5786}
D.~D. Freydenberger and M.~Holldack.
\newblock {Document Spanners: From Expressive Power to Decision Problems}.
\newblock In {\em ICDT}, volume~48 of {\em Leibniz International Proceedings in
  Informatics (LIPIcs)}, pages 17:1--17:17. Schloss Dagstuhl--Leibniz-Zentrum
  fuer Informatik, 2016.

\bibitem{DBLP:journals/corr/FreydenbergerKP17}
D.~D. Freydenberger, B.~Kimelfeld, and L.~Peterfreund.
\newblock Joining extractions of regular expressions.
\newblock {\em CoRR}, abs/1703.10350, 2017.

\bibitem{DBLP:conf/icdm/FuLWL09}
Q.~Fu, J.-G. Lou, Y.~Wang, and J.~Li.
\newblock Execution anomaly detection in distributed systems through
  unstructured log analysis.
\newblock In {\em ICDM}, pages 149--158. IEEE Computer Society, 2009.

\bibitem{ginsburg1966semigroups}
S.~Ginsburg and E.~Spanier.
\newblock Semigroups, presburger formulas, and languages.
\newblock {\em Pacific journal of Mathematics}, 16(2):285--296, 1966.

\bibitem{DBLP:journals/jacm/HalevyMSS01}
A.~Y. Halevy, I.~S. Mumick, Y.~Sagiv, and O.~Shmueli.
\newblock Static analysis in {D}atalog extensions.
\newblock {\em J. {ACM}}, 48(5):971--1012, 2001.

\bibitem{DBLP:journals/ai/HoffartSBW13}
J.~Hoffart, F.~M. Suchanek, K.~Berberich, and G.~Weikum.
\newblock {YAGO2:} {A} spatially and temporally enhanced knowledge base from
  {W}ikipedia.
\newblock {\em Artif. Intell.}, 194:28--61, 2013.

\bibitem{DBLP:journals/tcs/King88}
K.~N. King.
\newblock Alternating multihead finite automata.
\newblock {\em Theor. Comput. Sci.}, 61:149--174, 1988.

\bibitem{Leary:1999:FIM:519254}
C.~C. Leary.
\newblock {\em A Friendly Introduction to Mathematical Logic}.
\newblock Prentice Hall PTR, Upper Saddle River, NJ, USA, 1st edition, 1999.

\bibitem{DBLP:conf/pods/LevyMSS93}
A.~Y. Levy, I.~S. Mumick, Y.~Sagiv, and O.~Shmueli.
\newblock Equivalence, query-reachability, and satisfiability in datalog
  extensions.
\newblock In {\em Proceedings of the Twelfth {ACM} {SIGACT-SIGMOD-SIGART}
  Symposium on Principles of Database Systems, May 25-28, 1993, Washington, DC,
  {USA}}, pages 109--122, 1993.

\bibitem{DBLP:conf/acl/LevyM04}
R.~Levy and C.~D. Manning.
\newblock Deep dependencies from context-free statistical parsers: Correcting
  the surface dependency approximation.
\newblock In {\em Proceedings of the 42nd Annual Meeting of the Association for
  Computational Linguistics, 21-26 July, 2004, Barcelona, Spain.}, pages
  327--334. {ACL}, 2004.

\bibitem{DBLP:conf/acl/LiRC11}
Y.~Li, F.~Reiss, and L.~Chiticariu.
\newblock {SystemT}: {A} declarative information extraction system.
\newblock In {\em ACL}, pages 109--114. ACL, 2011.

\bibitem{DBLP:conf/webdb/NahshonPV16}
Y.~Nahshon, L.~Peterfreund, and S.~Vansummeren.
\newblock Incorporating information extraction in the relational database
  model.
\newblock In {\em WebDB}, page~6. {ACM}, 2016.

\bibitem{DBLP:journals/eatcs/Papadimitriou85}
C.~H. Papadimitriou.
\newblock A note on the expressive power of {P}rolog.
\newblock {\em Bulletin of the {EATCS}}, 26:21--22, 1985.

\bibitem{parikh1966context}
R.~J. Parikh.
\newblock On context-free languages.
\newblock {\em Journal of the ACM (JACM)}, 13(4):570--581, 1966.

\bibitem{Presburger29}
M.~Presburger.
\newblock {\mbox{\"{U}ber} die \mbox{Vollst\"andigkeit} eines gewissen Systems
  der Arithmetik ganzer Zahlen, in welchem die Addition als einzige Operation
  hervortritt}.
\newblock In {\em Comptes Rendus du Premier Congr\`es des Math\'ematiciens des
  Pays Slaves}, pages 92--101, Warszawa, 1929.

\bibitem{DBLP:journals/sigmod/SaRR0WWZ16}
C.~D. Sa, A.~Ratner, C.~R{\'{e}}, J.~Shin, F.~Wang, S.~Wu, and C.~Zhang.
\newblock Deepdive: Declarative knowledge base construction.
\newblock {\em {SIGMOD} Record}, 45(1):60--67, 2016.

\bibitem{DBLP:conf/vldb/ShenDNR07}
W.~Shen, A.~Doan, J.~F. Naughton, and R.~Ramakrishnan.
\newblock Declarative information extraction using {D}atalog with embedded
  extraction predicates.
\newblock In {\em {VLDB}}, pages 1033--1044, 2007.

\bibitem{DBLP:journals/pvldb/ShinWWSZR15}
J.~Shin, S.~Wu, F.~Wang, C.~D. Sa, C.~Zhang, and C.~R{\'{e}}.
\newblock Incremental knowledge base construction using deepdive.
\newblock {\em {PVLDB}}, 8(11):1310--1321, 2015.

\bibitem{DBLP:conf/www/SuchanekKW07}
F.~M. Suchanek, G.~Kasneci, and G.~Weikum.
\newblock {YAGO}: {A} core of semantic knowledge.
\newblock In {\em WWW}, pages 697--706. {ACM}, 2007.

\bibitem{DBLP:journals/jamia/XuSDJWD10}
H.~Xu, S.~P. Stenner, S.~Doan, K.~B. Johnson, L.~R. Waitman, and J.~C. Denny.
\newblock {MedEx}: a medication information extraction system for clinical
  narratives.
\newblock {\em JAMIA}, 17(1):19--24, 2010.

\bibitem{DBLP:conf/naacl/YatesBBCES07}
A.~Yates, M.~Banko, M.~Broadhead, M.~J. Cafarella, O.~Etzioni, and
  S.~Soderland.
\newblock {TextRunner}: Open information extraction on the web.
\newblock In {\em ACL-HLT}, pages 25--26. ACL, 2007.

\bibitem{DBLP:conf/www/ZhuRVL07}
H.~Zhu, S.~Raghavan, S.~Vaithyanathan, and A.~L{\"o}ser.
\newblock Navigating the intranet with high precision.
\newblock In {\em WWW}, pages 491--500. ACM, 2007.

\end{thebibliography}

\appendix
\def\pagenref{\ref{thm:pa-gen-core}}
\section{Proof of Theorem~\pagenref}
In this section, we prove Theorem~\ref{thm:pa-gen-core}.

\begin{reptheorem}{\ref{thm:pa-gen-core}}
	\thmpagencore
\end{reptheorem}

We begin with the ``only if'' direction, which is the more involved
direction.
This is the direction we are most interested in, since it gives us Theorem~\ref{thm:pow2}.

\subsection{The ``Only If'' Direction}
Let $\gamma$ be an expression in
$\rgxc\alg{\cup,\pi,\join,\sel^=,\setminus}$. We say that $\gamma$ is
\e{positional} if for every expression $x\set{\delta}$ that occurs in
$\gamma$
it is the case that $\delta=\epsilon$. Therefore, all span
variables are assigned empty spans, and hence, represent positions in
the input string $\strs$. Observe that if $\gamma$ is positional, then
$\sel^=$ is redundant since every two spans 
referenced by $\gamma$
have the same string. To
provide positional expressions with the needed expressive power, we
also add the selection $\sel^=_-$ that takes as input four spans
$x_1$, $x_2$, $y_1$ and $y_2$ and returns
$\true$
if 
$x_1$ precedes $x_2$, and $y_1$ precedes $y_2$, and
the string between
$x_1$ and $x_2$ is equal to the string between $y_1$ and $y_2$.

Each expression $\delta$ in  $\rgxc\alg{\cup,\pi,\join,\sel^=_-,\setminus}$  is built out of applying operators 
to regex formulas: we call these regex formula \e{components} of $\delta$.
A direct consequence of the translation of regex formulas into \e{path
	unions}~\cite{DBLP:journals/jacm/FaginKRV15}
shows the following.
\begin{lemma}\label{lemma:gen-core-positional}
	For every Boolean $\gamma$ in
	$\rgxc\alg{\cup,\pi,\join,\sel^=,\setminus}$ there exists a
	positional Boolean $\delta$ in
	$\rgxc\alg{\cup,\pi,\join,\sel^=_-,\setminus}$ such that all of the
	following hold.
	\begin{enumerate}
		\item Each 
		component of
		$\delta$ is of the form $[x\set{}.^*]$, or
		$\angs{x\set{}\alpha y\set{}}$, or $[.^*x\set{}]$, where $\alpha$ is
		a (variable-free) regular expression.
		\item $\rep{\delta}=\rep{\gamma}$; that is, $\delta$ and $\gamma$
		accept the same strings.
	\end{enumerate}
\end{lemma}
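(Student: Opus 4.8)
I would prove, by structural induction on $\gamma$, the following strengthening that also covers non-Boolean expressions: for every $\gamma$ in $\rgxc\alg{\cup,\pi,\join,\sel^=,\setminus}$ there is a $\delta$ in $\rgxc\alg{\cup,\pi,\join,\sel^=_-,\setminus}$, all of whose regex-formula components have one of the three stated forms, such that $\vars(\delta)=\set{x^{\vdash},x^{\dashv}\mid x\in\vars(\gamma)}$ and, for every string $\strs$, the map sending a record $r$ to the record $\hat r$ with $\hat r(x^{\vdash})=\mspan ii$ and $\hat r(x^{\dashv})=\mspan jj$ whenever $r(x)=\mspan ij$ is a bijection from $\rep{\gamma}(\strs)$ onto $\rep{\delta}(\strs)$. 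Since $\hat r$ assigns only empty spans and satisfies $i\le j$, this already forces $\delta$ to be positional; specializing to a Boolean $\gamma$, where $\vars(\gamma)=\emptyset$, yields a Boolean $\delta$ with $\rep{\delta}=\rep{\gamma}$, which is the lemma.

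For the base case $\gamma$ is a single regex formula $\rho$ over $x_1,\dots,x_k$. Here I would invoke the path-union construction of Fagin et al.~\cite{DBLP:journals/jacm/FaginKRV15}, which writes $\rep{\rho}$ as a finite union $\bigcup_{\ell}\rep{\rho_\ell}$ where each $\rho_\ell$ corresponds to a single path and hence, after grouping maximal runs of letter-reads, has the linear shape $\alpha_0\,\mu_1\,\alpha_1\cdots\mu_{2k}\,\alpha_{2k}$: the $\alpha_t$ are variable-free regular expressions and each $\mu_t$ is a single operation opening or closing some $x_i$, with $\vars(\rho_\ell)=\vars(\rho)$ and, by functionality, every $x_i$ opened at some index $\mathrm{op}(i)$ before it is closed at index $\mathrm{cl}(i)$. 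I would translate $\rho_\ell$ into
\[[z_0\set{}.^*]\join\angs{z_0\set{}\alpha_0 z_1\set{}}\join\angs{z_1\set{}\alpha_1 z_2\set{}}\join\cdots\join\angs{z_{2k}\set{}\alpha_{2k} z_{2k+1}\set{}}\join[.^*z_{2k+1}\set{}],\]
then rename $z_{\mathrm{op}(i)}$ to $x_i^{\vdash}$ and $z_{\mathrm{cl}(i)}$ to $x_i^{\dashv}$, and project away $z_0$ and $z_{2k+1}$; the translation of $\rho$ is the union of these over $\ell$. The key verification is that, although a conjunct $\angs{z_t\set{}\alpha_t z_{t+1}\set{}}$ on its own merely asserts the existence of positions $z_t\le z_{t+1}$ with the substring of $\strs$ between them in $L(\alpha_t)$, the two endpoint conjuncts pin $z_0$ to $\mspan 11$ and $z_{2k+1}$ to $\mspan{n+1}{n+1}$, and the joins then force $\mspan 11=z_0\le z_1\le\dots\le z_{2k+1}=\mspan{n+1}{n+1}$ with consecutive gaps spelling a factorization of $\strs$ into words from $L(\alpha_0),\dots,L(\alpha_{2k})$ --- which is exactly the condition for $\rho_\ell$ to accept $\strs$ with $x_i\mapsto\mspan{z_{\mathrm{op}(i)}}{z_{\mathrm{cl}(i)}}$. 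Hence the translation of $\rho$ computes $\set{\hat r\mid r\in\rep{\rho}(\strs)}$, and $\mathrm{op}(i)$ preceding $\mathrm{cl}(i)$ gives $x_i^{\vdash}$ preceding $x_i^{\dashv}$ as needed (degenerate $\alpha_t=\epsilon$ simply force $z_t=z_{t+1}$).

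For the inductive step, $\cup$, $\pi$ (projecting onto $\set{x^{\vdash},x^{\dashv}\mid x\in Y}$), $\join$, and $\setminus$ are translated by applying the same operator to the already-translated operands; union-compatibility and the record bijection are preserved because every variable of a translated expression is always bound to an empty span. A selection $\sel^=_{x,y}$ becomes $\sel^=_-$ on $(x^{\vdash},x^{\dashv},y^{\vdash},y^{\dashv})$: on the translated side $x^{\vdash}$ precedes $x^{\dashv}$ and $y^{\vdash}$ precedes $y^{\dashv}$ always hold, and ``the string between $x^{\vdash}$ and $x^{\dashv}$'' is precisely $\strs_{r(x)}$, so $\sel^=_-$ retains exactly the records whose originals are retained by $\sel^=_{x,y}$. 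The main obstacle is the base case: extracting the exact linear normal form from the path-union construction (including that every path of a functional regex formula opens and closes all variables), and verifying carefully that the free-floating positional components are uniquely pinned down by the join-chain, covering the degenerate cases ($\alpha_t=\epsilon$, variables with empty spans, and empty-string instances of $\sel^=_-$). The inductive step and the final specialization to Boolean $\gamma$ are then routine bookkeeping.
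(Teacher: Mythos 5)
Your proposal is correct and follows essentially the same route as the paper, which dispatches the lemma as a direct consequence of the path-union translation of Fagin et al.: split each capture variable into two positional endpoint variables, translate each path into a join chain of components of the three allowed forms anchored at the first and last positions, and push the construction through $\cup,\pi,\join,\setminus$, with $\sel^=$ becoming $\sel^=_-$ on the endpoint pairs. Your write-up merely fills in the inductive bookkeeping (the record bijection via the endpoint-splitting map and the verification that the anchored join chain enforces the factorization) that the paper leaves implicit.
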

Note that in the first condition of
Lemma~\ref{lemma:gen-core-positional}, the regex formula $[x\set{}.^*]$
states that $x$ is the first position, $[.^*x\set{}]$ states that $x$
is the last position, and $\angs{x\set{}\alpha y\set{}}$ states that
the string between $x$ and $y$ satisfies the regular expresion $\alpha$.

By a slight abuse of notation, we view a variable $x$ in a positional
formula $\gamma$ as a natural number that represents its location.
For example, if $x$ is assigned the span $\mspan{5}{5}$ then we view
$x$ simply as $5$.  Our central lemma is the following.

\begin{lemma}\label{lemma:balder}
	Let $\delta$ be a positional expression as in
	Lemma~\ref{lemma:gen-core-positional}. Then $\delta$ is equivalent
	to a disjunction of formulas of the form
	$\omega(x_1,\dots,x_n)\land \varphi(z_0,\dots,z_n)$
	where:
	\begin{enumerate}
		\item $\omega(x_1,\dots,x_n)$ specifies a total order $x_{i_1}\leq
		x_{i_2}\leq\dots\leq x_{i_n}$ over $x_1,\dots,x_n$ (viewed as
		numeric positions).
		\item $\varphi(z_0,\dots,z_n)$ is a PA formula, where each $z_i$
		represents the length $i$th segment (among the $n+1$ segments) of
		$\strs$ as defined in $\omega(x_1,\dots,x_n)$.
	\end{enumerate}
\end{lemma}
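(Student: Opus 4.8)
The plan is to prove, by induction on the structure of $\delta$, a statement refined by the \emph{order type} of the span variables, and then obtain the lemma by disjoining over order types. Fix the free span variables $x_1,\dots,x_n$ of $\delta$. A \emph{weak order} $\omega$ on $\set{x_1,\dots,x_n}$ is a total preorder; there are finitely many, and on an input $\a^m$ every record $r\colon\set{x_1,\dots,x_n}\to\set{1,\dots,m+1}$ realizes exactly one $\omega$, which cuts $\a^m$ into $n+1$ consecutive segments (empty ones allowed) of lengths $z_0,\dots,z_n$ with $z_0+\dots+z_n=m$. I will show: for every $\delta$ whose components have the three shapes of Lemma~\ref{lemma:gen-core-positional} and every weak order $\omega$ on $\vars(\delta)$, the set of records of $\rep{\delta}$ that realize $\omega$ is defined by $\omega\land\varphi_{\delta,\omega}$ for some PA formula $\varphi_{\delta,\omega}(z_0,\dots,z_n)$ (taking $\omega$ to already force $z_j=0$ whenever it equates the two variables flanking segment $j$). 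Disjoining over the finitely many $\omega$ then yields the claimed normal form.

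For the base case there are three shapes of components. The component $[x\set{}.^*]$ pins $x$ to the first position, i.e.\ (in the single-variable order) $z_0=0$; symmetrically $[.^*x\set{}]$ says $z_1=0$. For $\angs{x\set{}\alpha y\set{}}$, once $\omega$ has decided whether $x\le y$, the component is unsatisfiable if $\omega$ places $y<x$, and otherwise asserts that the segment strictly between $x$ and $y$ --- whose length is the sum of the intervening $z_i$'s --- has length in $\nat(L(\alpha))$. This is the one place where the restriction to the unary alphabet $\set{\a}$ matters: over $\set{\a}$ the length set $\nat(L(\alpha))$ of any regular expression $\alpha$ is ultimately periodic and hence definable in PA, so this is a PA condition on the $z_i$'s.

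The inductive step proceeds operator by operator; the point is that, once $\omega$ is fixed, each spanner operation becomes a logical connective on the $\varphi$'s. Union and difference both require union-compatible arguments, so the variable sets and segments coincide, and we get $\varphi_1\lor\varphi_2$ and $\varphi_1\land\lnot\varphi_2$ respectively; fixing $\omega$ is exactly what makes ``$r\notin\rep{\delta_2}$'' coincide, on records realizing $\omega$, with $\lnot\varphi_{2,\omega}$ rather than the negation of $\omega\land\varphi_{2,\omega}$. For a natural join $\delta_1\join\delta_2$, a weak order $\omega$ on $V_1\cup V_2$ restricts to weak orders on $V_1$ and $V_2$, each of whose segments is a block of consecutive $\omega$-segments; substituting the corresponding sums of $z_i$'s into $\varphi_1$ and $\varphi_2$ and conjoining gives $\varphi_{\delta,\omega}$. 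For $\sel^=_-$, over $\set{\a}$ the requirement ``the string between $x_1,x_2$ equals the string between $y_1,y_2$'' is just equality of the two corresponding sums of $z_i$'s --- a linear PA constraint to be conjoined with $\varphi_1$, the ``precedes'' conditions being absorbed into $\omega$ (possibly emptying the class). Finally, $\pi_Y\delta_1$ is handled by noting that a weak order $\omega_Y$ on $Y$ is realized iff some refining weak order $\omega$ on $\vars(\delta_1)$ is, with the variables of $\vars(\delta_1)\setminus Y$ subdividing the $\omega_Y$-segments; thus $\varphi_{\pi_Y\delta_1,\omega_Y}$ is the finite disjunction, over such $\omega$, of $\exists(\text{the new segment lengths})\,\bigl[\bigwedge(\text{each }\omega_Y\text{-length}=\text{sum of its }\omega\text{-parts})\land\varphi_{\delta_1,\omega}\bigr]$, which is again a PA formula by closure under $\exists$.

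The main obstacle I expect is the combinatorial bookkeeping in the projection and join cases: one must pin down, once and for all, the correspondence between the $n+1$ segments determined by a refined order and the blocks of consecutive segments they induce on a coarser order over a subset of the variables, and check that the linear ``sum'' equations linking the two layers are precisely what the translation needs. Once that is set up cleanly, every case reduces to closure of Presburger arithmetic under $\lor$, $\land$, $\lnot$, and $\exists$. The base-case fact that over a unary alphabet the length set of a regular language is ultimately periodic (hence PA-definable) is standard but should be recorded explicitly, as it is the only ingredient that uses $\Sigma=\set{\a}$.
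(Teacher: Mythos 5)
Your proposal is correct and takes essentially the same route as the paper's proof: structural induction on the algebra expression, representing the records by a total order on the variable positions together with a PA formula over the $n+1$ segment lengths, with the unary-alphabet Parikh/ultimate-periodicity fact supplying the base case for $\angs{x\set{}\alpha y\set{}}$ and closure of PA under $\lor,\land,\lnot,\exists$ handling union, join, projection, $\sel^=_-$, and difference. The only real difference is organizational: you fix the order type as an induction invariant so that difference becomes a plain complement of the PA formula within each order class, whereas the paper keeps a disjunction over total orders and handles negation via an explicit compatible/incompatible case analysis---the two bookkeeping schemes are interchangeable.
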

\begin{proof}
	We prove the 
	lemma
	by induction on the structure of $\delta$.
	We first handle the case where $\delta$ is 
	atomic, that is, one of the three forms of components in part~(1) of Lemma~\ref{lemma:gen-core-positional}.
	
	\begin{itemize}
		\item If $\delta$ is $[x\set{\epsilon}.^*]$, then it is equivalent to
		$\true \land z_0=0$.
		\item If $\delta$ is $\angs{x\set{ \epsilon}\alpha y\set{\epsilon}}$, then it is
		equivalent to
		$(x\leq y) \land \varphi(z_1)$ where $\varphi(z_1)$ is the PA
		formula stating that $z_1$ is a length of a string in $\a^*$
		satisfying
		$\alpha$. 
		It is known
		that such $\varphi$ exists,
		since $\alpha$ is a regular expression~\cite{ginsburg1966semigroups,parikh1966context}.
		\item If $\delta$ is $[.^*x\set{\epsilon}]$, then it is equivalent to
		$\true \land z_1=0$.
	\end{itemize}
	
	Next, we consider algebraic expressions and use the induction
	hypothesis. We assume that $\delta(x_1,\dots,x_n)$ is equivalent to
	$$\bigvee_{i=1}^k
	\omega_i(x_1,\dots,x_n)\land
	\varphi_i(z_{i,0},\dots,z_{i,n})
	$$
	and that $\delta'(x'_1,\dots,x'_\ell)$ is equivalent to
	$$\bigvee_{i=1}^\ell
	\omega'_i(x'_1,\dots,x'_\ell)\land
	\varphi'_i(z'_{i,0},\dots,z'_{i,\ell})\,.
	$$
	
	For $\delta_1\cup\delta_2$ we assume union compatibility, which means
	that $\set{x_1,\dots,x_n}=\set{x'_1,\dots,x'_\ell}$. Hence, we simply
	take the 
	disjunction of the two disjunctions.
	
	For $\pi_{y_1,\dots,y_q}(\delta)$, we replace each
	$\omega_i(x_1,\dots,x_n)$ with $\omega_i(y_1,\dots,y_q)$ by simply
	restricting the total order to  $y_1,\dots,y_q$. In addition,
	we replace each $\varphi_i(z_{i,0},\dots,z_{i,n})$ 
	with the PA formula
	$$
	\exists {z_{i,0},\dots,z_{i,n}}\Big[\varphi_i(z_{i,0},\dots,z_{i,n})
	\land\xi(z_0,\dots,z_q,z_{i,0},\dots,z_{i,n})\Big]
	$$
	where $\xi(z_0,\dots,z_q,z_{i,0},\dots,z_{i,n})$ states the
	relationships between the lengths $z_0,\dots,z_q$ and
	$z_{i,0},\dots,z_{i,n}$, 
	stating that each $z_i$ is the sum of some
	of variables from $z_{i,0},\dots,z_{i,n}$. For example, if
	$\omega_i(x_1,\dots,x_n)$ is $x_1\leq x_2\leq x_3$ and 
	the operation is $\pi_{x_1,x_3}\delta$, then 
	$\xi$ will be
	\[z_0=z_{i,0} \land z_1=z_{i,1}+z_{i,2} \land z_2=z_{1,3}\,.\] 
	
	For $\sel^=_{x_1,x_2,x_3,x_4})\delta$, 
	we replace each
	$\varphi_i(z_{i,0},\dots,z_{i,n})$ with the conjunction
	$\varphi_i(z_{i,0},\dots,z_{i,n})\land\xi(z_{i,0},\dots,z_{i,n})$
	where $\xi(z_{i,0},\dots,z_{i,n})$ states the equality on the sum of
	corresponding segments expressed by the selection condition 
	$x_2 -x_1 = x_4 - x_3$.
	
	We are left with natural join ($\join$) and difference
	($\setminus$). For that, we will show how to express both
	$\delta\land\delta'$ and 
	$\neg\delta$
	in the form of the lemma.
	
	For $\delta\land\delta'$, we transform the conjunction of each pair of
	disjuncts (one from $\delta$ and one from $\delta'$) separately.
	To represent the conjunction of 
	$\omega_i(x_1,\dots,x_n)\land
	\varphi_i(z_{i,0},\dots,z_{i,n})$
	and 
	$\omega'_j(x'_1,\dots,x'_\ell)\land
	\varphi'_j(z'_{j,0},\dots,z'_{j,\ell})$
	we take the disjunction over all total orders
	$$\omega(x_1,\dots,x_n,x'_1,\dots,x'_\ell)$$
	obtained by interpolating the two total orders (hence, preserving each
	order separately). For each such interpolated order,
	we represent the conjunction
	$$ \varphi_i(z_{i,0},\dots,z_{i,n})\land
	\varphi'_j(z'_{j,0},\dots,z'_{j,\ell})$$ by replacing each segment
	variable $z$ with the corresponding sum of segments from the interpolated order
	$\omega(x_1,\dots,x_n,x'_1,\dots,x'_\ell)$.
	
	Finally, for 
	$\neg\delta$,
	we take the disjunction over all total
	orders $\omega(x_1,\dots,x_n)$ 
	of:
	$$\omega(x_1,\dots,x_n)\land \bigwedge_{i=1}^k \psi_i(z_0,\dots,z_n)$$
	where each $\psi_i(z_0,\dots,z_n)$ is the disjunction of the following
	two:
	\begin{itemize}
		\item The total order $\omega(x_1,\dots,x_n)$ is \e{incompatible} with
		the order $\omega_i(x_1,\dots,x_n)$, which means that a segment that
		should be empty (e.g., since $x_1\leq x_3$ in $\omega$ and $x_3\leq
		x_1$ in $\omega_i$) has length larger than zero. Hence,
		$\omega(x_1,\dots,x_n)$ is a disjunction of such statements, each
		handling one segment.
		\item The total order $\omega(x_1,\dots,x_n)$ is \e{compatible} with
		the order $\omega_i(x_1,\dots,x_n)$, which means that some segments
		should be empty and $\neg\varphi_i(z_{i,0},\dots,z_{i,n})$.
	\end{itemize}
	This completes the proof.
\end{proof}
By applying Lemma~\ref{lemma:balder} to a Boolean $\delta$ we get the following.
\begin{lemma}\label{lemma:balder-boolean}
	If $\delta$ is a Boolean positional expression as in
	Lemma~\ref{lemma:gen-core-positional}, then the language recognized
	by $\rep{\delta}$ is definable in PA.
\end{lemma}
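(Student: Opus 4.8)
The plan is to specialize Lemma~\ref{lemma:balder} to the Boolean case and simply read off the desired PA formula. First I would note that a Boolean positional expression $\delta$ as in Lemma~\ref{lemma:gen-core-positional} has $\vars(\rep{\delta})=\emptyset$: every span variable introduced by a component $\angs{x\set{}\alpha y\set{}}$ (or $[x\set{}.^*]$, or $[.^*x\set{}]$) has been removed by projections. So in the notation of Lemma~\ref{lemma:balder} we are in the degenerate situation $n=0$. The ``total order'' component $\omega(x_1,\dots,x_n)$ then ranges over orderings of zero positions, hence is vacuous, and there is exactly one segment, namely all of $\strs$; thus the single segment-length variable $z_0$ equals $|\strs|$.

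Consequently, Lemma~\ref{lemma:balder} gives that $\delta$ is equivalent to a disjunction $\bigvee_{i=1}^{k}\varphi_i(z_0)$ of PA formulas in the one variable $z_0$. Hence for every $\strs\in\set{\a}^*$ we have $\strs\in\rep{\delta}$ if and only if $\bigvee_{i=1}^{k}\varphi_i(|\strs|)$ holds in $\nat$. Setting $\psi(x)\df\bigvee_{i=1}^{k}\varphi_i(x)$, which is itself a unary PA formula, this says $\nat(\rep{\delta})=\set{m\in\nat\mid\psi(m)}$. By the identification of languages over $\set{\a}^*$ with subsets of $\nat$ fixed earlier in the section, the language recognized by $\rep{\delta}$ is therefore definable in PA.

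The only point requiring a little care --- and the one place I expect any friction --- is making the ``$n=0$'' reduction fully rigorous, since the statement of Lemma~\ref{lemma:balder} carries free variables $x_1,\dots,x_n$ through its inductive proof. One has to check that in this degenerate case the unique remaining segment is the whole string (not, say, an empty prefix or suffix), which is immediate from the base case $\angs{x\set{}\alpha y\set{}}$ --- where projecting away both $x$ and $y$ leaves exactly the constraint that the single segment's length is realized by some word of $\a^*$ satisfying $\alpha$ --- together with the bookkeeping performed by the $\xi$ formulas when projections eliminate variables. No new ideas beyond Lemma~\ref{lemma:balder} are needed, and the lemma then feeds directly into Theorem~\ref{thm:pow2} via the known non-PA-definability of the powers of two.
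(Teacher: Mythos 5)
Your proposal is correct and matches the paper's own (one-line) argument: the paper likewise obtains this lemma simply by applying Lemma~\ref{lemma:balder} to a Boolean $\delta$, which is exactly your $n=0$ specialization where the single segment length $z_0$ is $|\strs|$ and the disjunction of the PA formulas $\varphi_i(z_0)$ defines $\nat(L)$. The extra care you take with the degenerate case is reasonable but not treated as an issue in the paper.
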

Finally, combining Lemma~\ref{lemma:gen-core-positional} with
Lemma~\ref{lemma:balder-boolean} we conclude ``only if'' direction of
Theorem~\ref{thm:pa-gen-core}, as required.

\subsection{The ``If'' Direction}
Let $\varphi(x)$ be a unary PA formla. For the ``if'' direction we
need to show the existence of a Boolean generalized core spanner
$\delta$ that recognizes a language $L\subseteq\set{\a}^*$ such that
$\nat(L)$ is the set of natural numbers defined by $\varphi(x)$; that
is, for all strings $\strs\in\a^*$ it is the case that
$\rep{\delta}(\strs)=\true$ if and only if $\varphi(|\strs|)$.  This
direction of Theorem~\ref{thm:pa-gen-core} is simpler, due to a key
result by Presburger~\cite{Presburger29} who proved that PA admits
\e{quantifier elimination} (cf.~\cite{Enderton:2001} for a modern
exposition). 
We make use of
the following theorem.
\begin{theorem}\label{thm:balder-reverse}
	\quad Let $\varphi(x_1,...,x_k)$ be a PA formula. There is a formula
	$\gamma\in\rgxc\alg{\cup,\pi,\join,\sel^=,\setminus}$ with
	$\vars(\gamma)=\set{w_1,\dots,w_k}$ such that for all
	$\strs\in\a^*$, the following are equivalent for all records
	$r:\vars(\gamma)\ra\allspans(\str s)$:
	\begin{enumerate}
		\item $r\in\rep{\gamma}(\strs)$;
		\item $\varphi(|\strs_{r(w_1)}|,\dots,|\strs_{r(w_k)}|)$.
	\end{enumerate}
\end{theorem}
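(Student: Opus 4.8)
The plan is to combine classical structure theory of Presburger Arithmetic with the fact that, over the unary alphabet $\a$, span-length equality \emph{is} string equality. Recall that a relation $R\subseteq\nat^k$ is definable in PA if and only if it is \e{semilinear}, i.e., a finite union $R=\bigcup_{t=1}^{N}L_t$ of \e{linear sets} $L_t=\set{\,v^t_0+\sum_{j=1}^{r_t}\lambda_j v^t_j \mid \lambda_1,\dots,\lambda_{r_t}\in\nat\,}$ with $v^t_0,\dots,v^t_{r_t}\in\nat^k$; this follows from Presburger's quantifier-elimination theorem~\cite{Presburger29} (equivalently, from the Ginsburg--Spanier characterization~\cite{ginsburg1966semigroups}). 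Without loss of generality every period vector $v^t_j$ with $j\ge 1$ is nonzero. So it suffices, for each linear set $L_t$, to build a formula $\gamma_t\in\rgxc\alg{\cup,\pi,\join,\sel^=}$ with $\vars(\gamma_t)=\set{w_1,\dots,w_k}$ such that $r\in\rep{\gamma_t}(\strs)$ iff $(|\strs_{r(w_1)}|,\dots,|\strs_{r(w_k)}|)\in L_t$: the desired $\gamma$ is then $\bigcup_t\gamma_t$, the pieces being union-compatible as they share the variable set $\set{w_1,\dots,w_k}$ (and if $N=0$ we take the empty spanner over $\set{w_1,\dots,w_k}$). Note in particular that the construction stays inside the core-spanner fragment; the difference operator allowed by the statement is not needed.

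The key reduction is a boundedness observation. A tuple of span lengths in $\strs$ has every entry at most $n$, where $n$ denotes $|\strs|$. If $(\ell_1,\dots,\ell_k)\in L_t$ is such a tuple, witnessed by multipliers $\lambda_1,\dots,\lambda_{r_t}$, then for each $j$ we may pick a coordinate $i$ with $(v^t_j)_i\ge 1$ (as $v^t_j\neq 0$), whence $\ell_i\ge\lambda_j(v^t_j)_i\ge\lambda_j$, so $\lambda_j\le n$. Hence, for tuples with all entries at most $n$, membership in $L_t$ is witnessed by some $\lambda\in\set{0,\dots,n}^{r_t}$, and each $\lambda_j$ can be encoded as the length of a span of $\strs$. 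Concretely, $\gamma_t$ introduces auxiliary span variables $c_1,\dots,c_{r_t}$, each constrained only by $\angs{c_j\set{.^*}}$ (so $c_j$ ranges over all spans of $\strs$), and for every coordinate $i\in\set{1,\dots,k}$ enforces the single equation $|w_i|=(v^t_0)_i+\sum_{j=1}^{r_t}(v^t_j)_i\,|c_j|$. Since this equation isolates the free variable $w_i$ on its left-hand side, it can be realized \e{without} ever materializing a span longer than $\strs$: use one functional regex formula $\angs{w_i\set{\cdots}}$ that splits $w_i$ into $(v^t_0)_i$ consecutive length-one captured subspans followed, for each $j$, by $(v^t_j)_i$ consecutive captured subspans $f^j_{i,1},\dots,f^j_{i,(v^t_j)_i}$, and then apply $\sel^=$ to equate each $f^j_{i,\ell}$ with $c_j$ (over the unary alphabet, $\sel^=_{x,y}$ asserts exactly $|\strs_{r(x)}|=|\strs_{r(y)}|$, and the length-one pieces supply the constant $(v^t_0)_i$). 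Each such regex formula is functional (every capture is a plain, star-free occurrence) and is well defined also in the degenerate cases $n=0$, $r_t=0$, or $(v^t_0)_i=0$, where it reduces to $\angs{w_i\set{\epsilon}}$. Finally $\gamma_t$ is the projection onto $\set{w_1,\dots,w_k}$ of the natural join of the formulas $\angs{c_j\set{.^*}}$, the per-coordinate regex formulas above, and the selections $\sel^=_{f^j_{i,\ell},\,c_j}$.

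Correctness is then routine given the observation: a record $r$ over $\set{w_1,\dots,w_k}$ extends to a match of the join defining $\gamma_t$ iff there are spans $c_1,\dots,c_{r_t}$ of $\strs$ and a prescribed splitting of each $w_i$, i.e., iff $|\strs_{r(w_i)}|=(v^t_0)_i+\sum_j(v^t_j)_i|c_j|$ for all $i$ with each $|c_j|\le n$; by the boundedness observation this holds iff $(|\strs_{r(w_1)}|,\dots,|\strs_{r(w_k)}|)\in L_t$. Taking the union over $t$ over the semilinear representation of the relation defined by $\varphi$ yields the required $\gamma$, establishing the equivalence of (1) and (2).

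I expect the main obstacle to be exactly the issue handled by the boundedness observation: a PA formula can force the output span lengths to satisfy a linear relation whose common value exceeds $|\strs|$ (for instance $|\strs_{r(w_1)}|=|\strs_{r(w_2)}|$ with both spans equal to all of $\strs$), so one cannot naively realize a span of the ``total'' length. What rescues the construction is the combination of (a) the semilinear normal form, (b) the fact that nonzero period vectors force the multipliers to be at most $n$, and (c) the observation that each coordinate equation isolates a single variable $w_i$, which can be split into subspans of itself; together these keep every intermediate span inside $\strs$. A secondary, purely bureaucratic point is ensuring that the many auxiliary variables ($c_j$, the length-one pieces, the $f^j_{i,\ell}$) are chosen disjointly across coordinates so that the joins and selections compose as intended, and that union-compatibility is maintained by keeping every $\gamma_t$ on the common variable set $\set{w_1,\dots,w_k}$.
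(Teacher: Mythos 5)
Your proof is correct, but it takes a genuinely different route from the paper. The paper works syntactically: it invokes Presburger's quantifier elimination to rewrite $\varphi$ over atomic formulas $x=0$, $x=1$, $x=y$, $x=y+z$ combined by Boolean connectives and \emph{bounded} existential quantification, and then proceeds by structural induction, translating conjunction into $\join$, negation into $\setminus$ against the ``full'' relation $\angs{w_1\set{.^*}}\join\dots\join\angs{w_k\set{.^*}}$, and bounded quantification into a projection guarded by a string-equality selection. You instead work semantically: you pass to the Ginsburg--Spanier semilinear normal form of the PA-definable relation, handle each linear set by one positive construction (auxiliary spans $c_j$ encoding the multipliers, which your boundedness observation correctly caps at $|\strs|$ because the period vectors are nonzero; per-coordinate splitting of $w_i$ into unit pieces and pieces equated to $c_j$ via $\sel^=$, which over the unary alphabet is exactly length equality), and close under union. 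The two approaches buy different things: the paper's induction is compositional and needs nothing beyond quantifier elimination itself, whereas your argument absorbs negation into the arithmetic normal form and therefore never uses the difference operator, yielding the slightly stronger conclusion that $\gamma$ can be taken in $\rgxc\alg{\cup,\pi,\join,\sel^=}$ (a core spanner); the price is reliance on the semilinearity theorem and a more global, single-shot encoding, including the edge cases you already note (empty union, zero entries of $v^t_0$, and keeping the $c_j$ shared across coordinates so that one multiplier vector witnesses all $k$ length equations simultaneously). Since the theorem is only applied to strings in $\a^*$, your use of string equality as length equality is sound, and your shared-$c_j$ join indeed forces a single witness $\lambda$, so the equivalence of (1) and (2) goes through.
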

\begin{proof}
	Presburger~\cite{Presburger29} proved that every formula in PA is
	equivalent to a quantifier-free formula built up from the following
	symbols.
	\begin{itemize}
		\item The constants $0$ and $1$;
		\item The $+$ function; 
		\item The binary predicate $<$;
		\item The unary \e{divisibility} predicate $\equiv_k$, for all $k\in
		\mathbb{N}$, where ${\equiv_k}(x)$ is interpreted as ``$x$ is
		divisible by $k$.''
	\end{itemize}
	
	We first eliminate the use of complex terms at the cost of
	reintroducing quantifiers, but only of a particular, bounded form: by
	``bounded existential quantification'' we mean existential
	quantification of the form $\exists y [y<x \land \ldots]$, or written
	as $\exists y<x(\ldots)$ for short, where $x$ and $y$ are distinct
	variables. It follows from Pressburger~\cite{Presburger29} that every
	PA formula can be equivalently written as a formula built up from
	atomic formulas of the form $x=0$, $x=1$, $x=y$, and $x=y+z$, using
	the Boolean connectives and bounded existential quantification. In
	particular, $x<y$ can be expressed as $\exists z<y [x=z]$ and
	$\equiv_k(x)$ can be expressed as
	\begin{align*}
	x=0 &\,\, \bigvee\,\,
	\exists y_1<x\,\exists y_2<x \, \ldots \exists y_{k-1}<x
	\left[\left( \bigwedge_{i=2,\dots,k-1} (y_i=y_{i-1} + y_1)\right)\land x=y_{k-1}+y_1
	\right]\,.
	\end{align*}
	This formula says that there is $y_1$ such that $x = y_1 k$.
	
	So, assuming that $\varphi$ has the above structure, we continue the
	proof by induction. For the basis we have the following:
	\begin{itemize}
		\item For $x_1=0$ we use $\angs{w_1\set{\epsilon}}$.
		\item For $x_1=1$ we use $\angs{w_1\set{.}}$.
		\item For $x_1=x_2$ we use
		$\sel^=_{w_1,w_2} (\angs{w_1\set{.^*}}\join\angs{w_2\set{.^*}})$.
		\item For $x_1=x_2+x_3$ we use
		$$
		\pi_{w_1,w_2,w_3}
		\sel^=_{w_2,w_2'}\sel^=_{w_3,w_3'}
		\angs{w_1\set{w_2'\set{.^*}w_3'\set{.^*}}}\join\angs{w_2\set{.^*}}\join\angs{w_3\set{.^*}}
		$$
	\end{itemize}
	For the inductive step, we need to show closure under conjunction,
	negation, and bounded existential quantification. Let
	$\varphi(x_1,\dots,x_k)$ and $\varphi'(x'_1,\dots,x'_\ell)$ be two PA
	formulas, and $\gamma(w_1,\dots,w_k)$ and
	$\gamma'(w'_1,\dots,w'_\ell)$ the corresponding expressions in
	$\rgxc\alg{\cup,\pi,\join,\sel^=,\setminus}$. To express
	$\varphi\land\varphi'$ we simply use $\gamma\join\gamma'$.
	To express $\neg\gamma$ we use 
	$\left(\angs{w_1\set{.^*}}\join\dots\join\angs{w_k\set{.^*}}\right)\setminus\gamma$.
	Finally, for the formula $\exists x_1<x_2\,\varphi(x_1,\dots,x_k)$ we use
	$\pi_{w_2,\dots,w_{k}}
	\sel^=_{w_1,w_1'}
	\angs{w_2\set{.^+\,w_1'\set{.^*}}}\join\gamma$.
	This completes the proof.
\end{proof}
From Theorem~\ref{thm:balder-reverse} we conclude the ``if'' direction
of Theorem~\ref{thm:pa-gen-core}, as follows. Let $\varphi(x)$ be a
unary PA formla. Let $\gamma(w)$ be the corresponding formula of
Theorem~\ref{thm:balder-reverse}. We define a Boolean $\delta$ in
$\rgxc\alg{\cup,\pi,\join,\sel^=,\setminus}$ as follows.
$$\delta \mathrel{{:}{=}} \pi_{\emptyset}\left([w\set{.^*}]\join \gamma(w)\right)$$
From Theorem~\ref{thm:balder-reverse} we conclude that $\delta(\strs)$
is true if and only if $\varphi(|\strs|)$, as required.

\section{Proof of Theorem~\ref{thm:monadic}}

\begin{reptheorem}{\ref{thm:monadic}}
	\monadicRecRelation
\end{reptheorem}
\begin{proof}
	We first show the direction $1\ra 2$.  Suppose that $S$ is definable
	by the regex-monadic program $P$.  We can assume, without loss of
	generality, that every regex formula $\gamma(x)$ in $P$ appears only
	in a rule of the form $R_\gamma(x) \leftarrow \gamma(x)$ where $x$
	is the variable of $\gamma$.  The result of running the $\rgxlang$
	program $P$ is then the same as the result of running the ordinary
	Datalog program $P^\prime$ over an ordinary relational database
	where the
	relations in the EDB
	are the relations $R_{\gamma}$, which are populated
	by the above rules.  Since the 
	relations in the EDB
	of $P^\prime$ are unary, it
	follows from Levy et al.~\cite{DBLP:conf/pods/LevyMSS93} that
	$P^\prime$ is equivalent to a nonrecursive Datalog program
	$P^{\prime \prime}$.  In turn, the nonrecursive $P''$ is equivalent
	to a union of conjunctive queries, and hence, we conclude that $S$
	is equivalent to a $\rgxlang$ program where all the rules have
	the form
	$$
	\Rout(x_1,\ldots,x_k) \leftarrow \beta_1(y_{1}), \ldots, \beta_m(y_m)
	$$
	where $\set{x_i,\dots,x_k}\subseteq \set{y_1,\dots,y_m}$ and each
	$\beta_i(y_i)$ is a unary regex formula.
	
	Since the conjunction of regex formulas is a regex
	formula~\cite{DBLP:journals/jacm/FaginKRV15}, we can group together
	regex formulas that have the same variable, and therefore we can
	assume that the $y_i$ are unique (that is, $y_i\neq y_j$ whenever
	$i\neq j$).  If there is a variable $y_{i}$ that is not in
	$\set{x_i,\dots,x_k}$, then we remove $y_i$ from the regex formula
	$\beta_i$. This removal does not affect the semantics of the rule,
	since $y_i$ occurs only once.  We can then compile all of the Boolean
	regex formulas that result from the removal of the $y_i$ into one
	Boolean regex formula, and take it as our $\gamma()$.  If there are no
	such $y_i$ (i.e., all body variables occur in the head), then we can
	define $\gamma()$ vacuously as $[ .^* ]$. Finally, to construct the
	formula as in the theorem, we take as $\gamma_i(x_i)$ the atom
	$\beta_j(y_j)$ where $y_j=x_i$. 
	This may require duplicating an atom
	(with no semantic impact) if a head variable occurs more than once,
	that is, $x_i=x_j$ for some $i\neq j$.  
	For example, the rule 
	$\Rout(x,x) \leftarrow \beta(x),\gamma()$
	becomes
	$\Rout(x,x) \leftarrow \beta(x),\beta(x),\gamma()$,
	where redundancy is added to match the form of the theorem.  
	
	The direction $2\ra 1$ is straightforward, since the form of Part~2
	is ``almost'' regex-monadic. Indeed, while the regex formula $\gamma()$ is
	of arity zero and not one, we can simply add a dummy variable to it,
	say $x_0$. For example, if $\gamma()$ is the regex formula $[\alpha]$
	for a regular expression $\alpha$, then we can replace $\gamma()$
	with $\gamma_0(x_0)=[x_0\set{}\cdot\alpha]$ or $\gamma_0(x_0)=[x_0\set{\alpha}]$.
\end{proof}

\section{Proof of Corollary~\ref{cor:regexMonadicLessReg}}

To prove Corollary~\ref{cor:regexMonadicLessReg}, we use the following
lemma.

\begin{lemma}\label{lemma:equiv-K}
	Let $P$ be a regex-monadic program. There is a constant natural
	number $K$ such that for all input strings $\strs$ there is an
	equivalence relation on spans, with at most $K$ equivalence classes,
	such that following holds. Every output record in $\rep{P}(\strs)$
	remains an output record whenever a span is replaced with 
	a span in the same equivalence class.
\end{lemma}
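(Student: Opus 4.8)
The plan is to pass to the normal form provided by Theorem~\ref{thm:monadic} and then read off the desired equivalence relation directly from the finitely many unary regex formulas occurring in the program. First I would invoke Theorem~\ref{thm:monadic} to replace $P$ by an equivalent $\rgxlang$ program all of whose rules have the shape $\Rout(x_1,\ldots,x_k)\dla\gamma_1(x_1),\ldots,\gamma_k(x_k),\gamma()$, with each $\gamma_i$ a unary regex formula and $\gamma$ a Boolean regex formula. The essential feature of this form is that no rule body contains an IDB atom, so the program is a recursion-free union of conjunctive-query-like rules: for a string $\strs$, the relation $\rep{P}(\strs)$ equals the union, over those rules whose Boolean part $\gamma()$ is satisfied by $\strs$, of all records $r$ such that $r(x_i)\in\rep{\gamma_i}(\strs)$ for every $i$ (a head variable that occurs more than once is simply constrained by the conjunction of the unary formulas attached to it).

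Next I would define the equivalence relation. Let $\gamma^1,\ldots,\gamma^t$ enumerate the unary regex formulas that occur as some $\gamma_i$ in the rules of $P$, and set $K\df 2^t$. For a span $s\in\allspans(\strs)$ define its \e{type} to be $\tau(s)\df\set{\ell\mid s\in\rep{\gamma^\ell}(\strs)}\subseteq\set{1,\ldots,t}$, and declare $s\sim s'$ iff $\tau(s)=\tau(s')$. This is an equivalence relation on the spans of $\strs$ with at most $K$ classes, and $K$ depends only on $P$, not on $\strs$.

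Finally I would verify the stability property. Take $r\in\rep{P}(\strs)$, witnessed by a rule $\rho$ of the above form, and let $r'$ agree with $r$ except that $r'(x)=s'$ for some variable $x$ and some span $s'\sim r(x)$. The Boolean formula $\gamma()$ of $\rho$ mentions no variable, so it is still satisfied by $\strs$; and for each head variable $y$ of $\rho$ the rule imposes exactly the requirement that $r'(y)\in\rep{\gamma^\ell}(\strs)$ for those indices $\ell$ whose $\gamma^\ell$ is attached to $y$ in $\rho$. For $y\neq x$ this holds because $r'(y)=r(y)$; for $y=x$ it holds because each such $\ell$ lies in $\tau(r(x))$, hence in $\tau(s')$ since $s'\sim r(x)$, so $s'\in\rep{\gamma^\ell}(\strs)$. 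Thus $r'$ satisfies the body of $\rho$ and therefore $r'\in\rep{P}(\strs)$; iterating over the coordinates yields the statement for arbitrary simultaneous replacements of spans by $\sim$-equivalent ones.

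I expect the only delicate point to be the bookkeeping around repeated head variables: the normal form of Theorem~\ref{thm:monadic} may attach several of the $\gamma^\ell$ to one variable (the duplicated atoms that appear when a head variable occurs more than once), so the type of a span must record membership in \e{all} of $\gamma^1,\ldots,\gamma^t$ rather than in a single one, and the replacement has to be carried out consistently for that variable (which is automatic, since a record is a function). Once the type is defined this way and one observes that the Boolean conjunct $\gamma()$ is assignment-independent, the argument goes through with no further subtlety; in particular recursion plays no role, because the normal form is recursion-free.
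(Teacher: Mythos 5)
Your proposal is correct and follows essentially the same route as the paper: both pass to the normal form of Theorem~\ref{thm:monadic} and take the equivalence relation identifying spans produced by the same set of unary regex formulas occurring in the rules, with $K$ bounded by the number of subsets of those formulas. Your write-up merely makes explicit the verification (the variable-free Boolean conjunct and the per-variable unary constraints) that the paper leaves implicit.
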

\begin{proof}
	Assume that $P$ has the form of the second part of
	Theorem~\ref{thm:monadic}. We take as our equivalence 
	relation
	the relation
	$x\equiv y$
	stating that $x$ and $y$ are produced by the exact same set of regex
	formulas $\gamma_i(x_i)$ of the rules in $P$. The number of
	equivalence classes is then bounded by the number of sets of atoms
	in $P$.
\end{proof}

Corollary~\ref{cor:regexMonadicLessReg} follows from
Theorem~\ref{thm:monadic} and Lemma~\ref{lemma:equiv-K}, as we show
next. 

\begin{repcorollary}{\ref{cor:regexMonadicLessReg}}
	\regexMonadicLessReg
\end{repcorollary}
\begin{proof}
	Since every program in the form of the second part of
	Theorem~\ref{thm:monadic} is the union of joins of regex formulas,
	we get from known results~\cite{DBLP:journals/jacm/FaginKRV15} that
	every regex-monadic program defines a regular spanner.  To show that
	the expressive power is strictly smaller, we will show that
	containment of spans
	cannot be expressed by a regex-monadic program.
	Formally, let $S$ be the
	spanner $\repspnr{\gamma}$ where $\gamma$ is $\langle x \{ .^* y \{
	.^* \} .^* \} \rangle $, that is, $\gamma$ extracts all pairs $x$
	and $y$ of spans such $x$ contains $y$.  We will
	show that $S$ is not equivalent to any monadic $\rgxlang$ program.

	Assume, by way of contradiction, that $S$ is equivalent to the
	monadic $\rgxlang$ program $P$, and let $K$ be the number in
	Lemma~\ref{lemma:equiv-K}. We assume that $K$ is large enough so
	that the number of spans in a string of length $K$ is larger than
	$K$. We can make this assumption, since the number of spans of a
	string of length $n$ is $\Theta(n^2)$.  Take $\strs$ to be a string
	of length $K$.  Then $\strs$ has more than $K$ spans, so some
	equivalence class has at least two distinct spans $s$ and $t$ of
	$\strs$.  Note that for every two distinct spans $s$ and $t$ there
	exists a span $u$ such that either \e{(a)} the span $u$ contains $s$
	but does not contain $t$, \e{or (b)} the span $u$ contains $t$ but
	does not contain $s$. From Lemma~\ref{lemma:equiv-K} it follows that
	if $R(u,s)$ holds if and only if $R(u,t)$ also holds, hence a
	contradiction.
\end{proof}

\def\pagenref{\ref{thm:ptime-sp-decision}}
\section{Proof of Theorem~\pagenref}

In this section, we prove Theorem~\ref{thm:ptime-sp-decision}. We
first restate the theorem.
\begin{reptheorem}{\ref{thm:ptime-sp-decision}}
	\splogPtime
\end{reptheorem}

\newcommand{\inputinst}{D} 

Throughout this section, we fix a span-free signature $\E$ as our
input signature, and a query $Q$ over $\E$.  We will prove that if $Q$
is computable in polynomial time, then it can be phrased as an $\sp$
program with stratified negation.
The other direction, that every $\sp$
program with stratified negation
can be
executed in polynomial time, is straightforward, similarly to ordinary
Datalog.  As described in Section~\ref{sec:exten}, our proof comprises
two steps, which we now construct.

\subsubsection*{First Step}
We first extend $\E$ with additional relation symbols. We implicitly
assume that each added relation symbol does not already belong to
$\E$.

\partitlelight{\underline{$\mathcal{R}^{\mathsf{type}}$:}}\\
The mixed signature $\mathcal{R}^{\mathsf{type}}$ consists of the
unary string relation $\Str$ and the unary span relation $\Spn$.

\partitlelight{\underline{$\mathcal{R}^{\Sigma}$:}}\\
The mixed signature $\mathcal{R}^{\Sigma}$ consists of the relation
symbols $R_{\sigma}$ for all $\sigma \in \Sigma$, where
$\arity{R_{\sigma}}=2$, the first attribute is a string attribute
(i.e., $1\in [R_{\sigma}]\stri$) and the second attribute is a span
attribute (i.e., $2\in [R_{\sigma}]\spni$).

\partitlelight{\underline{$\mathcal{R}^{\mathsf{ord}}$:}}\\
The mixed signature $\mathcal{R}^{\mathsf{ord}}$ that consists of the
following relation symbols:
\begin{itemize}
	\item
	$\first$ is a
	string relation with arity $1$;
	\item
	$\succrel\stri$ is a string relation with arity $2$;
	\item
	$\succrel\mix$ is a mixed relation with arity $2$ and $1\in[\succrel\mix]\spni$ and $2\in[\succrel\mix]\stri$;
	\item
	$\succrel\spni$ is a span relation with arity $2$;
	\item
	$\last$ is a
	span relation with arity $1$.
\end{itemize}

We denote by $\sigordplus$  
the signature $\E \cup
\mathcal{R}^{\mathsf{type}} \cup \mathcal{R}^{\Sigma} \cup \sigord$.
A mixed instance $E$ over $\sigordplus$ is said to \e{encode} an
instance $D$ over $\E$ if all of the following conditions hold.

\begin{enumerate}
	\item $R^{E} = R^D$ for  all $R\in \E$.
	\item The unary string relation $\rel{Str}^{E}$ consists of all
	strings in $\adom^+(D)$.
	\item The unary span relation $\rel{Spn}^{E}$
	consists of all of the spans in $\adom^+(D)$.
	\item Each $R_{\sigma}^{E}$ consists of the tuple $(\ol{x},y)$ where
	$\ol{x}$ is a string that occurs in $D$, and $y$ is a span of
	$\ol{x}$ of length one with $\ol{x}_y = \sigma$.
	\item The relations of $E$ that instantiate the signature 
	$\mathcal{R}^{\mathsf{ord}}$ interpret this signature so that
	the union $\succrel\stri^E \cup \succrel\mix^E \cup \succrel\spni^E$
	is a successor relation of a linear order over $\adom^+(D)$, wherein
	all strings precede all spans, and $\first$ and $\last$ determine the
	first and last elements in this linear order, respectively.
\end{enumerate}

Note the following in the last item above.  Since the strings precede
the spans in the linear order, the relation symbol $\first$ is a unary
string relation and $\last$ is a unary span relation.
The relation $\succrel\mix^E$ contains exactly one tuple $(\bar{x}, y)$, where $\bar{x}$ is the last string and $y$ is the first span.

We denote by $\spencode{\inputinst}$ the mixed instance over
$\sigordplus$ that encodes $D$.  A \e{mixed encoding} of an instance
$D$ over a span-free signature $\E$ is a mixed instance over
$\sigordplus$ that is isomorphic to $\spencode{D}$.  We define the
\e{untyped encoding} of a mixed encoding $D^{\prime\prime}$ to be the
instance obtained from $D^{\prime\prime}$ by viewing it as an instance
over the signature $\untyped{\sigordplus}$, where
$\untyped{\sigordplus}$ is an ordinary signature obtained from
$\sigordplus$ by \e{(a)} ignoring the types, \e{and (b)} 
relating to
the relation symbols $\succrel\spni$, $\succrel\stri$ and
$\succrel\mix$ uniformly as the binary successor relation symbol $\succrel$.

Note that a mixed encoding has a unique untyped encoding, and vice
versa. In particular, for every untyped encoding $D^\prime$ there
exists a unique mixed encoding $D^{\prime\prime}$ such that $D^\prime$
is the untyped encoding of $D^{\prime\prime}$. This is true, since we
can distinguish between spans and strings via the relations
$\Str^{D^\prime}$ and $\Spn^{D^\prime}$.

Similarly to Lemma~\ref{lem:DEncodesS}, we have the following. 
\begin{lemma}\label{lem:spanExtendedEncodingRestoreI}
	Let $D^\prime$ be an instance over $\untyped{\sigordplus}$.
	The following hold:
	\begin{enumerate}
		\item 
		Whether $D^\prime$ is an untyped encoding can be determined in polynomial time.
		\item 
		If $D^\prime$ is an untyped encoding, then there is a unique instance $D$ over $\E$
		and isomorphism $\iota$ such that
		$\iota(\spencode{D}) = D^\prime$; moreover, both $D$ and
		$\iota$ are computable in polynomial time.
	\end{enumerate}
\end{lemma}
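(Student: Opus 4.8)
The plan is to mirror the proof of Lemma~\ref{lem:DEncodesS}: given an instance $D'$ over $\untyped{\sigordplus}$, I would attempt to reconstruct a candidate pair $(D,\iota)$ in polynomial time and then \emph{verify}, tuple by tuple, that $\iota(\spencode{D})=D'$; the instance $D'$ is an untyped encoding precisely when the reconstruction and the verification both succeed, and this settles Part~1 and the existence part of Part~2 at once. The first moves are a batch of cheap structural checks: that $\Str^{D'}$ and $\Spn^{D'}$ partition $\adom(D')$; that every tuple of every $R^{D'}$ with $R\in\E$ uses only elements of $\Str^{D'}$, and that each $R_\sigma^{D'}\subseteq\Str^{D'}\times\Spn^{D'}$; and that the single binary relation $\succrel^{D'}$ obtained by identifying the untyped versions of $\succrel\stri$, $\succrel\mix$, $\succrel\spni$ is the successor relation of a strict linear order $<$ on $\adom(D')$ --- tested by computing its transitive closure and checking that it is a total order with a unique minimum $\first^{D'}$ (a string element) and unique maximum $\last^{D'}$ (a span element) in which every element of $\Str^{D'}$ precedes every element of $\Spn^{D'}$. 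All of this is polynomial, and failure of any check means $D'$ is not an untyped encoding.

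To reconstruct $D$, call a string element \emph{top-level} if it occurs in some $R^{D'}$ with $R\in\E$. For each top-level $a$, put $B_a=\{\,b\mid (a,b)\in R_\sigma^{D'}\text{ for some }\sigma\,\}$, list $B_a$ in increasing $<$-order as $b_1<\dots<b_m$, read off for each $b_j$ the (necessarily unique) symbol $\sigma_j$ with $(a,b_j)\in R_{\sigma_j}^{D'}$, and set $\iota^{-1}(a):=\sigma_1\cdots\sigma_m$; here I use, as in Comment~\ref{com:ordsp}, that on span elements $<$ is the lexicographic order, so the length-one spans of a fixed string appear left to right. Define $R^{D}:=\{(\iota^{-1}(a_1),\dots)\mid (a_1,\dots)\in R^{D'}\}$ for $R\in\E$; if $\iota^{-1}$ is not injective on top-level elements, reject. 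From $D$ I can then regenerate $\adom^+(D)$ --- all substrings of the (polynomially many, polynomially long) strings in $D$, together with all spans of those strings --- and equip it with the canonical order used in $\spencode{D}$. Because a linear order is rigid, there is at most one isomorphism $\iota\colon\spencode{D}\to D'$, and it must be the unique bijection sending the canonically ordered string part of $\adom^+(D)$ onto $\Str^{D'}$ ordered by $<$, the canonically ordered span part onto $\Spn^{D'}$ ordered by $<$, and agreeing with the $\E$-reduct; I take this as the candidate $\iota$ and finish with the verification $\iota(\spencode{D})=D'$.

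For correctness, if all checks pass then $(D,\iota)$ witnesses that $D'$ is an untyped encoding; conversely, if $D'=\kappa(\spencode{\tilde D})$, then every structural check passes, the recovered order is the $\kappa$-image of the canonical order, the symbols read off from the $R_\sigma^{D'}$ reproduce $\tilde D$ so that $D=\tilde D$, and rigidity of $<$ forces $\iota=\kappa$; hence $D$ and $\iota$ are unique. The step needing the most care --- and the one that makes this more than a transcription of Lemma~\ref{lem:DEncodesS} --- is recovering the string elements of $\adom^+(D)$ that do \emph{not} occur directly in $D$: the relations $R_\sigma^{D'}$ record characters only for top-level strings, so the proper substrings have to be obtained indirectly, which is why I reconstruct $D$ first, regenerate $\adom^+(D)$ from it, and only then pin down $\iota$ by rigidity. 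Getting the attendant bookkeeping right --- the degenerate empty instance, $\epsilon$-valued string elements with $B_a=\emptyset$, relation symbols of arity zero, and repeated substrings --- is the only remaining subtlety, and it is routine.
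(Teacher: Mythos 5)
Your proposal is correct and takes essentially the same route as the paper, whose proof simply observes that, unlike Lemma~\ref{lem:DEncodesS}, each $R_\sigma$ now carries an extra string attribute identifying which entry of $D$ it encodes and then calls the rest a ``straightforward adaptation'' of that lemma; your decode-then-verify procedure (reading each top-level string off its $R_\sigma$-tuples in the successor-induced lexicographic span order, regenerating $\adom^+(D)$, and pinning down $\iota$ by rigidity of the linear order) is exactly that adaptation, written out in full and with the correct uniqueness argument. The edge cases you defer (the empty instance, $\epsilon$-valued entries, non-unique or non-injective decodings) are indeed routine and consistent with the canonical order described in Comment~\ref{com:ordsp} and constructed in Lemma~\ref{lem:constructwithSP}.
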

\begin{proof}
	Note that an untyped encoding of an instance $D$ encodes each
	(string) entry of $D$ using relations over the mixed signature
	$\mathcal{R}^\Sigma$. Unlike
	Lemma~\ref{lem:DEncodesS},
	where we had a single
	string to encode, here we have a database of strings. Therefore, the
	mixed relations $R_{\sigma}$ holds an additional string attribute
	that indicates which entry in $D$ is encoded by the tuple.  The rest
	of this proof is a straightforward adaptation of that of
	Lemma~\ref{lem:DEncodesS}.
\end{proof}
\newcommand{\spencodextord}[1]{\mathsf{Enc}^+_{\mathsf{ut}}({#1})}

Let ${Q}$ be a query over a span-free signature $\E$.  We define the
query $Q^+$ over $\untyped{\sigordplus}$ on an input $D_1$ in the
following way: If $D_1$ is an untyped encoding of $\inputinst$ over
$\E$ then ${Q^+}(D_1) = \iota({Q}(\inputinst))$ where $\iota$ is as in
Lemma~\ref{lem:spanExtendedEncodingRestoreI}; otherwise ${Q^+}(D_1)$
is empty.  To apply 
Theorem~\ref{thm:papagen} 
on $Q^+$, we make the
following observation based on the definition of an untyped encoding
and on 
Lemma~\ref{lem:spanExtendedEncodingRestoreI}.

\begin{observation}\label{lem:Q1isPoly}
	The  query
	${Q^+}$ respects isomorphism, and moreover, is computable in polynomial time whenever
	${Q}$ is computable in polynomial time.
\end{observation}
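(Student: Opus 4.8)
The plan is to follow the same two-part pattern as in Observation~\ref{lem:QSisPoly} from Section~\ref{sec:PTIME}, now carried out over the signature $\untyped{\sigordplus}$ and using Lemma~\ref{lem:spanExtendedEncodingRestoreI} in place of Lemma~\ref{lem:DEncodesS}. That is, I would first argue that $Q^+$ respects isomorphism, and then that $Q^+$ is computable in polynomial time whenever $Q$ is. Both facts follow almost formally from the definition of $Q^+$ once the behaviour of untyped encodings under isomorphism is pinned down.

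For the isomorphism claim, let $D_1$ and $D_2$ be isomorphic instances over $\untyped{\sigordplus}$ and let $\varphi:\adom(D_1)\to\adom(D_2)$ be an isomorphism between them. Being an untyped encoding is a property of the relational structure alone, so $D_1$ is an untyped encoding if and only if $D_2$ is; if neither is, then $Q^+(D_1)=Q^+(D_2)=\emptyset$ and there is nothing to prove. Otherwise, let $D$ over $\E$ and $\iota_1$ be the unique instance and isomorphism with $\iota_1(\spencode{D})=D_1$ given by Lemma~\ref{lem:spanExtendedEncodingRestoreI}. Then $\varphi\circ\iota_1$ is an isomorphism from $\spencode{D}$ onto $D_2$, so by the uniqueness part of Lemma~\ref{lem:spanExtendedEncodingRestoreI} applied to $D_2$ the underlying instance for $D_2$ is again $D$ and its isomorphism is $\iota_2=\varphi\circ\iota_1$. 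Consequently $\varphi(Q^+(D_1))=\varphi(\iota_1(Q(D)))=\iota_2(Q(D))=Q^+(D_2)$, where applying $\iota_1,\iota_2$ to $Q(D)$ is meaningful because $\adom(Q(D))\subseteq\adom^+(D)$, which is precisely the active domain of $\spencode{D}$.

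For the polynomial-time claim, given an input $D_1$ over $\untyped{\sigordplus}$ I would first test whether $D_1$ is an untyped encoding, which is polynomial by Lemma~\ref{lem:spanExtendedEncodingRestoreI}(1); if the test fails, output the empty relation. Otherwise compute $D$ and $\iota$ in polynomial time by Lemma~\ref{lem:spanExtendedEncodingRestoreI}(2), evaluate $Q$ on $D$, and return $\iota(Q(D))$. Since the first condition in the definition of an encoding copies the $\E$-relations of $D$ verbatim into $D_1$, we have $|D|\le|D_1|$, so running $Q$ on $D$ takes time polynomial in $|D_1|$, and relabelling via $\iota$ is again polynomial; hence $Q^+$ is computable in polynomial time. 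The only point requiring any care is the appeal to uniqueness of the pair $(D,\iota)$ in the isomorphism argument — this is where the total order $\succrel$ on the full active domain $\adom^+(D)$ is essential, since it makes an untyped encoding rigid — but that uniqueness is already packaged inside Lemma~\ref{lem:spanExtendedEncodingRestoreI}, so the observation is in effect a corollary of that lemma together with the definition of $Q^+$.
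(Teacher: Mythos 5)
Your argument is correct and follows essentially the same route as the paper, which states this observation as an immediate consequence of the definition of $Q^+$ together with Lemma~\ref{lem:spanExtendedEncodingRestoreI}; your write-up simply unfolds that reasoning, using the uniqueness of the pair $(D,\iota)$ for the isomorphism claim and parts (1)--(2) of the lemma for the polynomial-time claim, exactly as intended (and in parallel with Observation~\ref{lem:QSisPoly}).
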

Note also that the query $Q^+$ is defined over an ordered (standard)
signature due to the relations $\succrel$, $\first$ and $\last$.  Due
to this and to Observation~\ref{lem:Q1isPoly}, we can now apply
Theorem~\ref{thm:papagen} on $Q^+$ and obtain the following.
\begin{lemma}\label{lem:Q3P_to_DL}
	If $Q$ is computable in polynomial time, then there exists a
	$\semiposdatalog$ program $P_{Q}$ over $\untyped{\sigordplus} $ such
	that for every instance $D$ over $\E$ and every untyped encoding
	$D^\prime$ of $D$ it holds that $P_{Q}(D^\prime)$ is equal to
	$Q^+(D^\prime)$.
\end{lemma}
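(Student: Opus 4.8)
The plan is to obtain $P_Q$ by a direct application of Papadimitriou's theorem (Theorem~\ref{thm:papagen}) to the query $Q^+$. To set this up I would first check that $\untyped{\sigordplus}$ is an ordered signature in the sense required by Theorem~\ref{thm:papagen}: by construction it is the untyped version of $\E \cup \mathcal{R}^{\mathsf{type}} \cup \mathcal{R}^{\Sigma} \cup \sigord$ in which the three successor symbols $\succrel\stri$, $\succrel\mix$, $\succrel\spni$ have been collapsed into a single binary symbol $\succrel$, so it indeed contains the distinguished symbols $\succrel$, $\first$, and $\last$. I would then invoke Observation~\ref{lem:Q1isPoly}, which states precisely that $Q^+$ respects isomorphism and that it is computable in polynomial time whenever $Q$ is; under the hypothesis of the lemma both preconditions of Theorem~\ref{thm:papagen} are therefore met.

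With the hypotheses in place, the direction of Theorem~\ref{thm:papagen} asserting that isomorphism-invariant polynomial-time queries over an ordered signature are computable by $\semiposdatalog$ programs yields a $\semiposdatalog$ program $P_Q$ over $\untyped{\sigordplus}$ such that $P_Q(E) = Q^+(E)$ for every ordered instance $E$ over $\untyped{\sigordplus}$. The last step is to note that every untyped encoding $D'$ of an instance $D$ over $\E$ is in particular an ordered instance: since $D'$ is isomorphic to $\spencode{D}$, condition~(5) in the definition of ``encodes'' guarantees that the union $\succrel\stri \cup \succrel\mix \cup \succrel\spni$ — which is exactly the interpretation of $\succrel$ in the untyped view — is a successor relation of a linear order over $\adom(D')$, with $\first$ and $\last$ marking its endpoints. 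Hence $P_Q$ agrees with $Q^+$ on every such $D'$, which is what the lemma asks.

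There is no genuine obstacle here: the lemma is essentially a restatement of Theorem~\ref{thm:papagen} for the specific query $Q^+$ and signature $\untyped{\sigordplus}$, mirroring the passage from Observation~\ref{lem:QSisPoly} to Lemma~\ref{lem:PLfromQLext} in the string case. The only point demanding a little care is the bookkeeping that lines up the definitions — verifying that $\untyped{\sigordplus}$ carries the required ordered structure after collapsing the three successor symbols, and that the class of untyped encodings sits inside the class of ordered instances on which the program produced by Theorem~\ref{thm:papagen} is guaranteed to compute $Q^+$. Everything else is immediate from Observation~\ref{lem:Q1isPoly} and the definition of an untyped encoding.
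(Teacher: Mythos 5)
Your proposal is correct and follows essentially the same route as the paper: the authors likewise obtain this lemma by observing that $Q^+$ is a query over an ordered signature (thanks to $\succrel$, $\first$, and $\last$ in $\untyped{\sigordplus}$), invoking Observation~\ref{lem:Q1isPoly} for isomorphism-invariance and polynomial-time computability, and then applying Theorem~\ref{thm:papagen} directly, exactly mirroring the step from Observation~\ref{lem:QSisPoly} to Lemma~\ref{lem:PLfromQLext}. Your extra bookkeeping about untyped encodings being ordered instances is a sound (if slightly more cautious) way to justify the restriction to encodings in the lemma's statement.
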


This completes the first step, where we translate $Q$ into an ordinary
program $P_Q$ over an ordinary signature. In the next step, we
transform $P_Q$ into an $\sp$ program over $\E$.

\subsubsection*{Second Step}
Due to the syntactic resemblance between Datalog and $\sp$, one could
suggest to consider Datalog rules over an ordinary signature simply as
$\sp$ rules.  However, there is a difference between the semantics of
the languages since Datalog programs get standard input instances, as
opposed to $\sp$ programs that get mixed instances and distinguish
between types.  We prove the following.
\begin{lemma}~\label{lem:Datalog_to_SP} If $Q$ is computable in
	polynomial time, then there exists an $\sp$ program $P^\prime_Q$
	over $\sigordplus$ such that for every instance $D$ over $\E$ and every mixed
	encoding $D^{\prime\prime}$ of $D$ it holds that
	$P^\prime_Q(D^{\prime\prime})$ equals $ Q^+(D^{\prime})$ where
	$D^{\prime}$ is the untyped encoding of $D^{\prime\prime}$.
\end{lemma}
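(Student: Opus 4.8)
\partitle{Proof plan}
This is the ``second step'' of the proof of Theorem~\ref{thm:ptime-sp-decision}, and it starts from the $\semiposdatalog$ program $P_Q$ over $\untyped{\sigordplus}$ produced by Lemma~\ref{lem:Q3P_to_DL}. The plan is to mechanically \emph{type} $P_Q$: replace each of its untyped relation symbols and each of its rules by a family of typed versions, obtaining an $\sp$ program $P^\prime_Q$ over $\sigordplus$, and then show that the typed program derives exactly the typed copies of the facts derived by $P_Q$ on encodings. For each IDB relation symbol $R$ of $P_Q$ of arity $k$, I would introduce a fresh IDB symbol $R^{\theta}$ for every type assignment $\theta\colon\{1,\dots,k\}\to\{\mathsf{str},\mathsf{spn}\}$; semantically, $R$ will correspond to the disjoint union of its typed versions. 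The EDB symbols already carry fixed attribute types in $\sigordplus$ (recalling that the untyped $\succrel$ splits into $\succrel\stri$, $\succrel\mix$, $\succrel\spni$), so $\sigordplus$ itself is the EDB signature of $P^\prime_Q$, and a mixed encoding $D^{\prime\prime}$ of $D$ supplies all these relations. Finally, I would place all the $R^{\theta}$ in a single stratum above the EDB stratum; since $P_Q$ is semipositive, its negated atoms are all over EDB symbols, so this gives a legal stratification for $\sp$ with stratified negation.

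For each rule $\rho$ of $P_Q$ and each assignment $\vartheta$ of types to the variables of $\rho$, I would form a candidate typed rule: write each variable $x$ as a string variable $\ol x$ if $\vartheta(x)=\mathsf{str}$ and as a span variable otherwise; turn each IDB atom $R(x_1,\dots,x_k)$ into $R^{\theta}(\cdots)$ with $\theta$ read off from $\vartheta$; keep each positive EDB atom provided $\vartheta$ is consistent with its fixed attribute typing---in particular a $\succrel$-atom becomes the matching one of $\succrel\stri$, $\succrel\mix$, $\succrel\spni$, while the type-incoherent span-to-string case, or any positive EDB atom whose types clash with $\vartheta$, makes us discard the whole candidate rule (with those types $\rho$ could never fire, since strings and spans are disjoint); and keep each negated EDB atom if $\vartheta$ is consistent with its typing and simply \emph{delete} it otherwise, since a type-incorrect EDB atom is vacuously false and its negation vacuously true. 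All string terms produced this way have the bare form $\ol x$. Safety of each surviving typed rule follows from safety of $\rho$: every variable of $\rho$---in particular every head variable---occurs in a positive body atom, this atom is retained, and after typing the variable sits there in an attribute position of its own type, hence as the string term $\ol x$ when it is string-typed. I would let $P^\prime_Q$ be the program of all surviving typed rules, with output relation the typed version $\Rout^{\theta_{R_Q}}$ of $\Rout$ corresponding to the attribute typing $\theta_{R_Q}$ of the relation symbol $R_Q$ associated with $Q$.

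The heart of the argument is the semantic correspondence: for every instance $D$ over $\E$, every mixed encoding $D^{\prime\prime}$ of $D$ with untyped encoding $D^\prime$, every IDB symbol $R$ of $P_Q$, and every tuple $\tup a=(a_1,\dots,a_k)$ over $\adom(D^\prime)=\adom^{+}(D)$, we have $\tup a\in R^{P_Q(D^\prime)}$ if and only if $\tup a\in(R^{\theta_{\tup a}})^{P^\prime_Q(D^{\prime\prime})}$, where $\theta_{\tup a}(i)=\mathsf{str}$ exactly when $a_i\in\Str^{D^\prime}$. This uses that in an untyped encoding the unary relations $\Str$ and $\Spn$ partition the active domain, so every element---and hence every derived fact---has a well-defined type; it is then proved by the straightforward induction along the stratified fixpoint, matching each ground firing of a rule $\rho$ of $P_Q$ under a substitution $h$ with the firing of the typed version of $\rho$ under the type assignment $x\mapsto(\text{the type of }h(x))$, which is among the surviving typed rules precisely because the EDB atoms that $h$ satisfies in $D^\prime$ force those types, and conversely. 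Finally, $Q^{+}(D^\prime)=\iota(Q(D))$ consists only of tuples of type structure $\theta_{R_Q}$, since $Q(D)$ is a mixed relation over $R_Q$ and the isomorphism $\iota$ respects the string/span distinction; hence $(\Rout^{\theta_{R_Q}})^{P^\prime_Q(D^{\prime\prime})}$ is the set of $\theta_{R_Q}$-typed tuples of $\Rout^{P_Q(D^\prime)}=Q^{+}(D^\prime)$, which is all of $Q^{+}(D^\prime)$, as required.

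I expect the main obstacle to be not any single conceptual step but the bookkeeping in this last induction---in particular, getting the treatment of negated EDB atoms right and confirming that the safety and stratification requirements of $\sp$ really do survive the typing transformation. A related point needing care is that $Q^+$ and all IDB relations range over the \emph{union} of strings and spans; the construction is sound only because these two sorts are disjoint and recoverable (via $\Str$ and $\Spn$), which is exactly what makes the per-tuple type $\theta_{\tup a}$ well defined.
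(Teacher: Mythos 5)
Your proposal follows essentially the same route as the paper: replace each rule of the $\semiposdatalog$ program $P_Q$ over $\untyped{\sigordplus}$ by all of its typed (mixed) versions, with $\succrel$ split into $\succrel\stri$, $\succrel\mix$ and $\succrel\spni$, discard the type-inconsistent versions, and observe that on a mixed encoding versus its untyped encoding the two programs derive the same facts. Your explicit handling of negated EDB atoms (deleting a type-clashing negated atom as vacuously true instead of discarding the whole rule) and your spelled-out fixpoint induction are careful refinements of details that the paper's brief proof glosses over, but the underlying construction is the same.
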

\begin{proof}
	Due to Lemma~\ref{lem:Q3P_to_DL} it suffices to show how to
	translate $P_Q$ to a $\sp$ program.  A \e{mixed version} $\rho^+$ of
	a $\semiposdatalog$ rule $\rho $ is obtained by replacing each
	relation atom $R(x_1,\ldots,x_k)$ that appears in $\rho$ by all of
	the atoms obtained from it by assigning its attributes all of the
	possible types.  In the special case where $R$ is the successor
	relation $\succrel$ we replace it with each of $\succrel\spni$,
	$\succrel\stri$ and $\succrel\mix$.  Let $\tilde{P}$ be the set of
	rules that is obtained from $P$ by replacing each rule $\rho$ in $P$
	with all of its mixed versions.  A rule is called \e{type
		inconsistent} if it is inconsistent with respect to the type
	restrictions imposed by $\sigordplus$.  We omit from $\tilde{P}$
	rules $\rho$ that are \e{type inconsistent} and obtain $P^\prime$.
	Note that since we have omitted the type inconsistent rules
	$P^\prime$ is a $\sp$ program.  Since untyped encodings when viewed
	as mixed instances are consistent we obtain the desired result.
\end{proof}

Note that Lemma~\ref{lem:Datalog_to_SP} compares between a mixed
instance $P^\prime_Q(D^{\prime\prime})$ and a standard one
$Q^+(D^{\prime})$. However, this is well defined since the comparison
is done at the instance level.
\begin{example}
	This example is aimed to demonstrate the construction in the
	previous proof (of Lemma~\ref{lem:Datalog_to_SP}).  Let us consider
	the $\semiposdatalog$ program $P$ that contains the rule $R(x,y)\dla
	S(x), T(y,z)$.  The relation atom $R(x,y)$ has four different mixed
	versions, such as the following.
	\begin{itemize}
		\item $R_{\mathsf{str},\mathsf{str}}(\ol{x},\ol{y})$ wherein both
		attributes are string attributes.
		\item $R_{\mathsf{spn},\mathsf{str}}(x,\ol{y})$ wherein the first
		attribute is a span attribute and the second is a string attribute.
	\end{itemize}
	The rule $R(x,y)\dla S(x), T(y,z)$ has $2^5$ different mixed versions,
	one for each ``type assignment'' for its variables, such as the
	following.
	\begin{itemize}
		\item $R_{\mathsf{str},\mathsf{str}}(\ol{x},\ol{y})\dla
		S_{\mathsf{str}}(\ol{x}),
		T_{\mathsf{str},\mathsf{str}}(\ol{y},\ol{z})$
		\item $R_{\mathsf{spn},\mathsf{str}}(x,\ol{y})\dla
		S_{\mathsf{str}}(\ol{x}),
		T_{\mathsf{str},\mathsf{str}}(\ol{y},\ol{z})$
	\end{itemize}
	
	Note that these two are rules in the resulting $\sp$ program
	$\tilde{P}$.  However, the second rule is type inconsistent due to the
	variable $x$ that is regarded as a span variable in the head atom and
	as a string variable in the atom $S_{\mathsf{str}}(\ol{x})$, and thus
	is not a rule in $P^\prime$.
\end{example}

Note that the input of the $\sp$ program from
Lemma~\ref{lem:Datalog_to_SP} is a mixed encoding. We next show that
$\sp$ with stratified negation is expressive enough to construct the
mixed encoding of an instance over a span-free signature.
\begin{lemma}\label{lem:constructwithSP}
	There exists an $\sp$ program $P=\angs{\E,\I,\Phi,\Rout}$ such that
	$\E\cup\I$ contains $\sigordplus$ and the following holds. For all
	instances $D$ over $\E$ and relation symbols $R\in \sigordplus$ we
	have that $ R^{(\spencode{D})} = R^ {(P(D))}$.
\end{lemma}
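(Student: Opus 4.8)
The plan is to exhibit an $\sp$ program with stratified negation --- with $\E$ as its EDB signature and with (copies of) $\mathcal{R}^{\mathsf{type}}\cup\mathcal{R}^{\Sigma}\cup\sigord$ in its IDB signature --- that, on input $D$, materialises every relation symbol of $\sigordplus$ exactly as in the definition of $\spencode{D}$. The program splits into an easy ``data'' part and a harder ``order'' part. For the data part: each $R\in\E$ is copied by an identity rule; for every $\sigma\in\Sigma$ and every string attribute of every $R\in\E$ we add $R_\sigma(\bar x,y)\dla R(\dots,\bar x,\dots),\,\angs{\bar x}[\angs{y\set{\sigma}}]$, which collects exactly the length-one spans of strings of $D$ carrying the letter $\sigma$; every span of a string of $D$ is gathered by $\Spn(y)\dla R(\dots,\bar x,\dots),\,\angs{\bar x}[\angs{y\set{.^*}}]$, using that the regex formula $\angs{y\set{.^*}}=[.^*\,y\set{.^*}\,.^*]$ is functional and extracts all spans of its input; and every string in $\adom^+(D)$ is gathered by $\Str(\bar x)\dla R(\dots,\bar x,\dots)$ together with $\Str(\bar x_y)\dla\Str(\bar x),\,\angs{\bar x}[\angs{y\set{.^*}}]$. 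All these rules are positive and respect the safety and typing restrictions of $\sp$, so they live in the first stratum above $\E$.

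The core of the argument is building the successor relation of a linear order over $\adom^+(D)$ that places all strings before all spans. The key observation for the span part is that a span $\mspan i j$ lies in $\adom^+(D)$ if and only if it is a span of a \emph{longest} string of $D$. So I would first define $\rel{Shorter}(\bar s,\bar t)$ by recursion that peels off one leading symbol at a time --- using anchored regex atoms of the form $\angs{\bar s}[\,z\set{.}\,\tilde s\set{.^*}\,]$ to name the tail substring $\bar s_{\tilde s}$ --- and then obtain $\rel{Longest}$ by negating $\rel{NotLongest}(\bar s)\dla R(\dots,\bar s,\dots),R(\dots,\bar t,\dots),\rel{Shorter}(\bar s,\bar t)$. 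Since the order on the spans of a string depends only on that string's length, all longest strings induce the same order on spans, so the span-successor relation $\succrel\spni$, the first span $\mspan11$, and the last span $\mspan{N+1}{N+1}$ ($N$ being the common length) can be produced by the regex rules displayed in Section~\ref{sec:PTIME}, now taken relative to $\rel{Longest}$. For the string part I would fix a single longest string by ordering longest strings lexicographically --- a self-contained symbol-peeling recursion that needs no global order --- and then build a total order $\rel{StrLt}$ on all of $\Str$ (shorter strings first, ties broken lexicographically, both by recursion). From $\rel{StrLt}$ the string-successor relation $\succrel\stri$ is extracted by the standard ``no element strictly in between'' rule, which uses negation; $\first$ is the empty string (the $\rel{StrLt}$-minimum), a single rule links the $\rel{StrLt}$-maximum substring to $\mspan11$ to yield $\succrel\mix$, and $\last$ is $\mspan{N+1}{N+1}$. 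A stratification check --- the recursions are positive, and each negation (for $\rel{Longest}$, for $\succrel\stri$, for the extremal elements, for fixing the canonical longest string) points to a strictly lower stratum --- confirms the result is a legal $\sp$ program with stratified negation.

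The main obstacle is exactly this order construction. One must check that every regex formula used is functional; that the recursive length and lexicographic comparisons can be phrased with $\sp$'s restricted string terms $\bar x_y$ and still satisfy the safety requirement that each string variable occur in a positive relational atom; that a single canonical longest string is pinned down without a circular dependency on the global order it is meant to help define; and that the three pieces fit together so that $\succrel\stri\cup\succrel\mix\cup\succrel\spni$ is genuinely the successor relation of one linear order over all of $\adom^+(D)$ --- in particular, that $\Spn$ ends up containing precisely the spans of strings of $D$ and nothing spurious. Once the rules are in place, verifying that the materialised relations coincide with $\spencode{D}$ on the nose is a routine induction on the recursive definitions.
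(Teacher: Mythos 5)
Your construction is correct, and its skeleton matches the paper's: populate $\Str$, $\Spn$, and the $R_\sigma$ with regex atoms over the EDB strings, define a strict total order with strings before spans, and extract $\succrel\stri$, $\succrel\mix$, $\succrel\spni$, $\first$, $\last$ by the standard ``no element in between'' trick with stratified negation. Where you genuinely diverge is in how the order is built. The paper orders strings by the plain lexicographic order, defined \emph{directly} (not by a length recursion): a regex atom locates the common prefix via $\rel{StrEq}$ and a second regex atom inspects the first differing symbols, with one rule per pair $\sigma_j,\sigma_i$; your shortlex order with a recursive $\rel{Shorter}$ comparison also works, but note it yields a different linear order, which is harmless only because the definition of an encoding admits any linear order in which strings precede spans (so ``$\spencode{D}$'' should be read as the canonical encoding your program fixes). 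For the span part your longest-string detour is sound but unnecessary: the paper simply asserts the span comparison $\succ\spni(y,y')$ relative to \emph{every} string $\ol{x}$ in $\Str$; since every span in $\adom^+(D)$ is a span of a longest string of $D$, the union over all context strings already totally orders $\Spn$, so no $\rel{Longest}$ relation (and no canonical-longest-string tie-breaking, which indeed buys you nothing since all longest strings induce the same span order) is needed --- only $\last\stri$, obtained by negation from the string order, is used to place the $\succrel\mix$ edge to $\mspan{1}{1}$. In short, your route trades the paper's direct regex-based lexicographic comparison for extra recursive length machinery and an extremal-string computation; both are legal $\sp$ with stratified negation, the paper's being somewhat leaner and yours arguably more transparent about why all spans of $\adom^+(D)$ are captured.
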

\begin{proof}
	We construct the program $P$ as follows.
	
	For every relation symbol $R$ of $\E$ and
	$i=1,\dots,\arity{R}$ we use the following rules:
	\begin{align*}
	\Str(\ol{x}^i_y) \dla &
	R(\ol{x}^1, \ldots,\ol{x}^i,\ldots , \ol{x}^{\arity{R}}), 
	\angs{\ol{x}^i}\angs{ y\{.^* \}} \\
	\Spn(y) \dla &
	R(\ol{x}^1, \ldots,\ol{x}^i,\ldots , \ol{x}^{\arity{R}}), 
	\angs{\ol{x}^i}\angs{ y\{.^* \}} 
	\end{align*}

	For all $\sigma \in \Sigma$ we use the following rule:
	\[
	R_{\sigma}(\ol{x},y)\leftarrow \Str(\ol{x}), \angs{\ol{x}}\angs{ y\{ \sigma \} }
	\]

	In order to define the successor relation $\succrel\stri$, we
	define a strict total order $\succr{\stri}$, which is the
	usual lexicographic order. We denote our alphabet $\Sigma$ by
	$\{\sigma_1,\ldots,\sigma_n\}$.  In the usual lexicographic
	order, a string $\strs$ follows $\strs^\prime$ in this order
	if either $(1)$ $\strs^\prime$ is a strict prefix of $\strs$
	or $(2)$ the first symbol in which they differ is $\sigma_i$
	in $\strs$ and $\sigma_j$ in $\strs^\prime$ where $j<i$.  This
	can be expressed with the following $\sp$ rules using the
	binary relation $\rel{StrEq}$ that holds pairs of equivalent
	strings and can be expressed in $\sp$ (see the comment in
	Example~\ref{ex:splog} in the body of the paper).  For case
	$(1)$ we have:
	$$
	\succr{\stri}(\ol{x},\ol{x}^\prime)
	\dla \Str(\ol{x}), \Str(\ol{x}^\prime),
	\angs{\ol{x}} [ y\{.^* \}.^+ ] , 
	\angs{\ol{x}^\prime} [ y\{.^* \} ],
	\rel{StrEq}(\ol{x}_y, \ol{x}^\prime_y)
	$$
	And for case $(2)$:
	\begin{align*}
	\succr{\stri}(\ol{x},\ol{x}^\prime) \dla& \Str(\ol{x}), \Str(\ol{x}^\prime),
	\angs{\ol{x}} [ y\{.^* \} \cdot \sigma_i \cdot .^* ] , 
	\angs{\ol{x}^\prime} [ y\{.^* \} \cdot \sigma_j \cdot .^* ],
	\StrEq(\ol{x}_y,{{\ol{x}}^\prime}_y) 
	\end{align*}
	This rule is repeated for every $1 \le j<i\le n$.
	Based on $\succr{\stri}$,
	we use stratified negation to define the successor relation $\succrel{\stri}$.
	To do that, we define the binary relation $\rel{NotSucc}\stri$ that holds tuples $(\ol{x}^\prime, \ol{x})$ where $\ol{x}^\prime$ is not the successor of $\ol{x}$ with respect to $\succr\stri$.
	\begin{align*}
	\rel{NotSucc}{\stri}(\ol{x}_1,\ol{x}_2) \dla 
	\succr\stri(\ol{x}_1, \ol{x}_3),
	\succr\stri(\ol{x}_3, \ol{x}_2)
	\end{align*}
	and then,
	\begin{align*}
	\succrel\stri(\ol{x}, \ol{x}^\prime) \dla 
	\succr\stri(\ol{x}, \ol{x}^\prime), \neg \rel{NotSucc}{\stri}(\ol{x}, \ol{x}^\prime)
	\end{align*}
	Note that the first string in the lexicographic order is alway $\epsilon$ (since for every $D$, its extended active domain contains $\epsilon$). Therefore we have:
	\begin{align*}
	\first(\ol{x}) \dla \Str(\ol{x}), \angs{\ol{x}}[x\{ \epsilon \}]
	\end{align*}
	To define the relaion $\succrel\mix$ we need to find  the last string in the extended active domain of the input instance. 
	For this purpose, we define the relation $\last\stri$ as follows:
	\begin{align*}
	\StrNotLast (\ol{x}) \dla & \Str(\ol{x}), \Str(\ol{x}^\prime),  \succr{\stri}(\ol{x}^\prime, \ol{x})\\
	\last\stri(\ol{x})\dla& \Str(\ol{x}), \neg \StrNotLast(\ol{x})
	\end{align*}
	We can now define the relation $\succrel\mix$. Note that in the sequel we define a strict total order $\succ\spni$ on the spans in the extended active domain which is the lexicographic order (see
	Comment~\ref{com:ordsp} 
	in the body of the paper). The first span according to the lexicographic order is always $\mspan{1}{1}$ regardless of the input instance. We therefore use the following rule according to which the successor of the last string in the extended active domain of the input is the first span.  
	\begin{align*}
	\succrel\mix(y,\ol{x}) \dla \last\stri(\ol{x}),\angs{\ol{x}}[y\{\epsilon \} .^*]
	\end{align*}
	Similarly to the definition of $\succrel\stri$, we define $\succrel\spni$ by defining a strict total order $\succ\spni$ on the spans in the extended active domain.
	Note that the span $y$ follows $y^\prime$ in this order if either $(1)$ $y$ begins after $y^\prime$ begins or $(2)$ they both start in the same position but $y$ ends strictly after $y^\prime$ ends.  
	For case $(1)$ we have: 
	\begin{align*}
	\succ\spni(y,y^\prime) \dla \Str(\ol{x}), 
	\angs{\ol{x}}[ z\{.^* \} y\{ .^* \} .^* ], 
	\angs{\ol{x}}[ z^\prime\{.^* \} y^\prime \{ .^* \} .^* ],
	\angs{\ol{x}}[ z\{ z^\prime \{.^*\} .^+ \} .^* ],
	\end{align*}
	and for $(2)$:
	\begin{align*}
	\succ\spni(y,y^\prime) \dla \Str(\ol{x}), 
	\angs{\ol{x}}[ .^* y\{ y^\prime \{.^* \}  .^+ \} .^* ]
	\end{align*}
	The relation $\succrel\spni$ is defined based on $\succ\spni$
	in a similar way we have defined $\succrel\stri$ based on
	$\succ\stri$. Moreover, the relation $\last$ is defined based
	on $\succ\spni$ in a similar way $\last\stri$ was defined
	based on $\succ\spni$.  Therefore we skip these definitions.
\end{proof}

We can now conclude the proof of Theorem~\ref{thm:ptime-sp-decision}.
Assume that $Q$ is computable in polynomial time.  Let $D^{\prime}$ be
the untyped encoding of $D$ where $\iota$ from
Lemma~\ref{lem:spanExtendedEncodingRestoreI} is the identity.  Let
$D^{\prime\prime}$ be the mixed encoding that corresponds with
$D^{\prime}$.  Let $P^\prime_Q$ be the $\sp$ program obtained from
Lemma~\ref{lem:Datalog_to_SP}.  It holds that
$P^\prime_Q(D^{\prime\prime}) = Q^+(D^\prime)$.  Due to the definition
of $Q^+$ and since $\iota$ is the identity we have that
$P^\prime_Q(D^{\prime\prime}) = Q(D)$.  Due to
Lemma~\ref{lem:constructwithSP} we can construct $D^{\prime\prime}$
from $D$ using a $\sp$ program $P^\prime$.  Combining $P^\prime_Q$
with $P^\prime$ into a single $\sp$ program completes the proof.

\end{document}